\documentclass[11pt]{article}
\usepackage{graphicx}

\usepackage{fullpage}
\usepackage[utf8]{inputenc}
\usepackage{amsmath,amsfonts,amsthm,amssymb}
\usepackage{setspace}
\usepackage{fancyhdr}
\usepackage{lastpage}
\usepackage{extramarks}
\usepackage{chngpage}
\usepackage{soul,color}
\usepackage{graphicx,float,wrapfig}
\usepackage{listings}
\usepackage{xcolor}      %
\usepackage{enumerate}
\usepackage{xspace}
\usepackage[T1]{fontenc}

\usepackage{xcolor}
\usepackage{nameref}
\definecolor{ForestGreen}{rgb}{0.1333,0.5451,0.1333}
\definecolor{DarkRed}{rgb}{0.65,0,0}
\definecolor{Red}{rgb}{1,0,0}
\usepackage[linktocpage=true,
pagebackref=true,colorlinks,
linkcolor=DarkRed,citecolor=ForestGreen,
bookmarks,bookmarksopen,bookmarksnumbered]
{hyperref}
\usepackage{cleveref}
\setcounter{tocdepth}{2}

\usepackage{algpseudocode}
\usepackage{algorithm}  
\usepackage{algorithmicx} 
\usepackage{verbatim}
\usepackage{todonotes}
\usepackage{appendix}

\makeatletter
\def\namedlabel#1#2{\begingroup
    #2%
    \def\@currentlabel{#2}%
    \phantomsection\label{#1}\endgroup
}
\makeatother
\usepackage{thm-restate}
\usepackage{thmtools}

\declaretheorem[numberwithin=section]{theorem}
\declaretheorem[numberlike=theorem]{lemma}

\declaretheorem[numberlike=theorem]{fact}
\declaretheorem[numberlike=theorem]{proposition}

\declaretheorem[numberlike=theorem]{corollary}

\declaretheorem[numberlike=theorem]{problem}

\declaretheorem[numberlike=theorem,style=definition]{definition}

\usepackage{thmtools}
\usepackage{etoolbox} %

\makeatletter
\AtBeginEnvironment{restatelemma}{%
  \let\c@restatelemma\c@lemma
}
\makeatother

\usepackage{mathrsfs}

\newcommand{\R}{\mathbb{R}}
\newcommand{\vol}{\mathrm{Vol}}
\newcommand{\val}{\mathrm{val}}
\newcommand{\leng}{\mathrm{leng}}
\newcommand{\conge}{\mathrm{cong}}
\newcommand{\supp}{\mathrm{supp}}

\newcommand{\poly}{\mathrm{poly}}

\Crefname{ALC@unique}{Line}{Lines}

\ifdefined\ShowComment

\def\thatchaphol#1{\marginpar{$\leftarrow$\fbox{T}}\footnote{$\Rightarrow$~{\sf\textcolor{purple}{#1 --Thatchaphol}}}}

\def\zhongtian#1{\marginpar{$\leftarrow$\fbox{Z}}\footnote{$\Rightarrow$~{\sf\textcolor{blue}{#1 --Zhongtian}}}}

\def\shangen#1{\marginpar{$\leftarrow$\fbox{S}}\footnote{$\Rightarrow$~{\sf\textcolor{orange!50!black}{#1 --shangen}}}}

\def\matija#1{\marginpar{$\leftarrow$\fbox{M}}\footnote{$\Rightarrow$~{\sf\textcolor{green!50!black}{#1 -- matija}}}}

\else
\def\thatchaphol#1{}
\def\zhongtian#1{}
\def\shangen#1{}
\def\matija#1{}

\fi

\title{Disjoint Paths in Expanders in Deterministic Almost-Linear Time \\via Hypergraph Perfect Matching}
\author{
Matija Buci\'{c}\thanks{Supported by NSF Grant DMS-2349013.}\\
\parbox[t]{5cm}{\centering 
University of Vienna\\ Princeton University}
\and
Zhongtian He\\
Princeton University
\and
Shang-En Huang\thanks{Supported by NSTC Grant No. 114-2222-E-002-004-MY3.}\\
National Taiwan University
\and
Thatchaphol Saranurak\thanks{Supported by NSF Grant CCF-2238138.}\\
University of Michigan
}
\date{}

\begin{document}

\begin{titlepage}
    \thispagestyle{empty}
    \maketitle
    \begin{abstract}
        \thispagestyle{empty}
        We design efficient deterministic algorithms for finding short edge-disjoint paths in expanders. Specifically, given an $n$-vertex $m$-edge expander $G$ of conductance $\phi$ and minimum degree $\delta$, and a set of pairs $\{(s_i,t_i)\}_i$ such that each vertex appears in at most $k$ pairs, our algorithm deterministically computes a set of edge-disjoint paths from $s_i$ to $t_i$, one for every $i$,
\begin{enumerate}
    \item each of length at most $18 \log (n)/\phi$ and in $mn^{1+o(1)}\min\{k, \phi^{-1}\}$ total time, assuming \\  $\phi^3\delta\ge (35\log n)^3 k$, or
    \item each of length at most $n^{o(1)}/\phi$ and in total $m^{1+o(1)}$ time, assuming   $\phi^3 \delta \ge n^{o(1)} k$.
\end{enumerate}
Before our work, deterministic polynomial-time algorithms were known only for expanders with constant conductance and were significantly slower.

To obtain our result, we give an almost-linear time algorithm for \emph{hypergraph perfect matching} under generalizations of Hall-type conditions~\cite{Haxell1995}, a powerful framework with applications in various settings, which until now has only admitted large polynomial-time algorithms~\cite{Annamalai18}. %

    \end{abstract}
\end{titlepage}

\setcounter{tocdepth}{3}
\setcounter{secnumdepth}{3}

\tableofcontents
\thispagestyle{empty}
\newpage
\addtocontents{toc}{\protect\thispagestyle{empty}} 
\setcounter{page}{1}

\section{Introduction}

We study the classical \emph{edge-disjoint paths problem}: given an undirected graph \( G = (V, E) \) with \( n \) vertices and \( m \) edges, and a collection of \( q \) source-destination pairs \( \{(s_i, t_i)\}_{i=1}^q \), the goal is to find \( q \) pairwise edge-disjoint paths such that the \( i \)-th path connects \( s_i \) to \( t_i \). This problem arises naturally in network routing, VLSI design, and the theory of multicommodity flows, where independent communication requests must be served without interference. 

In general undirected graphs, the problem is NP-complete~\cite{Karp}, but admits almost-linear time algorithms for constant $q$ \cite{RS95,korhonen2024minor}.  
For directed graphs, the problem is NP-complete~\cite{Fortune80} even when \( q = 2 \). This motivates the development of efficient algorithms for large $q$ on special classes of graphs, such as \emph{expanders}, i.e., graphs with large conductance. Recall that the conductance of a vertex set $S$ is $\Phi_G(S)=|\partial_G(S)|/\min\{\vol(S), \vol(\overline{S})\} \in [0,1]$ where $\partial_G(S):=E(S,V\setminus S)$ and \(\vol(S) := \sum_{v \in S} \deg(v)\), and the conductance of graph $G$ is $\Phi_G = \min_{\emptyset \neq S \subsetneq V} \Phi_G(S)$.  Intuitively, the larger the conductance is, the larger the cuts are required to be in order to separate a large volume of the graph.

\paragraph{Disjoint Paths on Strong Expanders.}
Prior works focused on regular expanders with strong assumptions.
In particular, the class of expander graphs previously considered required 
not only to have constant conductance 
but also to have strictly more-than-$0.5$ or even close-to-$1$ conductance for small subsets.
Peleg and Upfal~\cite{peleg1989local} showed that edge-disjoint paths exist for \(q = \Theta(n^\epsilon)\) demand pairs \((s_i, t_i)\), where the endpoints are pairwise disjoint and \(\epsilon > 0\) depends on the expansion of \(G\). Subsequent works~(e.g. \cite{BroderFriezeUpfal94,BroderFriezeUpfal99,LeightonRao99,LeightonLuRaoSrinivasan00}) improved the bound on \(q\), culminating in a result of Frieze~\cite{frieze2000disjoint}, who showed that disjoint paths exist and can be found in randomized polynomial time for \(q = \Theta(n/\log n)\), which is optimal. Alon and Capalbo~\cite{alon2007online} presented a deterministic polynomial-time algorithm that also applies in an \emph{online} setting, where the demand pairs arrive one by one and each path must be routed disjointly from those previously constructed. More recently, Draganić and Nenadov~\cite{draganic2023finding} extended this to the more challenging \emph{online with deletions} model and provided a refined runtime analysis, achieving a deterministic algorithm with running time \(O(m^3)\) per demand pair. %

Unfortunately, all the above algorithms fail on expanders with sub-constant conductance.
In the context of fast graph algorithms, however, many algorithms must work with expanders whose conductance is at most $1/\log(n)$ because this is the inherent limit of the popular expander decomposition framework \cite{alev2018graph}. This motivates algorithms that work with a weaker expansion guarantee.

\paragraph{Our Results.}
In this work, we design fast algorithms for the disjoint-paths problem under this weaker expansion regime and simultaneously significantly improve the running time compared to prior approaches.

\begin{theorem}
\label{thm: routing conductance}
    Let $G$ be an $n$-vertex graph with $m$ edges, conductance $\phi,$ and minimum degree $\delta$. Given a set of demand pairs $\{(s_i,t_i)\}_i$ such that each vertex appears in at most $k$ pairs, there exists a deterministic algorithm that computes a set of edge-disjoint paths from $s_i$ to $t_i,$ one for every $i$,
    \begin{enumerate}
        \item each of length at most $18\log(n)/\phi$ and in $mn^{1+o(1)}\min\{k, \phi^{-1}\}$ time, assuming \\
        $\phi^3\delta\ge (35\log n)^3 k$, or
        \item each of length at most $n^{o(1)}/\phi$ and in $m^{1+o(1)}$ time, assuming 
        $\phi^3 \delta \ge n^{o(1)} k$.
\end{enumerate}
\end{theorem}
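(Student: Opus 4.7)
The plan is to reduce the edge-disjoint paths problem to a perfect matching problem in an auxiliary hypergraph that satisfies Haxell's sufficient condition, and then invoke the new almost-linear time algorithm for hypergraph matching alluded to in the abstract. For each demand pair $(s_i,t_i)$, I build a ``group'' $\mathcal{H}_i$ whose elements are edge sets of candidate short paths from $s_i$ to $t_i$, viewed as hyperedges over the ground set $E(G)$. A Haxell-style system of disjoint representatives---one hyperedge per group---is then exactly a collection of edge-disjoint $s_i$--$t_i$ paths that solves the instance.

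Next, I would verify Haxell's condition on this hypergraph using the conductance of $G$. The condition requires, roughly, that for every sub-collection $I$ of groups and every bounded ``blocker'' edge set $X \subseteq E(G)$, some $\mathcal{H}_i$ with $i \in I$ still contains a candidate path avoiding $X$. The standard ball-growing argument shows that the radius-$\Theta(\log n/\phi)$ ball around any vertex contains a constant fraction of $\vol(G)$, so balls from $s_i$ and $t_i$ must meet off of any edge set $X$ that is small compared to $\phi^2 \delta \cdot |I| / \log^2 n$. The assumption $\phi^3 \delta \ge (35 \log n)^3 k$ is calibrated precisely so that this threshold dominates $k \cdot |I|$, which upper bounds the number of ``already-used'' edges across $I$, since each vertex (and hence each incident edge) appears in at most $k$ demands. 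This yields candidate paths of length at most $18 \log(n)/\phi$, with the constant absorbing the two balls plus the slack required by Haxell's recursion.

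Finally, I would run the hypergraph matching algorithm on this implicitly defined hypergraph. Since enumerating short paths is infeasible, Haxell's augmenting search is implemented through an on-the-fly bounded-depth BFS oracle from the relevant endpoints, which bounds the cost of each augmentation step at $O(n/\phi)$ edge visits. This yields the first regime's $mn^{1+o(1)} \min\{k, \phi^{-1}\}$ bound, with the minimum reflecting which of the two parameters---the number of competing demands per vertex $k$, or the Haxell recursion depth controlled by $\phi^{-1}$---dominates the count of augmentations. For the second regime, a single flat matching is too expensive, and I would instead work inside an expander hierarchy of $n^{o(1)}$ levels: route most demands at the deepest scale, and lift boundary-crossing demands through contracted coarser graphs. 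The hierarchy multiplies path length by $n^{o(1)}$ but drives total work down to $m^{1+o(1)}$. The main obstacle, and what I expect to be the heart of the paper, is the design of the almost-linear time matching subroutine itself: Annamalai's algorithm~\cite{Annamalai18} already finds such matchings combinatorially but in large polynomial time, so the task is to redesign the augmentation with dynamic data structures that preserve Haxell's structural invariants under each flip without incurring polynomial overhead per step.
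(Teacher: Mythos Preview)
Your high-level reduction to hypergraph matching under a Haxell-type condition is exactly right, and your Part~I sketch is close to the paper's: it implements a \emph{maximal half layer} oracle by repeated BFS (giving $O(mnk\Delta)$) or by a Dinitz-style blocking search (giving $O(mnr)=O(mn\log n/\phi)$), and the $\min\{k,\phi^{-1}\}$ comes from taking the better of these two implementations, not from ``which parameter dominates the count of augmentations.'' For the Haxell verification the paper does not use ball growing but invokes expander pruning (Theorem~1.3 of \cite{SW19}): after deleting any edge set $F$ with $|F|<\phi\delta|C|/16$, one can prune a low-volume set $P$ so that $G\setminus F$ restricted to $V\setminus P$ still has conductance $\phi/6$; since $|P|<|C|/2$, some demand pair survives in the pruned expander and is connected by a path of length $\le 18\log n/\phi$.

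There are two genuine gaps. First, your Part~II plan via an expander hierarchy is not what the paper does and would not obviously work: lifting demands across levels of a hierarchy does not give \emph{edge-disjoint} paths, since different levels share edges of $G$. The paper instead relaxes the half-layer primitive to an \emph{$(r',\alpha)$-approximate half layer} (bi-criteria: it may use edges of rank up to $r$ but need only be within an $\alpha$ factor of the best rank-$r'$ half layer) and implements it in almost-linear time via the low-step multicommodity flow algorithm of \cite{haeupler2024low}, followed by greedy rounding of the flow to disjoint paths. Second, you misidentify the core obstacle as ``dynamic data structures.'' The actual leverage is structural: under the $\varphi$-strong Haxell condition with $\varphi\ge d(n)r^2$, each new layer of the alternating forest grows by a factor $\Omega(d(n))$, so the forest has depth only $O(\log n/\log d(n))$; a signature-vector argument then bounds the number of main-loop iterations by $n^{1/\Omega(\sqrt{\log d(n)})}$. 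No dynamic data structures are needed---each iteration simply calls the (maximal or approximate) half-layer oracle once.
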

We note that, if randomness is allowed, the problem becomes significantly easier: we can split the expander by randomly partitioning the edges, and applying an approximate disjoint paths algorithm to each part yields a solution using known techniques.

An immediate application of our disjoint paths theorem is the first almost-linear time deterministic algorithm for splitting expanders into many sparser expanders.

\begin{corollary}
\label{cor: splitting expanders}    
Let $G$ be an $n$-vertex graph with $m$ edges, conductance $\phi,$ and minimum degree $\delta$. There exists a deterministic algorithm that partitions $G$ into $k$ disjoint subgraphs, i.e.\ $G_i = (V,E_i)$ and $\bigsqcup_{i=1}^k E_i\subseteq E$, such that
    \begin{enumerate}
        \item each $G_i$ has conductance  $\Omega(\phi/\log n)$, in $mn^{1+o(1)}\min\{k, \phi^{-1}\}$ time, assuming \\
        $\phi^3\delta\ge (73\log n)^3 k$, or
        \item each $G_i$ has conductance $\Omega(\phi/n^{o(1)}),$ in $m^{1+o(1)}$ time, assuming 
        $\phi^3 \delta \ge n^{o(1)} k$.
\end{enumerate}
\end{corollary}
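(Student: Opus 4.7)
The plan is to reduce Corollary~\ref{cor: splitting expanders} to Theorem~\ref{thm: routing conductance} by finding $k$ edge-disjoint embeddings of a common bounded-degree expander inside $G$; the edges used by the $i$-th embedding become the edge set of $G_i$. Concretely, I first fix an absolute constant $d$ and invoke an explicit expander construction to obtain a $d$-regular graph $H$ on vertex set $V$ with conductance $\Phi(H) = \Omega(1)$. I then form the demand multiset consisting, for each $i \in [k]$ and each edge $\{u,v\} \in E(H)$, of a pair $(u,v)$ tagged by $i$; every vertex of $G$ occurs in at most $dk$ pairs. The hypothesis $\phi^3\delta \ge (73\log n)^3 k$ of the corollary is tuned so that, for the chosen constant $d$, it implies the hypothesis $\phi^3\delta \ge (35\log n)^3 (dk)$ of Theorem~\ref{thm: routing conductance}(1), and analogously for case~(2). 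Invoking Theorem~\ref{thm: routing conductance} on the full multiset yields edge-disjoint paths $\{P^i_e\}_{i,e}$ of length at most $L$, where $L = 18\log n/\phi$ in case~(1) and $L = n^{o(1)}/\phi$ in case~(2), within the running time claimed in the corollary.

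I then set $E_i := \bigcup_{e \in E(H)} E(P^i_e)$ and $G_i := (V, E_i)$. Edge-disjointness of the paths across all demands automatically makes $\{E_i\}$ a pairwise disjoint family, and the corollary's formulation $\bigsqcup_{i} E_i \subseteq E$ allows leftover edges of $G$ to remain unassigned. It remains to verify $\Phi(G_i) = \Omega(\phi/\log n)$ in case~(1), and analogously in case~(2), via an embedding-to-conductance argument. Fix a cut $(S, \bar S)$, WLOG $|S| \le |\bar S|$. Since $H$ is $d$-regular with constant conductance, $|\partial_H(S)| = \Omega(d|S|)$, and since the paths routing $H$-edges are edge-disjoint inside $G_i$, each $H$-edge that crosses the cut forces a distinct $G_i$-edge to cross, yielding $|\partial_{G_i}(S)| \ge |\partial_H(S)| = \Omega(d|S|)$. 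For the denominator, any path with both $H$-endpoints in $\bar S$ that nevertheless touches $S$ must enter and exit $S$, consuming at least two distinct edges of $\partial_{G_i}(S)$; hence the number of paths touching $S$ is at most $d|S| + |\partial_{G_i}(S)|/2$, and each such path contributes at most $2(L+1)$ to $\vol_{G_i}(S)$. This yields $\vol_{G_i}(S) \le 2(L+1)\bigl(d|S| + |\partial_{G_i}(S)|/2\bigr)$, and a short case analysis depending on whether $|\partial_{G_i}(S)| \ge 2d|S|$ or not gives $\Phi_{G_i}(S) = \Omega(1/L) = \Omega(\phi/\log n)$.

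The main obstacle is precisely the volume upper bound in the last step: Theorem~\ref{thm: routing conductance} guarantees only edge-disjointness and short length, not any bound on vertex congestion, so one cannot directly conclude $\deg_{G_i}(v) = O(dL)$ per vertex. The accounting above circumvents this by charging ``non-trivial'' excursions of a path into $S$ against $|\partial_{G_i}(S)|$ itself, exploiting that each such excursion consumes two distinct boundary edges of $G_i$ and that edge-disjointness forbids two paths from sharing an edge. The remaining choices---of $d$ (any constant with $d^{1/3} \cdot 35 \le 73$, such as $d = 4$), the explicit construction of $H$, and the handling of very small or degenerate $|S|$---are routine.
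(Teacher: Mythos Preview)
Your proposal is correct and follows the same approach as the paper: embed $k$ copies of a bounded-degree constant-conductance graph via Theorem~\ref{thm: routing conductance} and then bound each $G_i$'s conductance by an embedding argument. The paper's bookkeeping for the conductance step is only cosmetically different---it tracks $\widehat{\vol}$, the number of through-paths, writes $|\partial_{G_i}(S)|\ge |\partial_{G_X}(S)|+\widehat{\vol}$ and $\vol_{G_i}(S)\le r(\vol_{G_X}(S)+\widehat{\vol})$, and applies $(a+c)/(b+c)\ge a/b$ for $a\le b$, in place of your case split on $|\partial_{G_i}(S)|$ versus $d|S|$; the paper also uses max degree $9$ (matching $(73/35)^3>9$), which is why your constraint on $d$ lands where it does.
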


The proof of \Cref{cor: splitting expanders} is deferred to \Cref{app: splitting expanders}. Before our result, Frieze and Molloy \cite{frieze1999splitting} showed a deterministic algorithm using the Lovász Local Lemma, which avoids incurring an extra polylogarithmic factor; however, it requires a large polynomial running time. 
If randomness is allowed, a simple random partition suffices as shown in Section~8 of~\cite{wulff2017fully}.

\paragraph{Reduction to Hypergraph Matching.}
Our proof of \Cref{thm: routing conductance} is based on a reduction to the hypergraph matching problem. Below, we will explain this problem, the reduction, and why the prior tools are still too slow for us. We start with the definitions.%

\begin{definition}[Bipartite Hypergraph Matching]\label{def:hyper matching}
    A \emph{bipartite hypergraph} $H = (A,B,E)$ is a hypergraph with two disjoint vertex sets $A$ and $B$ where $|e\cap A| = 1 $ and $|e \cap B| \ge 1$ for every edge $e\in E$. We say $H$ is \emph{$r$-bounded} if $|e\cap B| \le r$ for every $e \in E$.
    A \emph{matching} $M \subseteq E$ is a set of vertex-disjoint edges. We say $M$ is \emph{perfect} if it contains every vertex from $A$. 
\end{definition}

We now explain the reduction: given a graph $G = (V,E_G)$ and a set of demand pairs $\{s_i,t_i\}_{i=1}^q$, we construct a hypergraph $H = (A,B,E_H)$ with $|A| = q,$ so that each vertex in $A$ stands for one demand pair, $|B| = m$ so that each vertex in $B$ stands for an edge in $E_G$, each hyperedge contains one vertex in $A$ corresponding to a demand $i\in [q]$ and a set of vertices in $B$ corresponding to a set of edges in $E_G$, that form a path from $s_i$ to $t_i$. Thus, a perfect matching in $H$ corresponds to a set of disjoint paths connecting all demand pairs.

However, finding maximum or perfect matchings in hypergraphs is NP‑complete even for \linebreak 2‑bounded bipartite hypergraphs, via a straightforward reduction from \(\textsc{3-Dimensional Matching}\). But, under \emph{Haxell's condition} \cite{Haxell1995}, which is a natural generalization of Hall’s condition \cite{Hall1935}, the problem becomes tractable. For every subset \(S \subseteq A\), define
\[
E_S \;:=\; \{e \in E_H : e \cap S \neq \emptyset\}, 
\quad
\tau(E_S) \;:=\; \min\bigl\{|T| : T \subseteq B,\; \forall e \in E_S,\; e \cap T \neq \emptyset\bigr\}.
\]
That is, $\tau(E_S)$ is the minimum number of $B$-vertices required to hit all hyperedges incident to $S$. 
Haxell’s condition requires that \(\tau(E_S)\) be proportional to \(|S|\); more precisely, assuming every hyperedge size is bounded by \(r\),  we have for every \(S \subseteq A\)
\[
\tau(E_S) > (2r - 1)(|S| - 1). 
\]
Under this condition, Haxell~\cite{Haxell1995} proved the existence of a perfect matching. Later, Annamalai~\cite{Annamalai18} gave a polynomial-time algorithm for finding perfect matchings under a slightly stronger version of Haxell’s condition in $r$-bounded bipartite hypergraphs, where the term $(2r - 1)$ is replaced by $(2r - 1 + \epsilon)$. However, this algorithm incurs an exponential dependence on both the uniformity parameter \(r\) and the slack parameter $\epsilon$. This is far too slow for our purposes.

\paragraph{Our Technical Contribution.}
To carry out this reduction efficiently for the disjoint path problems, our contribution is twofold. 
Firstly, we show that Annamalai's exponential dependency on $r$ and $\epsilon$ is unnecessary given stronger Haxell's condition, defined as follows.

\begin{definition}[Strong Haxell’s condition]\label{def:haxell}
We say that a bipartite hypergraph \(H=(A,B,E)\) satisfies the \emph{\(\varphi\)‑strong Haxell condition} if, for every subset \(S\subseteq A\),
\[
\tau(E_S)\;\ge\;\varphi|S|,
\]
where \(E_S = \{e\in E : e\cap S \neq \emptyset\}\).
\end{definition}

Below, we present an almost-linear %
time hypergraph matching algorithm when $\varphi = \omega(r^2)$.

\begin{theorem}
\label{thm: almost linear time hypergraph perfect matching}
    Let $H=(A, B, E)$ be an $r$-bounded bipartite hypergraph that satisfies the $\varphi$-strong Haxell condition with $\varphi \ge d(n)r^2$ 
    for some parameter $d(n)\ge 4$.
    Then, there exists an algorithm which runs in $O(p^{1+1/\Omega(\sqrt{\log d(n)})})$
    time that computes a perfect matching, where $p := \sum_{e\in E} |e|$.
\end{theorem}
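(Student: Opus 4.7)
I would follow the Haxell--Annamalai paradigm of building the matching one $A$-vertex at a time: given a partial matching $M$ saturating all of $A$ except some root $a_0$, grow an alternating structure rooted at $a_0$ to find an augmentation saturating $a_0$. Call an edge $e\ni a$ a \emph{candidate} for $a$, and its \emph{blockers} the matching edges of $M$ meeting $e$ on the $B$-side; there are at most $r$ blockers. If $a_0$ has a blocker-free candidate, we augment directly. Otherwise, we attempt to liberate a blocker $f\in M$ with $A$-endpoint $a'$ by recursively finding a new matching edge for $a'$ avoiding the $B$-vertices already ``reserved'' by our in-progress augmentation, building an alternating tree whose blocker-free leaves can eventually be assembled into a global rematching. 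Since there are at most $|A|\le p$ augmentations, the goal is to perform each in amortized $p^{1/\Omega(\sqrt{\log d(n)})}$ time.

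\textbf{Exploiting strong Haxell for fast, layered growth.} The key leverage is that $\tau(E_S)\ge\varphi|S|$ with $\varphi\ge d(n)r^2$ means that for every frontier set $S\subseteq A$, removing a reserved set of size as large as $r|S|$ from $B$ still leaves, on average, $\varphi-r\ge(d(n)-1)r^2$ hitting requirement per vertex of $S$. This ``room to grow'' lets us afford a large branching factor per layer. I would design a recursive augmentation procedure with $\ell=\Theta(\sqrt{\log d(n)})$ layers, in which each layer expands the frontier by a factor of $b=d(n)^{\Theta(1/\ell)}$: then $b^\ell\approx d(n)$ gives just enough slack to outpace the $r$-fold conflicts introduced at each layer, while the tree stays small enough to search. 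With appropriate data structures (bucketed hash tables indexed by $B$-vertex, lazy deletion on adjacency lists) one can enumerate allowed candidates in output-sensitive time; then each of the $\ell$ layers contributes work roughly $p^{1/\ell}$ per augmentation, and multiplying over all layers and $|A|$ augmentations yields the target $p^{1+1/\Omega(\sqrt{\log d(n)})}$.

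\textbf{Main obstacle.} The crux is avoiding the exponential-in-$r$ and exponential-in-$1/\epsilon$ overhead of Annamalai's scheme. His algorithm builds alternating trees of height $\Theta(\log_{1+\epsilon}|A|)$ and branching roughly $r$, whose sizes blow up accordingly. Under the much stronger slack $\varphi\ge d(n)r^2$ I expect one can replace this with a potential/amortization argument showing that each unit of work either permanently prunes a branch (certifying it has no augmentation) or is charged to a successful augmentation. The technical heart will be to prove that strong Haxell lets local failures be absorbed by cross-level re-routing: when a subtree at level $i$ fails, the slack $\varphi-O(r\cdot\text{size of reservation})$ remains positive enough to reassign its parent via an as-yet-unexplored candidate without resetting the whole structure. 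Formalizing this charging, choosing the branching factor $b$ and depth $\ell$ so that $b^\ell\lesssim d(n)$ while keeping per-layer failures bounded, and integrating it with efficient candidate-enumeration data structures, is where the bulk of the proof effort will go.
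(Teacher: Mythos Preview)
Your proposal has a genuine gap in the choice of depth and in where the $\sqrt{\log d(n)}$ exponent actually originates.

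You set the alternating-tree depth to $\ell=\Theta(\sqrt{\log d(n)})$, but this is not enough to guarantee an augmentation exists. Under $\varphi\ge d(n)r^2$, each layer of the alternating structure grows by roughly a $d(n)$ factor (this is the content of the paper's \Cref{lem:analysis-expand-by-delta}), so the structure must reach depth $\Theta(\log n/\log d(n))$ before its size exceeds $n$ and a blocker-free candidate is forced to appear. For the most interesting regime $d(n)=\mathrm{polylog}(n)$, your proposed depth $\Theta(\sqrt{\log\log n})$ is vastly smaller than the required $\Theta(\log n/\log\log n)$; your tree simply will not be large enough to certify that some leaf is augmentable, and the charging argument you sketch has nothing to charge against. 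The branching-factor calculation $b^\ell\approx d(n)$ is a red herring: the relevant target is $n$, not $d(n)$.

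The paper's approach differs in two further structural ways. First, it does \emph{not} augment one $A$-vertex at a time; it maintains a single alternating \emph{forest} rooted at all currently unmatched $A$-vertices simultaneously, and bounds the total number of main-loop iterations rather than the cost of $|A|$ separate augmentations. Second, and this is the real source of the $\sqrt{\log d(n)}$, the iteration bound does not come from the tree size directly. The paper assigns each forest a signature vector whose entries are (roughly) logs of layer sizes, shows this vector decreases lexicographically, and observes that the entries form a nondecreasing sequence bounded by $O(\ell_{\max}+\log n)$ with $\ell_{\max}=O(\log n/\log d(n))$. The number of such signatures is then bounded by the number of integer partitions of $O(\ell_{\max}\log n)$, which is $\exp\big(O(\sqrt{\ell_{\max}\log n})\big)=n^{O(1/\sqrt{\log d(n)})}$. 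Each iteration costs $O(p)$ via the trivial greedy half-layer routine. None of this amortization-by-partition-counting appears in your plan, and your per-layer $p^{1/\ell}$ heuristic does not recover it.
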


\Cref{thm: almost linear time hypergraph perfect matching} effectively brings the powerful tool of hypergraph perfect matching into the realm of almost-linear-time graph algorithms. See \Cref{tab:hyper matching} for prior results.
Given its prior applications discussed in \Cref{sec:related}, we believe it may find further applications.

\begin{table}[H]
\centering
\small{

\begin{tabular}{|c|c|c|}
\hline 
Reference & Expansion condition $\gamma$ & Runtime\tabularnewline
\hline 
\hline 
Haxell \cite{Haxell1995} %
& $\tau(E_{S})>(2r\text{\textminus}1)(|S|\text{\textminus}1)$ & existential \tabularnewline
\hline 
Annamalai \cite{Annamalai18} %
& $\tau(E_{S})>(2r\text{\textminus}1+\epsilon)(|S|\text{\textminus}1)$ & $p^{\poly(r/\epsilon)}$\tabularnewline
\hline Annamalai, Kalaitzis, Svensson
\cite{annamalai2017combinatorial} & $\tau(E_{S})>(c_{0}r/\epsilon)(|S|\text{\textminus}1)$ & $\poly(p)^\star$ \tabularnewline
\hline 
\textbf{\Cref{thm: almost linear time hypergraph perfect matching}} & $\tau(E_{S})>r^{2}d(n)|S|$ & $O(p^{1+1/\Omega(\sqrt{\log d(n)})})$\tabularnewline
\hline 
\end{tabular}
}
\caption{Hypergraph perfect matching algorithms in literature. $(\star)$ The algorithm of \cite{annamalai2017combinatorial} only computes an \emph{$\epsilon$-near‐perfect} matching $M$. That is, $M$  becomes a perfect matching only after discarding an $\epsilon$‐fraction of each hyperedge.}\label{tab:hyper matching}
\end{table}

Secondly, observe that the reduction to hypergraph matching discussed below \Cref{def:hyper matching} takes super-polynomial time simply because $H$ is too big.
Even if we include only the paths of length at most $r = \poly\log n$, the size of $E_H$ can be quasi-polynomial. Thus, even if we use the almost-linear time algorithm from \Cref{thm: almost linear time hypergraph perfect matching} on $H$, the running time would still be super-polynomial.

To bypass this, we instead access $E_H$ only through the edge oracles. These oracles must be efficiently implemented yet strong enough to support the hypergraph perfect matching algorithms. 
Our final algorithm cannot use \Cref{thm: almost linear time hypergraph perfect matching} as a black box but will need its strengthened version, which accesses $H$ only via the edge oracles. We formalize this notion of edge oracles as \emph{half layer} oracles (defined formally in \Cref{sec: half layer}).
Intuitively speaking, a half layer is a collection of hyperedges which are vertex disjoint within the $B$ part, with some additional properties.
We gave two implementations of the edge oracles for the hypergraphs described above arising from our disjoint paths problem:
a simple one that leads to the $mn^{1+o(1)}\min\{k, \phi^{-1}\}$ final running time, and a more advanced one based on recently developed multi-commodity flow algorithms \cite{haeupler2024low} that leads to the almost-linear $m^{1+o(1)}$ running time.

\subsection{Related Works}
\label{sec:related}

\paragraph{Applications of Disjoint Paths to Classical Combinatorial Problems.}
Finding edge disjoint paths by means of Haxell's condition has proved to be very powerful in attacking several classical combinatorial problems in recent years including Erd\H{o}s-Gallai cycle decomposition conjecture \cite{BM23}, Katona's path separation problem \cite{letzter2024separating}, finding regular subgraphs \cite{chakraborti2025edge}, and Hamiltonicity of weak expanders \cite{letzter2025nearly}. Our efficient algorithm for the disjoint paths problem paves the way for efficient versions of all of these results.

\paragraph{Approximation Algorithms for Disjoint Paths in Expanders.}

A different line of work tries to devise algorithms that only find disjoint paths between a \emph{subset} of demand pairs.
The approximation ratio is defined as the ratio between the maximum number of demand pairs that can be connected via disjoint paths and the number of demands satisfied by the algorithm.
Chuzhoy, Kim, and Nimavat \cite{ChuzhoyKN18} show that there is no polynomial time algorithm for achieving a $2^{\Omega(\log^{1-\epsilon} n)}$-approximation for any constant $\epsilon > 0$, under a reasonable assumption that $\mathsf{NP}\not\subseteq \mathsf{RTIME}(n^{\mathrm{polylog}(n)})$.

However, on expander graphs,
Aumann and Rabani~\cite{AumannRabani1998} gave a deterministic offline $O(\log n)$‐approximation algorithm, and Kleinberg and Rubinfeld~\cite{KleinbergRubinfeld2006} presented a deterministic online greedy algorithm achieving an $O(\log n\,\log\log n)$ guarantee. Subsequent randomized approaches further improved these bounds: Kolman and Scheideler~\cite{KolmanScheideler2002} obtained an $O(\log n)$‐approximation, and Chakrabarti et al.~\cite{ChakrabartiGuptaKumar2007} achieved an $O(\sqrt{\log n})$‐approximation in strong expanders.
\thatchaphol{What about fast algorithm?}

For designing fast algorithms, 
there has been a line of research work 
on approximate multi-commodity flow with short flow path \cite{HaeuplerHS23, haeupler2024low, HaeuplerH025} 
and distributed routing \cite{chang2024deterministic, ChangS20, GhaffariL18, GhaffariKS17}.
On expanders, these algorithms achieves $n^{o(1)}$-approximation for the disjoint paths problem in almost-linear time.

\paragraph{Applications of Hypergraph Matching in Fair Allocation.} 
Hypergraph perfect matching has found  applications in the restricted max–min fair allocation problem (also known as the Santa Claus problem), which aims to allocate indivisible resources to players in a balanced manner. In the influential work of Asadpour, Feige, and Saberi~\cite{AsadpourFeigeSaberi2012}, the problem is reduced to finding perfect matchings in a bipartite hypergraph defined by the configuration LP, and a local search procedure inspired by Haxell’s theorem is used to establish a constant-factor integrality gap. Annamalai et al.~\cite{annamalai2017combinatorial} gave a combinatorial algorithm that finds such matchings efficiently, yielding a 13-approximation. Davies et al.~\cite{DaviesRothvossZhang2020} improved the approximation to $4+\varepsilon$ using a matroid-based extension of the hypergraph matching framework. Further generalizations to non-uniform hypergraphs were obtained by Bamas et al.~\cite{BamasGargRohwedder2020}. These results rely on structural insights into hypergraph matchings and build upon techniques developed for Haxell-type conditions.

\subsection{Organization and Preliminaries}

\paragraph{Organization.}
We give a high-level overview of our techniques in \Cref{sec:technical overview}. 
In \Cref{sec: half layer}, we define an abstraction called a \emph{half layer} and state that computing a hypergraph perfect matching can be reduced to the edge oracle that can find a (maximal or approximate) half layer. 
In \Cref{sec: disjoint paths}, specific to the application of finding disjoint paths on expanders, we show how to efficiently find the half layer without explicitly constructing the hypergraph. 
By applying the reduction from \Cref{sec: half layer}, we will obtain our main result (\Cref{thm: routing conductance}).
Finally, in \Cref{sec:proof hyper matching}, we give the proof of the reductions to finding a half layer from \Cref{sec: half layer}. Our arguments in this section build upon the analysis framework introduced by Annamalai~\cite{Annamalai18}, with several key changes.

\paragraph{Preliminaries.} Throughout the paper all our logarithms are in base 2.
Given a bipartite hypergraph $H=(A,B,E)$ we will denote its number of vertices by $n(H) := |H| := |A|+|B|$ and the total volume of its edges by $p(H):= \sum_{e\in E} |e|$. When the hypergraph $H$ is clear from context we often omit it and write simply $n$ or $p$. Given a subset of edges $X \subseteq E$, we write $A(X)=\bigcup_{e \in X} e \cap A$ and similarly $B(X)=\bigcup_{e \in X} e \cap B$. We define the \emph{rank} of an edge $e$ to be $|e\cap B|$.

\section{Technical Overview}\label{sec:technical overview}
In this section, we explain high-level ideas behind our two main technical contributions as discussed in the introduction.

\paragraph{Hypergraph Perfect Matching in Almost Linear Time. }

First, we describe the idea of our almost-linear-time algorithm for hypergraph bipartite perfect matching using stronger Haxell's condition. This part involves reanalyzing Annamalai's framework \cite{Annamalai18}, but we can also simplify a significant part of his argument, given that we are working with an even stronger Haxell's condition.

At a high level, our algorithm can be seen as a hypergraph generalization of Motwani's algorithm \cite{motwani1989expanding} for finding a perfect matching in normal bipartite graphs satisfying the \emph{stronger Hall's condition.} This is a bipartite graph $G=(A,B,E)$ such that for all $S\subseteq A$, $|N(S)|\ge(1+\epsilon)|S|$ where $\epsilon>0$ and $N(S)$ is the neighbor set of $S$. Motwani showed that, for any non-perfect matching $M$, i.e., some vertex from $A$ remains unmatched, there must exist an alternating path for $M$ of length only $O(\frac{\log n}{\epsilon})$ that can augment the size of $M$. Thus, just by ensuring that there is no short alternating path of length $O(\frac{\log n}{\epsilon})$, we can conclude that $M$ is perfect. In a normal graph, this can be done by making $O(\frac{\log n}{\epsilon})$ calls to blocking flows, leading to an algorithm with $O(\frac{m\log n}{\epsilon})$ time for finding a perfect matching. In contrast, if $G$ only satisfies the standard Hall's condition, then the alternating path might have length $\Omega(n)$. The upshot is that, the stronger expansion guarantee is, the easier it is to find a perfect matching. 

The stronger Haxell's condition from \Cref{def:haxell} is precisely the hypergraph generalization of the stronger Hall's condition. Haxell \cite{Haxell1995} showed, once the expansion is strong enough, i.e., $\varphi>(2r-3)$, the perfect matching must exist in any $r$-bounded bipartite hypergraph. By slightly strengthening the condition to $\varphi>(2r-3+\epsilon)$, Annamalai \cite{Annamalai18} further showed that one can find it in $p^{\poly(r/\epsilon)}$ time. In hypergraphs, the notion of alternating paths become \emph{alternating trees} (formally defined in \Cref{sec:proof hyper matching}). The running time of the hypergraph perfect matching algorithm depends \emph{exponentially} on the depth of this tree, in contrast to the linear dependency in the case of  normal graphs. 
In Annamalai's algorithm \cite{Annamalai18}, the depth of his alternating trees can be as large as $\Omega(\log(n)\cdot \poly\log (r))$, leading to super-polynomial time when $r = \omega(1)$.

Our idea is to observe that an even stronger Haxell's condition when $\varphi\ge d(n)r^{2}$ naturally improves the depth of the alternating trees to only $O\left(\frac{\log n}{\log d(n)}\right)=o(\log n)$ if we set $d(n)=\omega(1)$.
Given the small depth, the only remaining algorithmic component is how to ``grow an alternating tree''. Our algorithm will grow an alternating tree layer by layer. At the end, the running time is of the form $T\cdot2^{O(\log(n) / \sqrt{\log d(n)})} =Tn^{o(1)}$ where $T$ is the time required to grow the alternating tree by one layer. If the hypergraph with total volume $p$ is given to us explicitly, then a simple greedy algorithm gives $T=O(p)$. This immediately leads to a proof of \Cref{thm: almost linear time hypergraph perfect matching} and will be formally proved in \Cref{sec: half layer}. 

However, in our disjoint-path applications, we cannot explicitly construct the underlying hypergraph fast enough, which leads us to the second challenge.

\paragraph{Growing Alternating Tree for Disjoint Paths without Explicit Hypergraph.}

Next, we describe how to grow a layer of the alternating tree without explicitly constructing the hypergraph described below \Cref{def:hyper matching}. We consider this our primary technical contribution.

To keep the presentation modular, we formally introduce a new abstraction called \emph{half-layer} in \Cref{sec: half layer}. In contrast to the alternating tree, this object does not require background knowledge of Annamalai's framework. Roughly speaking, finding a half layer corresponds to growing half a layer of an alternating tree from the vertex set $A$ to the set $B$. Another half a layer from $B$ to $A$ is relatively easy to build. 

In fact, we introduce two variants of half-layers: a \emph{maximal} half-layer and an \emph{approximate} half-layer, defined in \Cref{def: half layer,def:apx half layer}, respectively. For intuition, on normal graphs, finding a maximal half-layer corresponds to growing a layer of the breadth-first-search tree on the residual graph for finding an alternating path. The caveat is that, for efficiency, we also bound the maximum degree in the tree to be at most $d(n)$. On the other hand, an approximate half-layer has no direct analogy in normal graphs because it has bi-criteria approximation. But for intuition, if we ignored the approximation with respect to the rank of hyperedges, finding an approximate half-layer would correspond to, in normal graphs, finding an approximately largest (but possibly not maximal) set of edges to grow a layer in the breadth-first-search tree on the residual graph. 

In the next section, \Cref{thm: quadratic time hypergraph perfect matching,lem: fast perfect matching} formally state that given an efficient algorithm for finding a maximal or an approximate half layer with running time $T$, we can obtain a hypergraph perfect matching with running time $Tn^{o(1)}$. The formal proofs of these statements are given in \Cref{sec:proof hyper matching} and follow from our first technical contribution on the almost-linear time hypergraph perfect matching algorithm, and by showing that, indeed, our half-layer abstraction does fit into Annamalai's framework.

\section{Reducing Hypergraph Perfect Matching to Finding Half Layers}
\label{sec: half layer}

In this section we introduce key concepts of half layer and approximate half layer which we will use. We also state the lemmas which allow us to reduce a computation of a perfect matching in certain bipartite hypergraphs to computing these objects.

\begin{definition}%
\label{def: half layer}
    A \emph{half layer} for a bipartite hypergraph $H = (A,B,E)$ with respect to a \emph{state} $(A',B',M)$ and a degree parameter $\Delta$, where the state consisting of a set of activated vertices $A'\subseteq A$, a set of forbidden vertices $B'\subseteq B$, a partial matching $M$, is a subset of edges $Z\subseteq E\setminus M$ that satisfies:
    \begin{enumerate}
        \item $\forall e\in Z$, $e\cap A\in A'$, furthermore, for each vertex $a\in A'$, the number of hyperedges in $Z$ incident to $a$ is at most $\Delta$;
        \item For each $e\in Z$, $e$ is disjoint from $B'$. For any pair of hyperedges $e,e'\in Z$, $B(e)\cap B(e') = \emptyset$;
        \item For each $e'\in M$, there is at most one hyperedge $e$ in $Z$ such that $B(e)\cap B(e') \ne \emptyset$. %
    \end{enumerate}
    The \emph{rank} of the half layer $Z$ is the largest $|e\cap B|$ over $e\in Z$. We say $Z$ is \emph{$r'$-maximal} %
    if it has rank $r'$ and for any $e\in E\setminus Z$ with $|e\cap B|\le r'$, $Z\cup \{e\}$ is not a half layer w.r.t.\ $(A',B',M)$ and $\Delta$.
\end{definition}

The following simple greedy algorithm for constructing an $r'$-maximal half layer should help with understanding the definition.

\begin{proposition}\label{prop:greedy half layer}
Given a bipartite hypergraph $H=(A,B,E)$, a state $(A',B',M)$, and parameters $\Delta$ and $r'$, we can find a half layer $Z$ w.r.t.\ $(A',B',M)$ and $\Delta$ that is $r'$-maximal in time $O(p(H))$. 
\end{proposition}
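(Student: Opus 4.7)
The plan is to execute the natural greedy algorithm: scan the edges of $E\setminus M$ once, and add an edge $e$ to $Z$ whenever doing so preserves all three conditions of \Cref{def: half layer} and keeps the rank at most $r'$. The two main tasks are (i) arguing that the result is $r'$-maximal, and (ii) implementing each check in time proportional to $|e|$ so the total runtime is $O(\sum_{e\in E} |e|)=O(p(H))$.

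For the data structures, I would first precompute, in $O(p(H))$ time, a map $\mu:B\to M\cup\{\bot\}$ sending each $b\in B$ to the unique matching edge of $M$ containing it (or $\bot$ if none exists). I would then maintain throughout: a counter $d[a]$ recording the number of already-chosen edges incident to $a\in A$; a boolean array $\mathrm{used}[b]$ flagging whether $b\in B(Z)$; and a boolean array $\mathrm{hit}[e']$ for $e'\in M$ flagging whether some edge already placed in $Z$ satisfies $B(e)\cap B(e')\neq\emptyset$. When processing an edge $e=\{a\}\cup S$ with $S=e\cap B$, I would reject immediately if $a\notin A'$, $|S|>r'$, or $e\in M$. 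Otherwise, I would scan $S$ and check that for every $b\in S$ we have $b\notin B'$, $\mathrm{used}[b]$ is false, and either $\mu(b)=\bot$ or $\mathrm{hit}[\mu(b)]$ is false, and finally that $d[a]<\Delta$. If all checks pass, I would add $e$ to $Z$, increment $d[a]$, and set $\mathrm{used}[b]\leftarrow\mathrm{true}$ and $\mathrm{hit}[\mu(b)]\leftarrow\mathrm{true}$ (when $\mu(b)\neq\bot$) for every $b\in S$. Each edge is thus processed in $O(|e|)$ time, yielding the claimed runtime.

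For correctness, the tests above enforce exactly conditions (1)--(3) of \Cref{def: half layer} together with the rank bound, so the output is a half layer of rank at most $r'$. For $r'$-maximality, take any $e\in E\setminus Z$ with $|e\cap B|\le r'$. If $e$ was rejected a priori (because $e\in M$, $e\cap A\notin A'$, or $e\cap B'\neq\emptyset$), then $Z\cup\{e\}$ is not a half layer regardless of the current $Z$. Otherwise, at the instant $e$ was examined, some test against the then-current $Z$ failed; since the counters and the $\mathrm{used}$ and $\mathrm{hit}$ arrays are monotone non-decreasing across the scan, the same test still fails against the final $Z$, and so $Z\cup\{e\}$ violates one of conditions (1)--(3) at the end.

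The step I expect to require the most care is this monotonicity argument used for maximality: it is immediate for the degree counter and the $\mathrm{used}$ array, but for $\mathrm{hit}[e']$ one should observe that once some edge of $Z$ intersects $B(e')$, condition (3) permanently forbids a second such edge, which is exactly the invariant $\mathrm{hit}$ records. Everything else is routine bookkeeping, and the overall $O(p(H))$ bound follows by combining the $O(p(H))$ preprocessing for $\mu$ with the $O(|e|)$ work per edge summed over $e\in E$.
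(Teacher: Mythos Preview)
Your proposal is correct and follows essentially the same greedy-scan approach as the paper: iterate over the edges once, maintain per-vertex counters on the $A$ side and forbidden markers on the $B$ side, and accept an edge exactly when all half-layer conditions remain satisfied. The only differences are cosmetic bookkeeping (you keep a separate $\mathrm{hit}$ array where the paper folds $B(e')$ into the running forbidden set $B'$) and that you spell out the monotonicity argument for $r'$-maximality and the exclusion of edges in $M$, both of which the paper leaves implicit.
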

\begin{proof}
    Initialize $Z \gets \emptyset$. The algorithm maintains, for each vertex in $A$, a counter representing the number of edges in $Z$ incident to it; and the algorithm also maintains a data structure answering for each vertex in $B$,
    (1) whether it belongs to the forbidden vertex set $B'$, and
    (2) whether it is matched in $M$, and if so, by which edge.
    We remark that supporting (2) can be done by a preprocessing algorithm that runs through all edges in $M$ and marks all vertices in the matched edges in the beginning of the algorithm.

    Then, the algorithm iterates over every edge $e \in E$ and checks whether $e \cap A \in A'$, whether $|e\cap B|\le r'$, whether the $A$ vertex in $e$ is incident to fewer than $\Delta$ edges in $Z$, and whether $e$ is disjoint from the current set $B'$.
    If all four conditions are satisfied, the algorithm adds $e$ to the half layer $Z$, and updates the maintained data structures. That is,
    the algorithm increases the counter for the $A$ vertex of $e$ and sets $B' \gets B' \cup B(e)$. Finally, for every vertex $u \in B(e)$, if $u$ belongs to some matching edge $e' \in M$ according to (2), the algorithm updates $B' \gets B' \cup B(e')$ to satisfy the third requirement of being a half layer.
    Notice that each edge $e'\in M$ can only be considered once throughout the algorithm. Thus, 
    the total running time is $O(\sum_{e\in E}|e| + \sum_{e'\in M}|e'|) = O(p)$. 
\end{proof}

The following \Cref{thm: quadratic time hypergraph perfect matching} reduces the hypergraph perfect matching problem into finding maximal half layers. This gives an almost-linear time algorithm for solving the hypergraph perfect matching problem (\Cref{thm: almost linear time hypergraph perfect matching}) for hypergraphs under a strong Haxell condition.

\begin{restatable}{lemma}{quadraticLemma}
\label{thm: quadratic time hypergraph perfect matching}
    Let $H = (A \cup B, E)$ be an $r$-bounded bipartite hypergraph that satisfies the $\varphi$-strong Haxell condition with $\varphi \ge d(n) r^2$ for some parameter $d(n) \ge 4$. Let $T^*$ denote the runtime of an algorithm that computes a maximal half-layer
    for any state with degree parameter $\Delta=d(n)$. Then, there exists an algorithm that computes a perfect matching of $H$ in time $O\left(T^* n^{1/\Omega(\sqrt{\log d(n)})}\right)$.
\end{restatable}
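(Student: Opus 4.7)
The plan is to follow Annamalai's augmentation framework~\cite{Annamalai18}, with two adaptations: recalibrate the analysis to exploit the quadratic slack $\varphi \ge d(n) r^2$, and funnel every access to $H$ through the maximal-half-layer oracle so that the per-operation cost becomes $T^*$ instead of $\Theta(p)$.

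I would maintain a partial matching $M$ and, while $M$ is not perfect, grow an alternating tree (or forest) whose odd layers are candidate edges one might add to $M$ and whose even layers are the matched edges they conflict with. The key primitive is \emph{growing an odd layer}: given the current set of active $A$-vertices $A'$ and the set $B'\subseteq B$ of $B$-vertices already present in the tree, I invoke the maximal-half-layer oracle on the state $(A', B', M)$ with degree cap $\Delta := d(n)$. The output half-layer $Z$ becomes the new odd layer; its $B$-disjointness, together with the property that each edge of $M$ is intersected by at most one edge of $Z$, lets me read off the next even layer for free by following every $B$-vertex of $Z$ to the unique matched edge that it hits. An augmentation is produced whenever growth stalls, via the standard Haxell-style swap along the tree.

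The analysis proceeds in two parts. First, I would use the stronger Haxell condition $\varphi \ge d(n) r^2$ to bound the depth of any individual alternating tree by $O(\log n / \log d(n))$: if the active set ever failed to expand significantly from one layer to the next, a small cover $T \subseteq B$ of $E_{A'}$ would contradict $\tau(E_{A'}) \ge \varphi |A'|$ unless a swap within $T$ already augments $M$. Second, I would port Annamalai's potential-function argument to bound the \emph{total} number of oracle calls across the whole run, not just one augmentation: each call either extends the frontier, triggers an augmentation, or forces a subtree collapse, and an appropriately chosen potential either strictly decreases or sharply increases in each case. The square-root exponent arises from a balance: introducing a layering parameter $\ell$ that groups $\ell$ consecutive levels into a ``super-layer,'' the per-super-layer rebuild cost grows like $2^{O(\ell)}$ while the number of super-layers scales like $\log n / (\ell \log d(n))$, and the optimum choice $\ell \approx \sqrt{\log d(n)}$ yields $2^{O(\log n/\sqrt{\log d(n)})} = n^{1/\Omega(\sqrt{\log d(n)})}$ total half-layer invocations, hence overall runtime $O\bigl(T^*\cdot n^{1/\Omega(\sqrt{\log d(n)})}\bigr)$.

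The main obstacle I anticipate is the global amortization. The per-augmentation depth bound and the correctness of a single collapse step follow rather directly from strong Haxell, but ensuring that subtree rebuilds after collapses do not accumulate across augmentations, and that Annamalai's potential argument remains valid when $H$ is accessed only through the maximal-half-layer oracle (rather than by explicit edge enumeration), is the delicate technical step. This is also what forces the exact interface $\Delta = d(n)$ on the oracle — the degree cap must match the expansion factor per layer predicted by strong Haxell so that the potential function remains monotone — and is ultimately what drives the square-root in the final exponent.
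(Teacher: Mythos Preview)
Your high-level plan---run Annamalai's framework with degree cap $\Delta=d(n)$, use the half-layer oracle to grow each odd layer, and exploit the slack $\varphi \ge d(n)r^2$ to cap the alternating-forest depth at $O(\log n/\log d(n))$---matches the paper, and you correctly flag the global amortization across collapses and rebuilds as the crux.

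The gap is in your explanation of where the square root comes from. The paper does \emph{not} introduce any super-layer grouping or tune a free parameter $\ell$. Instead it assigns to the current forest a \emph{signature vector} $\Psi=(\psi_0,\ldots,\psi_\ell,\infty)$ whose $t$-th coordinate records, logarithmically rounded, the sizes $|X_t|$ and $|Y_t|$. Each main-loop iteration strictly decreases $\Psi$ lexicographically (a new layer appends a finite entry before $\infty$; a collapse either multiplicatively grows $|X_{t-1}|$ via superpose-build or multiplicatively shrinks $|Y_{t-1}|$ via swaps), so the number of iterations is at most the number of distinct signatures. The depth bound and the layer-growth lemma force the coordinates to be nondecreasing in absolute value and individually bounded by $O(\ell_{\max}+\log n)$, so each signature encodes a partition of an integer $N=O(\ell_{\max}^2+\ell_{\max}\log n)=O((\log n)^2/\log d(n))$. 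The Hardy--Ramanujan bound $p(N)\le\exp(O(\sqrt{N}))$ is what produces $2^{O(\log n/\sqrt{\log d(n)})}=n^{1/\Omega(\sqrt{\log d(n)})}$.

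Your proposed balance, ``per-super-layer cost $2^{O(\ell)}$ versus $\log n/(\ell\log d(n))$ super-layers with $\ell\approx\sqrt{\log d(n)}$,'' does not yield this exponent under any natural combination (product, power, sum) of those two quantities, and there is no grouping step in the algorithm to support it. So while your skeleton and depth bound are right, the argument that actually closes the amortization is a lexicographic-potential plus integer-partition count, not a tunable-parameter tradeoff.
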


\begin{proof}[Proof of \Cref{thm: almost linear time hypergraph perfect matching}]
We simply plug \Cref{prop:greedy half layer} into \Cref{thm: quadratic time hypergraph perfect matching} with $T^*=O(p)$.
\end{proof}

\paragraph{Half Layer Oracles for Disjoint-Paths Problem.} 
To solve the disjoint-paths problem, an edge oracle has to be designed in order to implicitly solve the hypergraph perfect matching problem.
As mentioned in the discussion below \Cref{def:hyper matching}, each hyperedge corresponds to a path that connects some demand pair.

Consider a maximal half layer $Z$, the hyperedges in $Z$ all together corresponds to a collection of edge-disjoint paths such that, after removing all the edges from the collected paths, there is no short path connecting any unfulfilled demand pairs.
Observe that, a half layer oracle which returns a maximal half layer
as in \Cref{prop:greedy half layer} 
can be na\"ively implemented by repeatedly running BFS and removing all edges from the found path.
This gives a maximal half layer oracle in quadratic time. By plugging $T^*=\tilde{O}(mn\min\{k, \phi^{-1}\})$ and $d(n)=\Theta(\log n)$ into \Cref{thm: quadratic time hypergraph perfect matching}, we obtain an almost-quadratic time algorithm for the disjoint-paths problem (\Cref{thm: routing conductance} Part I) on $\phi$-expanders with a polylogarithmic minimum degree requirement ($\phi^3\delta \ge (35 \log n)^3 k$). As we focus on introducing the half layer oracles in this section, we defer the full proof
to \Cref{sec: almost quadratic time disjoint paths}.

On $\phi$-expanders with a higher minimum degree requirement ($\phi^3 \delta \ge n^{o(1)} k$), an almost-linear time algorithm for disjoint-paths problem can be obtained, with 
a dedicated implementation of half layer oracle that does not always return maximal half layers.
In particular, we introduce the \emph{approximate half layer} in \Cref{def:apx half layer} below.

\begin{definition}\label{def:apx half layer}
    Let $H = (V,E)$ be an $r$-bounded hypergraph. For any $r' \le r$ and $\alpha \ge 1$, a half layer $Z$ with respect to $(A',B',M)$ and $\Delta$ is an \emph{$(r',\alpha)$-approximate half layer} if for any half layer $Z'$ of rank $r'$ with respect to $(A',B',M)$ and $\Delta$, we have $|Z'|\le \alpha|Z|$. %
\end{definition}

In \Cref{sec: almost linear}, we show how to find an approximate half layer above using the multi-commodity flow algorithm of \cite{haeupler2024low}, which incurs approximations in both length and congestion. These correspond, respectively, to the approximations in \Cref{def:apx half layer} for the hyperedge rank and the half-layer size.

Once we have an efficient subroutine to compute an approximate half layer, we obtain an almost linear time hypergraph perfect matching algorithm, for hypergraphs with strong Haxell condition, where the required strength depends on the approximation ratio $\alpha$ we can guarantee when computing our approximate half layer.
We summarize the reduction below in \Cref{lem: fast perfect matching}.

\begin{restatable}{lemma}{linearLemma}
\label{lem: fast perfect matching}
    Let $H=(A, B, E)$ be an $r$-bounded bipartite hypergraph such that the subgraph $H_{r'} = (A, B, E_{r'})$, consisting of edges of $H$ of rank at most $r'$, satisfies the $\varphi$-strong Haxell condition with $\varphi =  24\alpha \cdot d(n) r^2 $ for some parameter $d(n)\ge 4\alpha\ge 4$. Let $\hat T$ denote the runtime of an algorithm that computes an $(r',\alpha)$-approximate half-layer for any state with degree parameter $\Delta=d(n)$. Then, there exists an algorithm that computes a perfect matching in $H$ in time $O\left(\hat{T}n^{1/\Omega(\sqrt{\log d(n)})}\right)$.
\end{restatable}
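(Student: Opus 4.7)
The plan is to mirror the argument behind \Cref{thm: quadratic time hypergraph perfect matching} within the same layered alternating-tree framework (adapted from Annamalai~\cite{Annamalai18}), while accounting for the two sources of slack in \Cref{def:apx half layer}: the restriction of the half-layer to rank at most $r'$, and the multiplicative approximation $\alpha$ in its size. The overall structure is unchanged: starting from an unmatched vertex $a_0 \in A$, we repeatedly augment the current matching by growing an alternating tree, layer by layer, where each layer consists of an $A\to B$ half-layer produced by the oracle, followed by the $B\to A$ ``matched'' half-layer that comes for free from $M$. The state $(A',B',M)$ tracks the current frontier, the forbidden $B$-vertices already captured by the tree, and the matching, exactly as in \Cref{def: half layer}.

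The first key observation is that, because the oracle only ever returns half-layers of rank at most $r'$, the entire alternating tree uses only hyperedges from $H_{r'}$. Consequently only the Haxell condition on $H_{r'}$ is invoked in the analysis; this is exactly why the hypothesis is phrased on $H_{r'}$ rather than on all of $H$. I would then re-examine the layer-growth step: in the maximal case of \Cref{thm: quadratic time hypergraph perfect matching}, maximality of the returned half-layer means that the frontier $S \subseteq A'$ is, up to $O(\Delta \cdot r)$ new $B$-vertices per frontier element, fully covered by the current forbidden set $B'$. Combined with $\tau(E_S) \ge \varphi |S|$ for $\varphi = d(n) r^2$, this forces the tree to expand by a factor of $\Theta(d(n))$ between layers, capping its depth at $O(\log n / \log d(n))$.

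Replacing a maximal half-layer with an $(r',\alpha)$-approximate one weakens this cover-style argument by a factor of $\alpha$: the frontier $S$ can now fail to be covered by a factor of at most $\alpha$ more $B$-vertices than in the maximal case. To restore the same per-layer expansion of $\Theta(d(n))$, it therefore suffices to strengthen the Haxell slack by the same multiplicative factor, which is exactly why the hypothesis reads $\varphi \ge 24\alpha \cdot d(n) r^2$; the constant $24$ simply absorbs the bookkeeping constants from the underlying Annamalai-style accounting, and the assumption $d(n) \ge 4\alpha$ guarantees that the per-layer expansion is still at least a fixed constant greater than $1$, so the tree continues to grow geometrically. With depth still $O(\log n / \log d(n))$ and per-layer cost dominated by a single call to the approximate half-layer oracle, the standard Annamalai-type accounting yields a total running time of $\hat T \cdot 2^{O(\log n / \sqrt{\log d(n)})} = O\!\left(\hat T\, n^{1/\Omega(\sqrt{\log d(n)})}\right)$, which is the claimed bound.

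The step I expect to be the main obstacle is verifying rigorously that the $\alpha$-loss in the half-layer size enters only multiplicatively in the Haxell budget and does not compound across layers, in particular that it interacts cleanly with the way $B'$ is expanded in the third clause of \Cref{def: half layer} (forbidding matched partners of newly covered $B$-vertices). This requires replaying Annamalai's invariants with the quantitative constants tracked carefully as functions of $\alpha$; doing so is essentially a strict generalization of the argument used to prove \Cref{thm: quadratic time hypergraph perfect matching}, and it is where the precise constant $24\alpha$ is pinned down. I would defer this detailed accounting to \Cref{sec:proof hyper matching}, where both reductions are proved together.
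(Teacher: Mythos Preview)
Your high-level plan is in the right spirit, but there is a concrete misreading of \Cref{def:apx half layer} that breaks the argument as written. You assert that ``the oracle only ever returns half-layers of rank at most $r'$, [so] the entire alternating tree uses only hyperedges from $H_{r'}$.'' This is false: an $(r',\alpha)$-approximate half layer $Z$ is only required to satisfy $|Z'|\le \alpha|Z|$ for every \emph{comparison} half-layer $Z'$ of rank at most $r'$; nothing constrains the rank of the edges in $Z$ itself, which may be anything up to $r$. The paper emphasizes this explicitly (see the footnote in \Cref{sec: perfect-matching-via-building-approximate-layers}). Consequently the alternating forest may contain rank-$r$ edges, and the $r^2$ (not $(r')^2$) appearing in the Haxell strength $\varphi=24\alpha\,d(n)r^2$ is essential when bounding $|B(X_{\le \ell}\cup Y_{\le \ell})|$.

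Where the Haxell condition on $H_{r'}$ actually enters is more delicate than you describe. In the paper's \Cref{lem: approx analysis-expand-by-delta}, the set $S\subseteq A$ is defined so that every vertex in $S$ has all of its rank-$\le r'$ edges hit by the current forbidden $B$-vertices (this is where the $r'$-maximal extension, used only as a benchmark in the analysis, appears). Therefore $\tau(E_S)$ can be upper-bounded \emph{with respect to $H_{r'}$}, and the $\varphi$-strong Haxell condition on $H_{r'}$ supplies the matching lower bound. The $\alpha$ slack is then absorbed not through a ``cover by $\alpha$-more $B$-vertices'' argument, but via \Cref{lem: superpose ub approx}, which shows $|\hat X_t'|\le (1+\alpha+\alpha\mu)|X_t|$ for any $r'$-maximal extension; this is how $\alpha$ feeds multiplicatively into both the hitting-set upper bound and the lower bound on $|S|$ without compounding across layers. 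Your plan to ``replay Annamalai's invariants with constants tracked as functions of $\alpha$'' is the right instinct, but the mechanism you sketch---the oracle restricting edge ranks---is not what makes it work.
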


The proofs of \Cref{thm: quadratic time hypergraph perfect matching,lem: fast perfect matching} are deferred to \Cref{sec:proof hyper matching}.

\section{Finding Disjoint Paths in Expanders Deterministically}
\label{sec: disjoint paths}

In this section, we solve the disjoint-paths problem using the formally defined half layer oracles from the previous section, namely \Cref{thm: quadratic time hypergraph perfect matching,lem: fast perfect matching}.
As mentioned in the introduction (below \Cref{def:hyper matching}), the problem of obtaining short edge-disjoint paths in a graph $G$ reduces to finding a perfect matching in a certain auxiliary $r$-bounded bipartite hypergraph $H$.
The following definition describes these hypergraphs.

\begin{definition}[Demand-Path Hypergraph]\label{def: pair-connectivity}
    Given a graph $G=(V, E_G)$, a set of demand pairs $D$, and a maximum allowed length $r$, the \emph{demand-path hypergraph} $H$ is an $r$-bounded bipartite hypergraph $H = (A,B,E_H)$ defined as follows.
    \begin{enumerate}
        \item The demand pairs in $D$ are in one-to-one correspondence with the vertices in $A$. For each demand pair $(s,t)$, we denote the corresponding vertex in $A$ by $a_{s,t}$.
        \item The edges $E_G$ are in a one-to-one correspondence with the vertices in $B$. For each edge $e\in E_G$, we denote the corresponding vertex in $B$ by $b_e$.
        \item The edge set $E_H$ is built as follows. For each pair $(s,t)\in D$, and each path $P$ in $G$ connecting $s$ and $t$ of length at most $r$, we add a hyperedge $\{a_{s,t}\}\cup \{b_e\mid e\in P\}$ to $E_H$.
    \end{enumerate}
\end{definition}

In \Cref{sec: proving-strong-haxell-condition}, we show that demand-path hypergraphs of expanders satisfy strong Haxell's condition. 
In \Cref{sec: almost quadratic time disjoint paths}, we implement the maximal half layer oracle, obtaining an almost-quadratic time algorithm for the disjoint-paths problem via \Cref{thm: quadratic time hypergraph perfect matching}.
In \Cref{sec: almost linear}, we implement the approximate half layer oracle, obtaining an almost-linear time algorithm via \Cref{lem: fast perfect matching}.

\subsection{Demand-Path Hypergraphs of Expanders Satisfy Strong Haxell}\label{sec: proving-strong-haxell-condition}

The goal of this subsection is to prove demand-path hypergraphs of expanders satisfy strong Haxell's condition, formalized below.

\begin{lemma}\label{lem: pair connectivity has haxell}
    Let $G$ be an $n$-vertex graph with conductance $\phi>0$ and minimum degree $\delta>0$. Let $D$ be a set of demand pairs, with any vertex belonging to at most $k$ pairs. Then, the demand-path hypergraph $H$ of $G,D$ and maximum  length $\lfloor 18 \log (n)/\phi\rfloor$ has the $\frac{\phi \delta}{32k}$-strong Haxell condition. %
\end{lemma}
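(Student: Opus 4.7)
The plan is to establish the $\tfrac{\phi\delta}{32k}$-strong Haxell condition by contradiction via an expander-pruning argument. Suppose for contradiction there exist $S \subseteq A$ and $T \subseteq B$, viewed as a set of edges in $G$, with $|T| < \tfrac{\phi\delta|S|}{32k}$ hitting every hyperedge of $H$ incident to $S$. By the construction of the demand-path hypergraph, this means that in $G \setminus T$, every demand pair $(s_i, t_i)$ with $a_{s_i, t_i} \in S$ has distance strictly greater than $r = \lfloor 18 \log n/\phi \rfloor$.

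The key tool is an expander-pruning lemma (in the spirit of Saranurak--Wang): given the $\phi$-expander $G$ and the deletion set $T$, there exists $R \subseteq V$ with $\vol_G(R) \le 8|T|/\phi$ such that $G[V \setminus R] \setminus T$ still has conductance at least $\phi/6$. A standard BFS ball-growing analysis on this pruned graph---the ball volume grows by factor $(1 + \phi/6)$ per layer until reaching $\vol_G(V)/2 \le n^2/2$---then bounds its diameter by at most $r = 18 \log n/\phi$, using the careful estimate $\log_2(1 + \phi/6) \ge (\phi/6)/\ln 2 \cdot (1 - \phi/12)$ for $\phi \in (0,1]$.

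With this diameter bound, any demand pair in $S$ with both endpoints in $V \setminus R$ is connected by a path of length at most $r$ in $G \setminus T$, contradicting our assumption. Hence every demand in $S$ has at least one endpoint in $R$. Since each vertex appears in at most $k$ demand pairs and each vertex of $R$ has degree at least $\delta$, we obtain $|S| \le k|R| \le k \vol_G(R)/\delta \le 8k|T|/(\phi\delta)$. Substituting $|T| < \tfrac{\phi\delta|S|}{32k}$ yields $|S| < |S|/4$, the required contradiction.

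The main technical obstacle is tight constant tracking in the diameter bound: the standard expander-pruning constant $\phi/6$ together with the inherent $4\ln 2 \approx 2.77$ factor from volume doubling just fits within the target $r = 18\log n/\phi$ (the tightest case being $\phi$ close to $1$). Accordingly, the constants $18$ and $32$ appearing in the lemma seem tuned to the standard Saranurak--Wang pruning bounds; a tighter custom pruning (e.g., with post-conductance $\phi/3$) would yield the result with considerably more slack but is unnecessary here.
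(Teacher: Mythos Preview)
Your proposal is correct and follows essentially the same route as the paper: contradict a small hitting set $T$ by applying Saranurak--Wang expander pruning to $G\setminus T$, obtain a $\phi/6$-conductance core $V\setminus R$ with $\vol_G(R)\le 8|T|/\phi$, use the resulting diameter bound $\le 18\log(n)/\phi$ to force every demand in $S$ to have an endpoint in $R$, and derive a counting contradiction.

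The one organizational difference worth noting: the paper first passes to a vertex-disjoint sub-collection $C\subseteq S$ of size $\ge |S|/(2k)$ (their Fact~5.3), then argues $|R|<|C|/2$ so that some pair in $C$ survives entirely in the core. You instead skip this extraction and directly bound $|S|\le k|R|$ from the ``each vertex in at most $k$ pairs'' hypothesis. Your route is slightly cleaner and yields the same constant $32$; the paper's detour through vertex-disjoint pairs is not actually needed here, though it does isolate a reusable helper lemma (their Lemma~5.4) phrased for vertex-disjoint demands.
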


It is easier to prove \Cref{lem: pair connectivity has haxell} while working with vertex-disjoint demand pairs. We start with the following simple fact.

\begin{fact}
\label{fact: prune set}
    For a set of demand pairs $D = \{(s_i,t_i)\}$ such that each vertex appears in at most $k$ pairs, there exists a subset $C\subseteq D$ of size at least $|D|/2k$ consisting of vertex disjoint pairs. 
\end{fact}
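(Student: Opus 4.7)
The plan is to build a conflict (auxiliary) graph $H'$ on vertex set $D$, where two demand pairs are adjacent iff they share a vertex of $G$, and then extract a large independent set in $H'$ via a greedy argument. An independent set in $H'$ is exactly a vertex-disjoint subfamily of $D$, so a lower bound on its size gives the claim.

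First, I would bound the maximum degree in $H'$. Fix a pair $(s_i,t_i)\in D$. Any pair adjacent to it in $H'$ must contain $s_i$ or $t_i$. By the hypothesis on $D$, each of $s_i$ and $t_i$ lies in at most $k$ pairs from $D$, so it lies in at most $k-1$ pairs from $D$ other than $(s_i,t_i)$. Hence the degree of $(s_i,t_i)$ in $H'$ is at most $2(k-1) \le 2k-2$.

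Next, I would apply the standard greedy independent set procedure: repeatedly pick any remaining vertex of $H'$, add it to $C$, and remove it together with all its $H'$-neighbors. Each iteration removes at most $1+(2k-2) = 2k-1$ vertices from $H'$ and contributes one element to $C$. Since all of $D$ is eventually removed, this produces $|C|\ge |D|/(2k-1) \ge |D|/(2k)$ vertex-disjoint pairs, as required.

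There is no real obstacle here; the only tiny point to be careful about is to account for both endpoints $s_i$ and $t_i$ when bounding the degree in $H'$, which yields the factor $2$ in the denominator and matches the bound $|D|/(2k)$ stated in the fact.
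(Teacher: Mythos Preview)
Your proof is correct and essentially identical to the paper's: both observe that each demand pair shares a vertex with at most $2(k-1)$ other pairs and then greedily select pairs, yielding $|C|\ge |D|/(2k-1)\ge |D|/(2k)$. The only difference is that you phrase the argument explicitly in terms of an independent set in a conflict graph, while the paper states the greedy selection directly.
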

\begin{proof}
    Initially, we set $C = \emptyset$. For each pair in $D$, if it does not intersect any pair already in $C$, we add it to $C$. As each pair in $D$ intersects at most $2(k-1)$ other pairs in $D$, we have $|C|\ge |D|/2k$.
\end{proof}

The helper lemma below shows that, even after deleting a small set of edges, some demand pairs in an expander are still close to each other.

\begin{lemma}
\label{lem: non concurrent disjoint paths}%
Let $G$ be an $n$-vertex graph with conductance $\phi>0$ and minimum degree $\delta>0$.
Fix a set of vertex-disjoint demand pairs $C$. For any set of edges $F$ of size at most $\phi \delta |C|/16$, 
there exists a path of length at most $18\log(n) /\phi$ 
in $G\setminus F$ connecting some pair in $C$.
\end{lemma}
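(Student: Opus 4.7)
The plan is to argue by contradiction: I will assume $d_{G\setminus F}(s_i,t_i) > r := \lfloor 18\log(n)/\phi\rfloor$ for every pair $(s_i,t_i)\in C$, and derive a contradiction from the conductance bound combined with the smallness of $|F|$.

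\textbf{Step 1 (Ball-growing in $G\setminus F$).} I would first establish the following growth statement: for any $U\subseteq V$ with $\vol(U)\ge 16|F|/\phi$, the BFS ball $B_j := B_j^{G\setminus F}(U)$ satisfies $\vol(B_j)\ge(1+\phi/2)\vol(B_{j-1})$ whenever $\vol(B_{j-1})\le\vol(V)/2$. The verification is direct:
\[
\vol(B_j)-\vol(B_{j-1}) \;\ge\; |\partial_{G\setminus F}(B_{j-1})| \;\ge\; |\partial_G(B_{j-1})|-|F| \;\ge\; \phi\,\vol(B_{j-1})-|F| \;\ge\; \tfrac{\phi}{2}\vol(B_{j-1}),
\]
using $\vol(B_{j-1})\ge\vol(U)\ge 16|F|/\phi$, so that $|F|\le \phi\vol(B_{j-1})/16$.

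\textbf{Step 2 (Global coverage).} Using $|F|\le\phi\delta|C|/16$, the source set $S := \{s_1,\ldots,s_{|C|}\}$ satisfies $\vol(S)\ge|C|\delta\ge 16|F|/\phi$, so Step 1 applies with $U=S$. Iterating, $\vol(B_{j})$ grows by a factor of $(1+\phi/2)$ at each step, and since $(1+\phi/2)^{r/2}=(1+\phi/2)^{9\log n/\phi}$ exceeds $n^{\Omega(1)}\cdot \vol(V)/\vol(S)$, I get $B_{r/2}^{G\setminus F}(S)=V$. By the same argument applied to $T:=\{t_1,\ldots,t_{|C|}\}$, I also obtain $B_{r/2}^{G\setminus F}(T)=V$. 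Hence each $t_i$ is within $G\setminus F$-distance $r/2$ of some source $s_{f(i)}$, and each $s_j$ is within $r/2$ of some sink $t_{g(j)}$.

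\textbf{Step 3 (Per-pair conclusion).} If $f(i)=i$ for some $i$, then $d_{G\setminus F}(s_i,t_i)\le r/2<r$, contradicting our assumption and finishing the proof. I would handle the remaining ``derangement'' case, in which $f(i)\ne i$ and $g(i)\ne i$ for every $i$, via a double-counting argument. Specifically, I'd look at the bipartite ``closeness graph'' on $S\cup T$ whose edges are the pairs $(s_i,t_j)$ with $d_{G\setminus F}(s_i,t_j)\le r/2$; under the assumption this graph has no diagonal edge, so it contains at most $|C|^2-|C|$ edges, and I would lower-bound its number of edges using the cut inequalities obtained from applying the conductance bound to the individual balls $B_{r/2}(s_i)$ (carefully combining the contribution of each $F$-edge across pairs).

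\textbf{Main obstacle.} The crux is Step 3: translating the \emph{aggregate} expansion established in Step 2 into a \emph{per-pair} connection. The tension is that BFS from a single $s_i$ may not grow geometrically, since $\deg(s_i)$ can be much smaller than the threshold $16|F|/\phi=\delta|C|$ when $|C|$ is large. So one cannot hope for $B_{r/2}^{G\setminus F}(s_i)\supseteq\{t_i\}$ individually; only the collective balls cover $V$. The resolution must leverage the vertex-disjointness of pairs together with a careful charging argument that prevents every $F$-edge from being ``blamed'' too many times across pairs. The constants $18$ and $1/16$ in the statement look tuned precisely to this bookkeeping (giving a factor-of-$16$ slack between $\vol(S)$ and $|F|/\phi$, and a factor-of-$(1+\phi/2)$ geometric growth over $9\log n/\phi$ steps).
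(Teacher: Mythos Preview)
Your proposal has a genuine gap. Step~2 already overshoots: the ball-growing inequality from Step~1 only applies while $\vol(B_{j-1})\le\vol(V)/2$, and once you cross the halfway point the complement can stall at volume $\le 2|F|/\phi$ (think of $F$ cutting off a small island). So you cannot conclude $B_{r/2}^{G\setminus F}(S)=V$; at best you get $\vol(B_{r'}(S))>\vol(V)/2$ for some $r'\le r/2$, and the same for $T$, hence the two balls intersect. That yields a short $s_i$--$t_j$ path for \emph{some} $i,j$, but not the per-pair statement ``each $t_i$ is within $r/2$ of some $s_{f(i)}$'' that your Step~3 takes as input.

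More importantly, Step~3 is where the real content lies, and it is not a proof. You correctly identify the obstacle --- collective expansion does not give per-pair expansion since a single $\{s_i\}$ may have $\vol(\{s_i\})<16|F|/\phi$ --- but the proposed ``closeness graph'' double-counting is left as a hope. There is no evident lower bound on the number of edges of that graph coming from conductance: the individual balls $B_{r/2}(s_i)$ can overlap arbitrarily, so a single $F$-edge can be ``blamed'' by many pairs, and the charging you allude to does not go through without new ideas. The constants $18$ and $1/16$ are not doing the bookkeeping you suggest; they are tuned to a different argument.

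The paper sidesteps all of this with one black box: expander pruning (Theorem~1.3 of [SW19]). Deleting $F$ yields a set $P$ with $\vol(P)\le 8|F|/\phi<\delta|C|/2$, hence $|P|<|C|/2$, while $G'[V\setminus P]$ has conductance $\phi/6$. Since the $2|C|$ endpoints are distinct, some pair $(s,t)$ lies entirely in $V\setminus P$, and the standard conductance-to-diameter bound finishes. The point is that pruning converts the \emph{aggregate} budget on $|F|$ into the removal of fewer than $|C|/2$ \emph{vertices}, which is exactly what lets vertex-disjointness of the pairs bite. Your ball-growing approach never makes that conversion.
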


The proof of the above lemma follows immediately from the known fact about expander pruning:

\begin{lemma}[Theorem 1.3 of \cite{SW19}]\label{lem: expander pruning}
Let $G$ be an $n$-vertex graph with conductance $\phi$. For any edge set $F$, let $G' = G\setminus F$. There exists a vertex set $P$ such that  $\sum_{u\in P}\deg(u) \le 8|F|/\phi$ and $G'[V \setminus P]$ has conductance at least $\phi/6$. 
\end{lemma}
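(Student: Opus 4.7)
The plan is to construct $P$ by a greedy pruning procedure and to bound $\vol_G(P)$ via an amortized charging argument against the edges of $F$. Starting from $P_0 := \emptyset$, at iteration $i$ I would test whether $G'[V\setminus P_{i-1}]$ already has conductance at least $\phi/6$; if so, halt and output $P := P_{i-1}$, and otherwise fix a witnessing subset $S_i \subseteq V\setminus P_{i-1}$ on the smaller side of the sparse cut, which therefore satisfies
\[
|E_{G'}(S_i,\, (V\setminus P_{i-1})\setminus S_i)| \;<\; \tfrac{\phi}{6}\,\vol_{G'[V\setminus P_{i-1}]}(S_i) \;\le\; \tfrac{\phi}{6}\,\vol_G(S_i),
\]
and then set $P_i := P_{i-1}\cup S_i$. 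The conductance guarantee at termination is immediate by construction, and since each step strictly grows $P$ the procedure terminates within at most $n$ iterations (which suffices for an existential statement). The real work is in establishing the volume bound $\vol_G(P) \le 8|F|/\phi$.

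The heart of the argument is the inequality
\[
|E_{G'}(P_i,\,V\setminus P_i)| \;\le\; \tfrac{\phi}{6}\,\vol_G(P_i) \qquad \text{for every } i.
\]
To see this, take any edge $uv \in E_{G'}(P_i, V\setminus P_i)$ with $u \in P_i$. Since the $S_j$'s partition $P_i$, the endpoint $u$ lies in a unique $S_j$, and the other endpoint $v$ lies outside $P_i \supseteq P_{j-1}\cup S_j$ and hence in $(V\setminus P_{j-1})\setminus S_j$. So every such edge is charged exactly once to $E_{G'}(S_j,(V\setminus P_{j-1})\setminus S_j)$, and summing the per-step bound over the disjoint $S_j$'s yields the inequality. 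Combining this with the global conductance of $G$ gives
\[
\phi\,\vol_G(P) \;\le\; |\partial_G(P)| \;\le\; |E_{G'}(P,V\setminus P)| + |F| \;\le\; \tfrac{\phi}{6}\,\vol_G(P) + |F|,
\]
and rearranging delivers $\vol_G(P) \le \tfrac{6}{5\phi}|F| < \tfrac{8}{\phi}|F|$, as required.

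The main obstacle is the side condition $\vol_G(P) \le \vol_G(V)/2$ needed to apply $G$'s conductance to $P$ in the last display: the greedy pruning does not automatically respect it. I would handle this by maintaining the inductive invariant $\vol_G(P_i)\le 8|F|/\phi$, which the calculation above preserves whenever $\vol_G(P_i)\le \vol_G(V)/2$; in the residual regime $8|F|/\phi \ge \vol_G(V)/2$ the claimed bound is vacuous and one may simply take $P := V$. A minor bookkeeping item is that the ``smaller side'' must always be measured with respect to the residual subgraph $G'[V\setminus P_{i-1}]$, as already indicated above, which is exactly what the standard definition of conductance of $G'[V\setminus P_{i-1}]$ supplies.
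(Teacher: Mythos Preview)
The paper does not prove this lemma; it is imported as Theorem~1.3 of \cite{SW19} and used as a black box in the proof of \Cref{lem: non concurrent disjoint paths}. So there is no in-paper argument to compare against. Your iterative sparse-cut pruning is the standard route to the existential statement, and your charging argument for the key inequality $|E_{G'}(P_i,V\setminus P_i)|\le \tfrac{\phi}{6}\vol_G(P_i)$ is correct.

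The remaining gap is exactly where you flag it, but your proposed fix is circular as written. To deduce $\vol_G(P_i)\le 8|F|/\phi$ you invoke the displayed calculation with $P=P_i$, and that calculation already needs $\vol_G(P_i)\le \vol_G(V)/2$; the inductive hypothesis only controls $P_{i-1}$, and a single step can add an $S_i$ with large $\vol_G(S_i)$ since ``smaller side'' is measured in $G'[V\setminus P_{i-1}]$, not in $G$. (Separately, the fallback $P:=V$ certifies $\vol_G(P)\le 8|F|/\phi$ only when $\vol_G(V)\le 8|F|/\phi$, not when $8|F|/\phi\ge \vol_G(V)/2$, so your residual-regime threshold is off by a factor of two.) A clean completion is to suppose a first crossing $\vol_G(P_i)>\vol_G(V)/2$ and derive a contradiction: applying the conductance of $G$ to the complement $V\setminus P_i$ together with your key inequality forces $\vol_G(P_i)\ge \tfrac{6}{7}\vol_G(V)-O(|F|/\phi)$, while the smaller-side bound $\vol_{G'[V\setminus P_{i-1}]}(S_i)\le \tfrac12\vol_G(V\setminus P_{i-1})$, the estimate $\vol_G(S_i)-\vol_{G'[V\setminus P_{i-1}]}(S_i)\le |\partial_G(P_{i-1})|+2|F|$, and the inductive bound on $P_{i-1}$ give $\vol_G(P_i)\le \tfrac12\vol_G(V)+O(|F|/\phi)$. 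These are incompatible in the non-trivial regime, closing the gap.
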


\begin{proof}[Proof of \Cref{lem: non concurrent disjoint paths}]
Let $G' = G\setminus F$ and $P$ be the set from \Cref{lem: expander pruning}. 
We have $\delta|P|\le 8|F|/\phi < \delta |C|/2$ because $|F| < \phi \delta |C|/16$. Since $|P|< |C|/2$, there is a demand pair $(s,t)$ from $C$ where $s,t \in G'[V \setminus P]$. Since $G'[V \setminus P]$ has conductance $\phi/6$, its diameter is at most $18\log(n)/\phi$. So the distance between $s$ and $t$ in $G\setminus F$ is also at most $18\log(n)/\phi$.
\end{proof}

Given both helpers, \Cref{fact: prune set} and \Cref{lem: non concurrent disjoint paths}, we are ready to conclude \Cref{lem: pair connectivity has haxell}.

\begin{proof}[Proof of \Cref{lem: pair connectivity has haxell}]
     Let $\varphi = \frac{\phi \delta}{32k}.$ 
     Let $D'\subseteq D$ be a subset of our demand pairs.   
     We now show that any hitting set\footnote{A hitting set for a subset $E_S$ of edges in a bipartite hypergraph $(A,B,E)$ is a set of vertices in $B$ which intersects every edge in $E_S$. In particular, $\tau(E_S)$, equals the minimum size of such a hitting set.}$T\subseteq B$\thatchaphol{we never defined a hitting set yet. (Do it here or in the intro) It might be good to use the notation $\tau$ in this proof once. So that readers can connect it to the definition.}
     \matija{did something along these lines} in $H$ for all edges incident to the set $S = \{ a_{s,t}\mid (s,t)\in D'\}$ has size at least $\varphi|D'|$\shangen{added something in this sentence to clarify a bit}, implying $\tau(E_S) \ge \varphi|S|$. Suppose for contradiction, that $|T|< \varphi |D'|$. 
     Let $F$ be the set of all edges in $G$ corresponding to the elements of $T$. Let $C\subseteq D'$ be a subset of size at least $|D'|/2k$ consisting of vertex disjoint pairs, which exists by \Cref{fact: prune set}. Now, we have $|F|=|T|< \varphi|D'| < 2k\varphi  \cdot |C| = \phi \delta |C|/16$. By \Cref{lem: non concurrent disjoint paths}, there exists a path in $G\setminus F$ of length at most $18\log (n)/\phi$ connecting some pair in $C$. Since this path is disjoint from $F$, it represents an edge in $E_S$ not hit by $T$, a contradiction. %
\end{proof}

\subsection{Disjoint Paths via Maximal Half Layers}

\label{sec: almost quadratic time disjoint paths}

In this subsection, we prove the first case of \Cref{thm: routing conductance}, which establishes an almost-quadratic time algorithm under the assumption that the cube of the conductance times the minimum degree is $\poly\log n$. The following is a restatement of \Cref{thm: routing conductance} Part I.

\begin{theorem}
\label{thm: part I routing conductance}
    Let $G$ be an $n$-vertex $m$-edge graph with  conductance $\phi$ and minimum degree $\delta$. Let $k\ge 1$ be an integer such that $\phi^3\delta\ge (35\log n)^3 k$. Given a set of demand pairs $\{(s_i,t_i)\}$ such that each vertex appears in at most $k$ pairs, there exists a deterministic algorithm which computes in time $mn^{1+o(1)}\min\{k, \phi^{-1}\}$ a set of edge-disjoint paths from $s_i$ to $t_i,$ for every $i$, each of length at most $r=18\log (n)/\phi$.
\end{theorem}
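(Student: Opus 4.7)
The plan is to reduce Theorem~\ref{thm: part I routing conductance} to the hypergraph perfect matching framework of Theorem~\ref{thm: quadratic time hypergraph perfect matching}. First, I would set $r = \lfloor 18\log(n)/\phi \rfloor$ and let $H$ denote the demand-path hypergraph (Definition~\ref{def: pair-connectivity}) built from $G$, $D$, and maximum path length $r$, so that a perfect matching of $H$ corresponds exactly to the desired family of edge-disjoint paths. By Lemma~\ref{lem: pair connectivity has haxell}, $H$ is $r$-bounded and satisfies the $\varphi$-strong Haxell condition with $\varphi = \phi\delta/(32k)$. A short calculation shows the hypothesis $\phi^3\delta \ge (35\log n)^3 k$ is strong enough to give $\varphi \ge d(n)r^2$ with $d(n) = \Theta(\log n) \ge 4$: since $r^2 \le 324\log^2(n)/\phi^2$, we get
\[
\varphi/r^2 \;\ge\; \phi^3\delta/(32\cdot 324\cdot k\log^2 n) \;\ge\; \frac{35^3}{32 \cdot 324}\log n \;>\; 4\log n.
\]

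Second, I would apply Theorem~\ref{thm: quadratic time hypergraph perfect matching} with this $d(n)$; the only remaining step is to implement a maximal half-layer oracle on $H$. Since $H$ can contain up to $n^{\Theta(r)}$ hyperedges it cannot be materialized, so the oracle must run on $G$ directly. I would follow the greedy template of Proposition~\ref{prop:greedy half layer}: given a state $(A', B', M)$ and degree cap $\Delta = d(n)$, initialize $Z=\emptyset$ and iterate over each active demand vertex $a_{s,t} \in A'$; for each one, repeatedly run a truncated BFS of depth $r$ from $s$ in the residual graph $G \setminus B'$ (viewing $B'$ as a set of $G$-edges) until either $\Delta$ paths have been added or no $s$-$t$ path of length $\le r$ survives. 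Each successful BFS yields a path $P$, which I add to $Z$; I then extend $B'$ by the edges of $P$ together with the edges of the at-most-one hyperedge $e'\in M$ sharing an edge with $P$. The three conditions of Definition~\ref{def: half layer} are enforced by the update rule, and maximality follows because any rank-at-most-$r$ hyperedge not added either uses a $B'$-edge, would overflow the $\Delta$-cap on its $A$-endpoint, or would collide in $B$ with some already-selected $e\in Z$. Each BFS costs $O(m)$ and each path-update costs $O(r)$, so the oracle cost is $T^* = O(|A'|\Delta m) = \tilde O(nkm)$ using $|A'| \le nk/2$ and $\Delta = O(\log n)$; Theorem~\ref{thm: quadratic time hypergraph perfect matching} then yields total time $T^* \cdot n^{1/\Omega(\sqrt{\log\log n})} = mn^{1+o(1)}k$, which already matches the claimed bound in the regime $k \le \phi^{-1}$.

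The main obstacle I foresee is obtaining the sharper $T^* = \tilde O(mn\phi^{-1})$ bound in the dense-demand regime $\phi^{-1} < k$. The natural route is an amortized argument exploiting expansion: in a $\phi$-expander the total number of short edge-disjoint paths the oracle can produce within a single call ought to be $O(n/\phi)$---each added path blocks a constant fraction of the remaining short routes, a phenomenon that is quantitatively captured by the expander-pruning estimate underlying Lemma~\ref{lem: non concurrent disjoint paths}---so the number of successful BFSs within one call is bounded by this quantity, and the number of failed BFSs can be charged in a similar way against the $\Delta$-cap on $A'$. Fusing this expansion-based counting with the BFS-based greedy analysis is the step I expect to require the most care; the rest is a clean application of the framework set up in Section~\ref{sec: half layer}.
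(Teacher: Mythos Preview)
Your setup in the first two paragraphs is correct and matches the paper: the Haxell-condition check is exactly Lemma~\ref{lem: pair connectivity has haxell}, and your greedy BFS oracle is essentially Lemma~\ref{lem:implement half layer BFS}, giving the $mn^{1+o(1)}k$ bound. (One small slip: a newly added path $P$ may intersect \emph{several} matching hyperedges $e'\in M$, not ``at-most-one''; condition~3 of Definition~\ref{def: half layer} bounds, for each $e'\in M$, the number of $Z$-edges touching it, not the other way around. Your update rule must absorb $B(e')$ for every such $e'$, as in Proposition~\ref{prop:greedy half layer}.)

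The genuine gap is in your plan for the $\phi^{-1}$ regime. The amortization you sketch does not hold: the number of edge-disjoint paths a single half-layer call can return is \emph{not} $O(n/\phi)$ in general. For a concrete picture, take $G$ with $\phi=\Theta(1)$ and $\delta=\Theta(n)$, so the hypothesis allows $k=\Theta(n/\log^3 n)$; then $|A'|$ can be $\Theta(n^2/\log^3 n)$ and the half layer can contain $|A'|\Delta=\Theta(n^2/\log^2 n)$ paths, while $n/\phi=\Theta(n)$. The BFS-per-path oracle would then cost $\Theta(n^4/\log^2 n)$, far above the target $\tilde O(mn)=\tilde O(n^3)$. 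Expander pruning says nothing useful here because the bound you need is on the \emph{algorithm's work}, not on the existence of short paths.

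The paper's route to $T^*=O(mnr)$ (Lemma~\ref{lem:implement half layer Dinitz}) uses no expansion at all; it is a Dinitz-style blocking-flow argument. Group the active demands by their source vertex. For a fixed source $s$, attach a super-sink $t'$ with capacity-$\Delta$ arcs from every active target, and repeatedly compute a blocking set of shortest $s$--$t'$ paths in the residual graph (one BFS plus a retreating DFS, $O(m)$ total). Each blocking phase strictly increases the $s$--$t'$ distance, so after at most $r$ phases no length-$\le r$ path remains from $s$. Summing over at most $n$ sources gives $O(n\cdot r\cdot m)=O(mnr)=\tilde O(mn/\phi)$. Combining this with your BFS bound yields $T^*=O(\min\{mnk\Delta,\,mnr\})$, and plugging into Theorem~\ref{thm: quadratic time hypergraph perfect matching} with $d(n)=\Theta(\log n)$ gives the stated $mn^{1+o(1)}\min\{k,\phi^{-1}\}$.
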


We will prove this theorem by applying \Cref{thm: quadratic time hypergraph perfect matching}, which involves finding a maximal half layer in the demand-path hypergraph.

\paragraph{Finding a Maximal Half Layer.}
Consider the demand-path hypergraph $H$ with respect to $G$, our given set of demand pairs, and the maximum allowed length $r := \lfloor 18\log (n)/\phi\rfloor$.
We first translate the problem of finding a maximal half layer in $H$ with degree parameter $\Delta := 4 \log n$
to the corresponding problem on the original graph $G$. 
Given parameters $(A',B',M, \Delta)$ from \Cref{def: half layer}, let $F$ be the union of edges in all paths that correspond to the hyperedges in the partial matching $M$ together with all edges corresponding to elements in $B'$. 
Observe that finding a maximal half layer is translated to the following:

\begin{problem}\label{prob: maximal in G}
    Finding a maximal set of disjoint paths connecting demand pairs from $A'$ in the graph $G\setminus F$, with the constraint that each pair is connected by at most $\Delta$ paths. 
\end{problem}

We show two different solutions for \Cref{prob: maximal in G}. The first approach is to be greedy using BFS:

\begin{lemma}\label{lem:implement half layer BFS}
    \Cref{prob: maximal in G} can be solved in $O(mnk\Delta)$ time.%
\end{lemma}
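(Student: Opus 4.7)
The plan is to solve \Cref{prob: maximal in G} by a straightforward greedy procedure that repeatedly augments the chosen path collection using truncated BFS. Initialize $\mathcal{P} \gets \emptyset$ and a counter $c_i \gets 0$ for each demand pair $(s_i,t_i) \in A'$. Iterate through the pairs in $A'$ in any order. For each pair $(s_i,t_i)$, repeat the following as long as $c_i < \Delta$: run BFS from $s_i$ in the current residual graph $G \setminus (F \cup E(\mathcal{P}))$, truncated at depth $r$, where $E(\mathcal{P})$ denotes the set of edges used by the paths already in $\mathcal{P}$. If $t_i$ is reached, extract a shortest $s_i$-$t_i$ path $P$, add it to $\mathcal{P}$, and increment $c_i$; otherwise, advance to the next pair.

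The main verification needed is that the resulting $\mathcal{P}$ is \emph{maximal}, i.e., no further $s_j$-$t_j$ path $P^*$ of length at most $r$ exists that is edge-disjoint from $F \cup E(\mathcal{P})$ and respects the $\Delta$-cap. Suppose for contradiction that such a $P^*$ exists. Then $c_j < \Delta$, so the algorithm must have exited the inner loop for $(s_j,t_j)$ because a truncated BFS at that moment failed to find any short $s_j$-$t_j$ path. Since $E(\mathcal{P})$ only grows over the course of the algorithm, the residual graph can only lose edges, so any path present at the end was also present at that moment, contradicting the failure of BFS. This monotonicity is the one non-routine ingredient, but it falls out of the fact that the algorithm never restores deleted edges.

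For the runtime, note that $|A'| \le \lfloor nk/2 \rfloor = O(nk)$ because every vertex lies in at most $k$ demand pairs. Each pair triggers at most $\Delta + 1$ truncated BFS invocations: at most $\Delta$ successful ones that add a path, and a single final unsuccessful one. Each BFS and its associated path extraction runs in $O(m)$ time, and incremental edge deletions between BFS calls add only $O(r)$ per path, which is dominated by the subsequent $O(m)$ BFS. Summing yields a total runtime of $O(nk \cdot \Delta \cdot m) = O(mnk\Delta)$, as claimed. I do not foresee any technical obstacles beyond this bookkeeping; the only conceptual subtlety is the monotonicity argument for maximality above.
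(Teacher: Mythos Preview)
Your proposal is correct and follows essentially the same approach as the paper: iterate over the demand pairs in $A'$, for each pair run truncated BFS up to $\Delta$ times and remove the found path, bounding the number of pairs by $O(nk)$ and each BFS by $O(m)$. Your write-up is in fact more careful than the paper's, since you spell out the monotonicity argument for maximality (the paper leaves this implicit); the only slip is that the worst case is $\Delta$ BFS calls per pair, not $\Delta+1$ (after $\Delta$ successes the loop condition $c_i<\Delta$ already fails), but this does not affect the asymptotic bound.
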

\begin{proof}
    Iterate through each demand pair $(s_i, t_i)$ and repeatedly run a BFS from $s_i$ for at most $\Delta$ times.
    As long as there exists a path reaching $t_i$ of length at most $r$, remove the path and add the corresponding hyperedge to $Z'$.
    Since there are at most $nk$ demand pairs, and we repeat at most $\Delta$ times per demand pair, the total running time is $O(mnk\Delta)$.
\end{proof}
The second approach is also simple. Since its description is similar to Dinitz's blocking flow algorithm~\cite{Dinitz06}, we defer the proof to \Cref{sec: proof dinitz}.
\begin{lemma}\label{lem:implement half layer Dinitz}
    \Cref{prob: maximal in G} can also be solved in $O(mnr)$ time.
\end{lemma}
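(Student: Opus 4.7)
The plan is to implement \Cref{prob: maximal in G} via a length-bounded Dinitz-style max-flow, grouping demand pairs by their source vertex. Although $|A'|$ can be as large as $\Theta(nk)$, the number of \emph{distinct} source vertices appearing in $A'$ is at most $n$; running one Dinitz call per source rather than per pair will save a factor of $k$ over \Cref{lem:implement half layer BFS} and yield the claimed $O(mnr)$ bound.

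For each source vertex $s$ appearing in some demand pair of $A'$, let $t_{i_1},\dots,t_{i_\ell}$ be the matched sinks. I would add an auxiliary super-sink $T_s$ together with $\Delta$ unit-capacity parallel edges from each $t_{i_j}$ to $T_s$ (encoding the per-pair cap $\Delta$), and then run Dinitz's length-bounded max-flow from $s$ to $T_s$ in the current residual graph $G\setminus F$ with length bound $r$. A Dinitz phase is a BFS that constructs the layered DAG plus a blocking flow computed by the advance/retreat DFS; on a unit-capacity layered DAG this runs in $O(m)$ per phase. Because the shortest $s$-to-$T_s$ distance strictly increases after each phase, capping the length at $r$ gives at most $r$ phases per source, hence $O(mr)$ per source and $O(mnr)$ total. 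Every augmenting path is decoded into an $s$-$t_{i_j}$ path in $G$ by deleting its final super-sink hop, and is added to $Z$.

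The main subtlety is Condition 3 of \Cref{def: half layer}: for each $e'\in M$, at most one edge of $Z$ may share a $B$-vertex with $e'$. Following the idea used in \Cref{prop:greedy half layer}, I would precompute the unique $M$-path containing each edge of $G$ (if any), and whenever the blocking-flow DFS emits a new path $P$, scan $P$, identify the $M$-paths it touches, and insert all their edges into $F$. Since $F$ only grows, previously emitted paths stay valid, and the monotonicity of shortest $s$-to-$T_s$ distances between phases is preserved. Maximality follows because once Dinitz terminates for source $s$ the current residual graph has no $s$-$t_{i_j}$ path of length $\le r$, and later processing of other sources only removes further edges and hence cannot resurrect such a path. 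The main obstacle I anticipate is making these mid-phase $F$-updates mesh with the blocking-flow DFS without breaking its amortized $O(m)$-per-phase bound; the cleanest fix is to treat each $M$-path as an indivisible unit that, upon first being touched, has all of its edges marked saturated at once, so that the usual retreat step encounters only dead edges and proceeds unchanged.
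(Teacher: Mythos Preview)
Your high-level plan matches the paper's: group demand pairs by source vertex, attach a super-sink with capacity-$\Delta$ arcs from each destination, and run at most $r$ Dinitz-style phases per source for $O(mnr)$ total.

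However, there is a genuine gap. You invoke standard Dinitz max-flow (with a residual graph and reverse arcs) and then ``decode'' each augmenting path into a path in $G$. But augmenting paths from different phases can use reverse arcs and therefore traverse the \emph{same} undirected edge as an earlier augmenting path, just with opposite orientation; adding both to $Z$ violates edge-disjointness. Concretely, on the edge set $\{s,a\},\{a,b\},\{b,t\},\{s,c\},\{c,b\},\{a,d\},\{d,t\}$, phase~1 finds $s\text{-}a\text{-}b\text{-}t$ and phase~2 augments along $s\text{-}c\text{-}b\text{-}a\text{-}d\text{-}t$ via the reverse of $\{a,b\}$, so the two ``decoded'' paths share $\{a,b\}$. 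Decomposing the final flow instead does give edge-disjoint paths, but with no length guarantee. The paper sidesteps this by explicitly \emph{not} doing flow augmentation: each phase finds a blocking set of shortest $s$-$t'$ paths in the layered graph and \emph{deletes} those paths from the working graph --- no residual arcs are ever introduced. Since edge deletions can only increase BFS distances, the shortest $s$-$t'$ distance still strictly increases after each phase, so at most $r$ phases suffice and the output is a maximal collection of edge-disjoint paths of length at most $r$.

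Separately, your concern about Condition~3 of \Cref{def: half layer} is unnecessary here. In the translation preceding \Cref{prob: maximal in G}, the set $F$ already contains every edge of every $M$-path, so any path found in $G\setminus F$ is automatically disjoint from all $M$-paths; your ``scan $P$ and insert touched $M$-paths into $F$'' step is vacuous, and the anticipated difficulty of meshing mid-phase $F$-updates with the DFS does not arise.
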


\paragraph{Obtaining Disjoint Paths.}
Given an efficient algorithm for maximal half layer,  we conclude the proof of \Cref{thm: part I routing conductance}.

\begin{proof}[Proof of \Cref{thm: part I routing conductance}]
    By \Cref{lem: pair connectivity has haxell}, $H$ satisfies the $\varphi$-strong Haxell condition with with $\varphi = \frac{\phi \delta}{32k} \ge (4\log n)r^2$.
    Next, \Cref{lem:implement half layer BFS,lem:implement half layer Dinitz} show that, given any state and degree parameter $\Delta = 4\log n$, we can find a maximal half layer in our demand-path hypergraph $H$ with maximum allowed length $r$ in time $O(\min\{mnk\Delta, mnr \})$.  
    Since $\varphi \ge \Delta r^2$, we can apply  \Cref{thm: quadratic time hypergraph perfect matching} and obtain a perfect matching in $H$, corresponding to edge-disjoint paths of length $r$ that connect all demand pairs in $O(\min\{mnk\Delta, mnr \} n^{1/\Omega(\sqrt{\log \log  n})})=O(mn^{1+o(1)}\min\{k,\phi^{-1}\})$ time as desired.
    \qedhere

\end{proof}

\subsection{Disjoint Paths via Approximate Half Layers: Almost-Linear Time}
\label{sec: almost linear}

In this subsection, we aim to prove the second case of \Cref{thm: routing conductance}, which establishes an almost-linear time algorithm under the assumption that the cube of the conductance times the minimum degree is $n^{o(1)}$.

\begin{theorem}
\label{thm: part II routing conductance}
 Let $G$ be an $n$-vertex $m$-edge graph with  conductance $\phi$ and minimum degree $\delta$. Let $k \ge 1$ be an integer for which $\phi^3 \delta > n^{o(1)} k$. Given a set of demand pairs $\{(s_i,t_i)\}$ such that each vertex appears in at most $k$ pairs, there exists a deterministic algorithm, with runtime $m^{1+o(1)},$ computing a set of edge-disjoint paths from $s_i$ to $t_i$ one for every $i$, each of length at most $n^{o(1)}/\phi$.
\end{theorem}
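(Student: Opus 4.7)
The plan is to invoke \Cref{lem: fast perfect matching} on the demand-path hypergraph $H$ of $G$ (as in \Cref{def: pair-connectivity}), taking the overall rank bound to be $r = n^{o(1)}/\phi$ and the comparison rank inside the approximate half-layer to be $r' = \lfloor 18\log(n)/\phi\rfloor$. By \Cref{lem: pair connectivity has haxell} applied to the sub-hypergraph $H_{r'}$ of edges of rank at most $r'$, that sub-hypergraph satisfies the $\varphi$-strong Haxell condition with $\varphi = \phi\delta/(32k)$. Choosing $d(n) = n^{o(1)}$ and allowing an approximation factor $\alpha = n^{o(1)}$, the hypothesis $\varphi \ge 24\alpha\cdot d(n)\cdot r^2$ of \Cref{lem: fast perfect matching} reduces to $\phi^3 \delta \ge n^{o(1)} k$, which is exactly the assumption of \Cref{thm: part II routing conductance}. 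Since $d(n)$ can be taken to be any super-constant function, the runtime factor $n^{1/\Omega(\sqrt{\log d(n)})}$ from \Cref{lem: fast perfect matching} is $n^{o(1)}$.

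Under these parameter choices, \Cref{lem: fast perfect matching} produces a perfect matching of $H$, which by construction translates to the desired edge-disjoint paths of length at most $r = n^{o(1)}/\phi$, all in time $\hat T\cdot n^{o(1)}$, where $\hat T$ is the cost of a single $(r',\alpha)$-approximate half-layer oracle call. So everything reduces to implementing such an oracle with $\hat T = m^{1+o(1)}$.

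The principal obstacle, and the main piece of new work, is this oracle. As in \Cref{sec: almost quadratic time disjoint paths}, a call with state $(A',B',M)$ and degree parameter $\Delta = d(n)$ translates, on the graph $G\setminus F$ with $F$ being the union of edges corresponding to $B'$ together with all edges appearing on paths that represent hyperedges of $M$, to the following task: find an edge-disjoint collection $\mathcal{P}$ of paths between demand pairs in $A'$, each of length at most $r$, each demand used at most $\Delta$ times, such that $|\mathcal{P}|$ is within a factor $\alpha$ of the maximum achievable by such a collection when paths are additionally constrained to have length at most $r'$. To solve this we invoke the low-step multi-commodity flow algorithm of \cite{haeupler2024low}. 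We build a routing instance on $G\setminus F$ by attaching, for each demand pair $(s_i,t_i)$ corresponding to a vertex in $A'$, a super-source and super-sink connected to $s_i$ and $t_i$ by $\Delta$ parallel unit-capacity edges, and ask for a length-$r'$ unit-congestion multi-commodity flow of maximum value. A fractional witness of value equal to the true optimum exists (given by uniformly distributing any optimal length-$r'$ half layer), so their algorithm returns in time $m^{1+o(1)}$ a fractional flow of value within $n^{o(1)}$ of the optimum, using paths of length at most $r'\cdot n^{o(1)} \le r$ and total congestion at most $n^{o(1)}$.

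To convert this fractional flow into an integral half-layer, we decompose it into weighted paths via standard flow decomposition, scale each path's weight down by the $n^{o(1)}$ congestion bound so that edge weights are at most $1$, and then pick an integral subfamily greedily respecting both edge-disjointness and the per-demand cap $\Delta$; this loses at most another $n^{o(1)}$ factor in path count. Absorbing all of these losses into $\alpha = n^{o(1)}$ yields a valid $(r',\alpha)$-approximate half-layer produced in $\hat T = m^{1+o(1)}$ time. Combined with the Haxell verification and the reduction above, this gives the overall $m^{1+o(1)}$-time algorithm asserted by \Cref{thm: part II routing conductance}.
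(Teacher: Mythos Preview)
Your proposal is correct and follows essentially the same route as the paper: verify the strong Haxell condition on $H_{r'}$ via \Cref{lem: pair connectivity has haxell}, implement the $(r',\alpha)$-approximate half-layer oracle using the low-step multicommodity flow of \cite{haeupler2024low} followed by greedy rounding (this is the paper's \Cref{lem: approx disjoint paths}), and then invoke \Cref{lem: fast perfect matching}. Two cosmetic differences: the paper encodes the per-demand cap $\Delta$ directly through the integral demand in \Cref{lem: low step flow} rather than via super-source/sink gadgets, and that lemma already returns an \emph{integral} flow of value at least the length-$r'$ optimum, so no fractional decomposition is needed and the only $n^{o(1)}$ loss comes from the greedy edge-disjoint selection.
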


We will prove \Cref{thm: part II routing conductance} by applying \Cref{lem: fast perfect matching}, which involves finding approximate half layers on demand-path hypergraphs.
As mentioned in \Cref{sec:technical overview}, we use the multi-commodity flow algorithm so we start with some preliminaries below.
We note that in our applications we only work with graphs with unit edge lengths and capacities, so we will only define the multicommodity flow in this special case (see e.g.\ \cite{haeupler2024low} for a fully general definition).

\paragraph{Preliminaries on Multicommodity Flow.}
Let $G=(V,E)$ be a graph.
A \emph{(multicommodity) flow} $F$ in $G$ is a function that assigns each simple path $P$ in $G$ a flow value $F(P)\ge 0$. The \emph{value} of the flow is $\val(F)=\sum_{P} F(P)$. The \emph{support} of $F$, denoted by $\supp(F) := \{P: F(P) > 0\}$, is the set of flow-paths. The \emph{congestion} of $F$ on edge $e$ is $\conge_F(e) = F(e)$, where $F(e) = \sum_{P: e\in P} F(P)$ denotes the total flow value of all paths going through $e$. The \emph{congestion} of $F$ is $\conge_F = \max_{e\in E(G)} \conge_F(e)$. The \emph{length} of $F$, denoted by $\leng_F := \max_{P\in \supp(F)} |P|$, is the maximum length of the flow-paths. A \emph{demand} $D:V\times V\to \R_{\ge 0}$ assigns a value $D(u,v)\ge 0$ to each ordered pair of vertices. We say a flow $F$ \emph{routes/satisfies} a demand $D$ if for each $u,v\in V$, $D(u,v) = \sum_{P\textrm{ is a $(u,v)$-path}} F(P)$. The \emph{support} of $D$ is $\supp(D)=\{(u,v)\mid D(u,v)>0\}$. We say $D'$ is a \emph{subdemand} of $D$ if $D'$ is a demand and $D'(u,v)\le D(u,v)$ for all $u,v\in V$.
\\

We apply the low-step multicommodity flow algorithm from \cite{haeupler2024low}, which has an approximation slack on both the length and congestion. Since this oracle will be applied to a subgraph of $G$, we do not assume the underlying graph to be an expander. It is worth noting that even when $G$ is an expander, the state-of-the-art routing (flow) algorithms still incur an $n^{o(1)}$ approximation in either length or congestion \cite{chang2024deterministic, ChangS20, GhaffariL18, GhaffariKS17}\thatchaphol{cite earlier expander routing papers too. Mention this: ``Improving this $n^{o(1)}$ factor in the expander routing problem to $\poly\log(n)$ is the central barrier in the area of fast graph algorithms with wide-ranging applications''. But this discussion explaining why improving $n^{o(1)}$ seems hard should be moved to the intro actually. Here, we should just explain the intuition of theorem below.}\shangen{added!}, if an almost-linear runtime is required.
Improving this $n^{o(1)}$ factor in the expander routing problem to $\poly\log(n)$ is the central barrier in the area of fast graph algorithms with wide-ranging applications.
The following lemma is an instance of Theorem 8.1 from \cite{haeupler2024low} which we will use.

\begin{lemma}%
\label{lem: low step flow}
Let $G=(V,E)$ be an $n$-vertex graph, $D$ an integral demand, $h$ a maximum length bound, and $\epsilon\in ((\log n)^{-c},1)$ for some sufficiently small constant $c$, a tradeoff parameter. Then, there exists an algorithm which returns an integral multicommodity flow $F$ routing a subdemand $D'$, such that
\begin{enumerate}
    \item $F$ has maximum length $\beta\cdot h$ and congestion $\kappa$ for $\beta = \exp(\poly(1/\epsilon))$ and $\kappa = n^{\poly(\epsilon)}$. The support size of $F$ is $(|E|+\supp(D))n^{\poly(\epsilon)}$.
    \item Let $F^*$ be the maximum-value multicommodity flow partially routing $D$ of maximum length $h$ and congestion 1. Then, $\val(F)\ge \val(F^*)$.
\end{enumerate}
The algorithm runs in time $(|E|+\supp(D))\cdot \poly(h)n^{\poly(\epsilon)}$.
\end{lemma}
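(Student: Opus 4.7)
The plan is to derive this lemma as a direct instantiation of Theorem~8.1 of~\cite{haeupler2024low}, which is the general low-step multicommodity flow theorem, by specializing it to our regime of unit edge lengths and unit edge capacities and translating the parameters of their statement into the form above. Concretely, I would set their length-slack and congestion-slack parameters in terms of $\epsilon$ so that the resulting guarantees become $\beta = \exp(\poly(1/\epsilon))$ and $\kappa = n^{\poly(\epsilon)}$, and verify that their running-time expression collapses to $(|E|+\supp(D))\cdot \poly(h)\cdot n^{\poly(\epsilon)}$ under unit capacities and the length bound $h$.

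The first substantive step is to match the interface: Theorem~8.1 of~\cite{haeupler2024low} is stated for graphs with general lengths and capacities and for general demands, but here we restrict to a graph with unit lengths and unit capacities and to an integral demand $D$. I would feed $D$ as is into their theorem, with the length budget $h$ (so that their maximum length is $\beta h$), and interpret their flow as routing a fractional subdemand $D'_{\text{frac}}$ of $D$. The second step is to promote to integrality: since capacities and $D$ are integral, one can round the fractional flow into an integral flow along flow-paths (via standard path-decomposition and randomized rounding, which can be derandomized by the method of conditional expectations using a pessimistic estimator on congestion), losing only a constant factor that can be absorbed into $\kappa=n^{\poly(\epsilon)}$. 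Each flow-path then carries integer flow, so the total support size remains $(|E|+\supp(D))n^{\poly(\epsilon)}$ because the starting fractional support satisfies the same bound.

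For the value guarantee in item~(2), I would argue as follows. Let $F^*$ be the maximum-value integral flow partially routing $D$ with length at most $h$ and congestion at most $1$. Since every integral flow with congestion~$1$ is also a feasible fractional flow with congestion~$1$, the optimum fractional flow of length $\le h$ and congestion~$\le 1$ has value at least $\val(F^*)$. Theorem~8.1 of~\cite{haeupler2024low} guarantees that its output routes at least the optimum value of this fractional LP (up to multiplicative slacks that are absorbed into the length/congestion approximations $\beta,\kappa$, not the value). Thus the value of the returned flow is at least $\val(F^*)$, as required. The rounding step above preserves the value since integer flows can be extracted from the fractional one without losing the total flow value, only blowing up congestion by a constant.

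The main obstacle I anticipate is bookkeeping rather than conceptual: ensuring that the three parameters $(\beta,\kappa,\text{runtime})$ in Theorem~8.1 of~\cite{haeupler2024low}, which are typically stated as one coupled tradeoff depending on a single slack, can each be pinned down to $\exp(\poly(1/\epsilon))$, $n^{\poly(\epsilon)}$, and $(|E|+\supp(D))\poly(h)n^{\poly(\epsilon)}$ respectively, and that the integral rounding step does not spoil the $n^{\poly(\epsilon)}$ bound on support size and congestion. I would also need to check that the restriction $\epsilon\in((\log n)^{-c},1)$ matches the range in which their tradeoff exponent is well-defined, which is why the constant $c$ is chosen sufficiently small.
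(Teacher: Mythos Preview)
Your core approach is exactly what the paper does: it states this lemma as ``an instance of Theorem~8.1 from~\cite{haeupler2024low}'' and gives no further proof. So invoking that theorem directly and specializing parameters is precisely the intended derivation.

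That said, the additional rounding machinery you propose is unnecessary and, as written, slightly shaky. The lemma statement already asserts the output flow is integral, and the paper treats this as coming straight from the cited theorem with no post-processing; Theorem~8.1 of~\cite{haeupler2024low} already delivers an integral flow in the integral-demand, unit-capacity setting. Your randomized-rounding-plus-derandomization step is therefore superfluous, and your claim that rounding ``preserves the value \ldots\ only blowing up congestion by a constant'' is not quite right in general: naive rounding of a fractional multicommodity flow can lose value, and derandomized rounding via pessimistic estimators typically preserves value only in expectation or up to a factor, not exactly. None of this matters once you recognize that the cited theorem already outputs an integral flow, but you should drop the rounding paragraph rather than risk introducing an incorrect side argument.
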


\paragraph{Finding an Approximate Half Layer.}
Once we obtain the (fractional) flow, we greedily round the solution to an integral one, which corresponds to a set of paths. %
This process is formalized in the following lemma, where the parameters are chosen to balance the approximation ratios for both path length and congestion.

\begin{lemma}
\label{lem: approx disjoint paths}
    Let $G$ be an $n$-vertex graph with $m$ edges. Given a set of demand pairs $C$, suppose there exists a set of $q$ disjoint paths each connecting some pair $(s_i,t_i)\in C,$ and having length at most $h = n^{o(1)}$. There exists an algorithm that runs in time $(m+|C|)n^{o(1)}$ that computes a set of disjoint paths of size at least $q/\alpha$, each connecting a pair $(s_i,t_i)\in C$ of length at most $\beta h$ for $\alpha,\beta = n^{o(1)}$.
\end{lemma}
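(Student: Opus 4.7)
The plan is to apply the low-step multicommodity flow algorithm of \Cref{lem: low step flow} to the unit demand induced by $C$, and then round the resulting (bounded-congestion) integral flow into edge-disjoint paths by a simple greedy decongestion step. Concretely, define the integral demand $D$ by $D(s_i,t_i)=1$ for every $(s_i,t_i)\in C$ and $D\equiv 0$ otherwise, so $\supp(D)=|C|$. The hypothesized collection of $q$ edge-disjoint $s_i$-$t_i$ paths of length at most $h$ is itself an integral flow of value $q$ partially routing $D$ with maximum length $h$ and congestion $1$, so the maximum-value such flow $F^*$ satisfies $\val(F^*)\ge q$. Invoke \Cref{lem: low step flow} with the given $h$ and a parameter $\epsilon$ to be fixed below; the algorithm returns in time $(m+|C|)\cdot\poly(h)\cdot n^{\poly(\epsilon)}$ an integral flow $F$ routing a subdemand $D'\le D$, with $\val(F)\ge \val(F^*)\ge q$, length bound $\leng_F\le \beta_0 h$ where $\beta_0=\exp(\poly(1/\epsilon))$, and congestion bound $\conge_F\le \kappa$ where $\kappa=n^{\poly(\epsilon)}$.

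Because $F$ is integral and each $D(s_i,t_i)\le 1$, its support $\mathcal{P}:=\supp(F)$ is a set (not multiset) of paths: each pair in $C$ is represented at most once, every $P\in\mathcal{P}$ connects some pair in $C$, every $P\in\mathcal{P}$ has at most $\beta_0 h$ edges, and $|\mathcal{P}|=\val(F)\ge q$. To extract edge-disjoint paths, perform a greedy selection: initialize $\mathcal{P}^{\star}=\emptyset$ and, while $\mathcal{P}\neq\emptyset$, pick an arbitrary $P\in\mathcal{P}$, add $P$ to $\mathcal{P}^{\star}$, and remove from $\mathcal{P}$ every path that shares an edge with $P$. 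The selected $P$ has at most $\beta_0 h$ edges, and each such edge lies in at most $\kappa$ paths of $\mathcal{P}$ by the congestion bound, so at most $\beta_0 h\kappa$ paths are removed in each iteration. Hence
\[
|\mathcal{P}^\star|\;\ge\;\frac{|\mathcal{P}|}{\beta_0 h\kappa}\;\ge\;\frac{q}{\beta_0 h\kappa},
\]
and every $P\in\mathcal{P}^{\star}$ connects a distinct pair in $C$ and has length at most $\beta_0 h$. Setting $\beta:=\beta_0$ and $\alpha:=\beta_0 h\kappa$ already matches the claimed output format.

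It remains to calibrate $\epsilon$ so that both $\alpha$ and $\beta$ are $n^{o(1)}$ while the running time stays almost-linear. Choosing $\epsilon=(\log n)^{-c'}$ for a sufficiently small constant $c'>0$ makes $\poly(1/\epsilon)=o(\log n)$, hence $\beta_0=\exp(\poly(1/\epsilon))=n^{o(1)}$, and simultaneously $\poly(\epsilon)=o(1)$, hence $\kappa=n^{\poly(\epsilon)}=n^{o(1)}$; together with the assumption $h=n^{o(1)}$ this yields $\alpha=\beta_0 h\kappa=n^{o(1)}$ and $\beta=n^{o(1)}$. The flow call costs $(m+|C|)\cdot\poly(h)\cdot n^{\poly(\epsilon)}=(m+|C|)n^{o(1)}$; the greedy rounding is performed in time linear in $\sum_{P\in\mathcal{P}}|P|\le |\mathcal{P}|\cdot\beta_0 h=(m+|C|)n^{o(1)}$ by precomputing, for each edge, the list of flow-paths through it and deleting conflicting paths as they are found. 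The only delicate point is the joint calibration of $\epsilon$ so that the three subpolynomial slacks $\beta_0$, $\kappa$, and $h$ compose to a single $n^{o(1)}$ factor; the greedy decongestion itself is essentially immediate once the bounded-congestion flow has been obtained.
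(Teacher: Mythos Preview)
Your proof is correct and follows essentially the same approach as the paper: invoke \Cref{lem: low step flow} on the unit demand from $C$ to get a flow of value $\ge q$ with length $\beta h$ and congestion $\kappa$, then greedily pick disjoint paths, noting each selection eliminates at most $\kappa\beta h$ flow paths, yielding $\ge q/(\kappa\beta h)$ disjoint paths. The only cosmetic difference is the calibration of $\epsilon$: the paper simply fixes $\epsilon=1/\log\log n$, which immediately gives $\beta=\exp((\log\log n)^{O(1)})=n^{o(1)}$ and $\kappa=n^{(\log\log n)^{-\Omega(1)}}=n^{o(1)}$, whereas you take $\epsilon=(\log n)^{-c'}$ and argue the constants can be chosen compatibly; both choices are valid.
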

\begin{proof}
    We shall apply \Cref{lem: low step flow} and round the fractional flow to disjoint paths by greedily picking the paths. Let $\epsilon = 1/\log\log n$, length slack $\beta = 2^{\poly(1/\epsilon)} = n^{o(1)}$, and congestion slack $\kappa = n^{\poly(\epsilon)} = n^{o(1)}$. By assumption, there exists a multicommodity flow of value at least $q$. Applying \Cref{lem: low step flow} with the parameter $\epsilon$, the algorithm computes a flow of value at least $q$ with congestion $\kappa$ and maximum length $\beta\cdot h$. Next, we enumerate every path in the support of $F$, and greedily pick the path if it is disjoint from every path we picked before. We now claim that this computes a set of disjoint paths of size at least $q/(\kappa\beta\cdot h)$, indeed: each path we picked blocks at most $\kappa\beta\cdot h$ amount of flow, as the congestion is $\kappa$ and the maximum length is $\beta\cdot h$. We have the parameters $\beta = n^{o(1)}$ and $\alpha = \kappa\beta\cdot h = n^{o(1)}$. Each of the paths connects a pair $(s_i,t_i)\in C$, and has length at most $\beta\cdot h$. By \Cref{lem: low step flow}, the algorithm runs in $(m+|C|)n^{o(1)}$ time.
\end{proof}

\paragraph{Obtaining Disjoint Paths.}
We now combine all the techniques developed so far to complete the proof of our theorem.  The following is a restatement of \Cref{thm: routing conductance} Part II.

\begin{proof}[Proof of \Cref{thm: part II routing conductance}]
    Take $h = 18\log (n)/\phi$, $\alpha$ and $\beta$ as in \Cref{lem: approx disjoint paths}, $r = \beta\cdot h$, and $d(n)=\alpha =n^{o(1)}$. We shall prove the following two properties so that we can apply \Cref{lem: fast perfect matching}:
    \begin{enumerate}[(1)]
        \item the $\varphi$-strong Haxell condition holds with $\varphi\ge 24 \alpha d(n) r^2$ for the demand-path hypergraph $H$ with maximum length $h$ that corresponds to $G$;
        \item there exists a subroutine that computes $(h,\alpha)$-approximate half layer.
    \end{enumerate}
    
    For (1), by \Cref{lem: pair connectivity has haxell}, $H$ satisfies $\varphi$-strong Haxell condition, with $\varphi = \frac{\phi \delta}{32k} \ge 24 \alpha d(n) r^2,$ as claimed.
    For (2), by \Cref{lem: approx disjoint paths}, there exists an algorithm that computes $(h,\alpha)$-approximate half layer in the $r$-bounded demand-path hypergraph $H$, which runs in $(m+nk)n^{o(1)} \le (m+n\delta\phi^3)n^{o(1)}\le m^{1+o(1)}$ time. 
    
    With these conditions established \Cref{lem: fast perfect matching} shows that there exists an algorithm with running time $m^{1+o(1)}$ that computes a hypergraph perfect matching in $H$, which corresponds to the set of disjoint paths, as desired.
\end{proof}

\section{Hypergraph Matching Algorithms: Proofs of \texorpdfstring{\Cref{thm: quadratic time hypergraph perfect matching,lem: fast perfect matching}}{}}
\label{sec:proof hyper matching}

In this section, we aim to prove our key technical tools: \Cref{thm: quadratic time hypergraph perfect matching,lem: fast perfect matching}. We begin with some preliminaries in \Cref{sec: preliminaries of matching}.
In \Cref{sec: the augmenting algorithm}, we describe the perfect matching algorithm that implements the half-layer oracle with \emph{maximal half layers}.
These first two subsections essentially follow the framework of \cite{Annamalai18} with the slightly modified \emph{alternating forest} (see \Cref{def: alternating-forest}) so that we can plug in the half layer oracles.
In \Cref{sec: perfect-matching-via-building-approximate-layers}, we introduce a new perfect matching algorithm that implements the half layer oracle with \emph{approximate half layers}.
We include some analysis in the two previous sections so that, in \Cref{sec: iterations-and-depth-of-alternating-forests-of-both-algorithms}, we can bound the number of iterations and the maximum depth of alternating forests needed for both versions of the algorithms.
These two quantities determine the efficiency of the algorithms. Given these bounds, we finally conclude our main lemmas in \Cref{sec: fast matching algos}.

\subsection{Section Preliminaries}
\label{sec: preliminaries of matching}

Consider a matching $M$ in a normal bipartite graph $G=(A,B,E)$. For each non-matched edge $e\notin M$, there is a unique matched edge $f\in M$ that ``blocks'' $e$ from the $B$ side, i.e., $f\cap e \cap B \neq \emptyset$. In hypergraphs, the blocking edges are not unique anymore. This motivates the following definition.
\begin{definition}[Blocking Edges]
    Given a partial matching $M$ in a bipartite hypergraph $H=(A,B,E)$ we define the set of edges \emph{blocking} a given edge $e\in E$ as
    \[ \{f\in M\mid f\cap e\cap B \neq \emptyset\}.
    \]
    The set of blocking edges of $X \subseteq E$ is defined as the union of the blocking edges of each $e \in X$.
\end{definition}
In other words, the set of blocking edges consists of edges already in $M$ that prevent us from adding $e$ to it because of an intersection within the $B$ part. We note that we do not take into account possible intersections in the $A$-part here.
In the following we define what a layer means in the alternating forest. A layer consists of a half layer that we defined in \Cref{def: half layer}, combined with the set of blocking edges of the half layer.

\begin{definition}[Layer]
    A \emph{layer} $L$ for a bipartite hypergraph $H = (A,B,E)$ w.r.t.\ a state $(A',B',M)$ and parameter $\Delta$ is a tuple $(X,Y)$ where $X$ is a half layer w.r.t.\ $(A',B',M)$ and parameter $\Delta$, and $Y\subseteq M$ is precisely the set of blocking edges of $X$. We say $L$ is \emph{maximal} if $X$ is a maximal half layer. We say $L$ is \emph{$(r',\alpha)$-approximate layer} if $X$ is an $(r',\alpha)$-approximate half layer.
\end{definition}

One should think of a layer as defined by its half layer $X,$ which is itself a collection of edges which (1) are disjoint within the $B$ part, (2) are currently unused by $M$, and (3) have their sets of blocking edges mutually disjoint. %
The union of these blocking sets is then taken to be the set $Y$ above. The motivation behind this definition is that if one manages to free all the edges in $Y$, then all the edges in $X$ become eligible to be added to the matching, at least from the perspective of the $B$-part. If their vertex on the $A$ side is already used, we are instead at least able to do a switch.

For convenience, for any sequence of sets $S_1,\ldots, S_\ell$, we define  $S_{\le \ell}:=S_1 \cup S_2\cup \ldots \cup S_\ell$.

\begin{definition}[Alternating Forest]\label{def: alternating-forest}
    An \emph{alternating forest} $T$ for a bipartite hypergraph $H = (A,B,E)$ with respect to a partial matching $M$ and a degree parameter $\Delta$ %
    is a tuple 
    $(L_0,\ldots,L_\ell)$ such that:
    \begin{itemize}
        \item $L_0 := (\emptyset, A_0)$ 
        where $A_0 = A\setminus (\bigcup_{e\in M}e)$ contains unmatched $A$-vertices;
        \item For all $1\le i\le \ell$, $L_i=(X_i,Y_i)$ is a layer with respect to $(A(Y_{i-1}),B(X_{\le i-1}\cup Y_{\le i-1}), M, \Delta)$.
    \end{itemize}
We denote the prefix alternating forest of $(L_0, \ldots, L_\ell)$ by $\{L_{\le t}\}:=(L_0, \ldots, L_t)$ for any $0\le t \le \ell$. We note that throughout the paper $X_i$ and $Y_i$ will always denote the first and second coordinate of $L_i,$ and we often do not specify this in an attempt to keep the notation under control. 
\end{definition}

For example, suppose at some point we find an edge in $X_\ell$ which does not have any edges in the matching intersecting it within the $B$ part. We can then attempt to add these edges to the matching, but to do so we need to remove the $Y_{\ell-1}$ edges which intersect them in the $A$ vertices. Just doing so has not increased the size of our matching unless $\ell=1$, but it removed some edges previously in $M$ from $Y_{\ell-1}$ which might result in additional edges from $X_{\ell-1}$ now not being ``blocked'' within the $B$ part, which allows us to repeat. In reality, we will need to again grow the tree occasionally, but can ensure we are making progress towards actually being able to add some edges in $X_1$ which will match some new vertices from $Y_0$.

\subsection{Perfect Matching via Building Maximal Layers}
\label{sec: the augmenting algorithm}

Before introducing the main algorithm, we describe the subroutine for building the next layer in the alternating forest. The input is an alternating forest $T$, and a pair of sets $X',Y'\subseteq E$ that serve as the initial values for the next layer. Usually, $X'$ and $Y'$ will be empty, and the subroutine simply computes the next layer from scratch. When modifying an already built forest, however, it will be convenient to be able to reuse the part of the layer already built rather than rebuilding it from scratch. %

We first present a naive approach to a build layer subroutine. Here, we simply add edges to $X'$ greedily so long as they maintain the variety of properties we require of the next layer of our forest, including the maximum degree (controlled by the parameter $\Delta$) condition, which we need in order to maintain control of the process in the analysis stage. The notion of an \emph{addable edge} captures these requirements on an edge. 

\begin{definition}[Addable Edge]
    Let $T = (L_0,\ldots,L_\ell)$ be an alternating forest w.r.t.\ a partial matching $M$, 
    where $L_i=(X_i,Y_i),$ and $X',Y'\subseteq E$. We say $a\in A$ has a \emph{$\Delta$-addable edge} $e$ w.r.t.\ %
    $(T,X',Y')$ if
    \begin{itemize}
        \item $a$ is contained in fewer than $\Delta$ edges in $X'$, and
        \item $e$ is disjoint from $B(X_{\le\ell}\cup Y_{\le\ell}\cup X'\cup Y')$.
    \end{itemize}
    In addition, if no edges from $M$ block $e$ then we call $e$ \emph{immediately addable} w.r.t.\ $(T,X',Y')$.
\end{definition}

\begin{algorithm}[H]
\caption{Naive Approach for Building a Layer.}
\label{alg: build layer}
\begin{algorithmic}
\Procedure{BuildLayer}{$T=(L_0, \ldots, L_\ell),X',Y',\Delta$}
\While{$\exists$ a $\Delta$-addable edge $e$ w.r.t.\ $(T,X',Y')$ for some $a\in A(Y_{\ell})$} \label{line: addable edge}
\State $X'\gets X'\cup \{e\}$.
\State $Y'\gets Y'\cup \{f\in M\mid f\cap e\cap B \neq \emptyset \}$.
\EndWhile
\State \Return $(X',Y')$.
\EndProcedure
\end{algorithmic}
\end{algorithm}

We first show that the \Call{BuildLayer}{} subroutine extends a layer to a maximal layer.

\begin{proposition}
\label{prop: build layer}
    Given a bipartite hypergraph $H=(A,B,E),$ an alternating forest $T=(L_0,\ldots,L_{\ell})$ for $H$ w.r.t.\ a partial matching $M$, sets of edges $X',Y'\subseteq E$ such that $(X',Y')$ is a layer w.r.t.\ $(A(Y_{\ell}),B(X_{\le\ell}\cup Y_{\le\ell}),M),$ and a parameter $\Delta$, the \textsc{BuildLayer}{$(T,X',Y',\Delta)$} procedure returns a maximal layer w.r.t.\ $(A(Y_{\ell}),B(X_{\le\ell}\cup Y_{\le\ell}),M)$ and parameter $\Delta$.
\end{proposition}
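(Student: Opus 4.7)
The plan is to prove the two required conclusions separately: first, that the returned pair $(X',Y')$ is a layer w.r.t.\ the state $(A(Y_\ell), B(X_{\le\ell}\cup Y_{\le\ell}), M)$ with parameter $\Delta$; and second, that the returned layer is maximal. The first will follow by establishing the statement ``$(X',Y')$ is a layer w.r.t.\ this state with parameter $\Delta$'' as a loop invariant of the \texttt{while} loop, which holds at entry by hypothesis. The second will follow directly from the loop's exit condition. Termination is immediate: each iteration adds an edge to $X'$ and $E$ is finite.

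The core of the argument is the invariant maintenance step. Suppose at the top of some iteration $(X',Y')$ is a layer and we pick $a\in A(Y_\ell)$ with a $\Delta$-addable edge $e$. I need to verify that $(X'\cup\{e\},\, Y'\cup\{f\in M: f\cap e\cap B\neq\emptyset\})$ is still a layer. The condition $e\cap A = a \in A(Y_\ell)$ is by the choice of $e$; the degree cap at $a$ is preserved since $a$ had strictly fewer than $\Delta$ edges in $X'$; the disjointness of $e$ from the forbidden set $B(X_{\le\ell}\cup Y_{\le\ell})$ and the pairwise $B$-disjointness with prior $X'$-edges follow from $e\cap B(X_{\le\ell}\cup Y_{\le\ell}\cup X')=\emptyset,$ part of $\Delta$-addability. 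The subtle point is condition (3) of the half layer, that no edge of $M$ blocks two edges of $X'\cup\{e\}$: any $f\in M$ blocking $e$ satisfies $f\cap e\cap B\neq\emptyset$; if such an $f$ also blocked some prior $e''\in X'$, then $f\in Y'$ (since $Y'$ is exactly the current blocking set), and then $B(e)\cap B(f)\neq\emptyset$ would put a vertex of $e$ into $B(Y'),$ contradicting $\Delta$-addability. The update to $Y'$ then adds precisely the new blocking edges of $e,$ so $Y'$ remains the full blocking set of $X'$, completing the invariant.

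Finally I verify maximality. Upon exit, no $a\in A(Y_\ell)$ has any $\Delta$-addable edge; I claim this is equivalent to $(X',Y')$ being a maximal layer w.r.t.\ the given state and $\Delta$. Fix any $e\in E\setminus X'$ of rank at most $r$ with $e\cap A=a\in A(Y_\ell)$. If $a$ already appears in $\Delta$ edges of $X',$ adding $e$ violates the degree cap. Otherwise, by the exit condition $e$ must intersect $B(X_{\le\ell}\cup Y_{\le\ell}\cup X'\cup Y')$: intersecting $B(X_{\le\ell}\cup Y_{\le\ell})$ violates disjointness from the forbidden set; intersecting $B(X')$ violates pairwise $B$-disjointness; and intersecting $B(Y')$ means some $f\in M$ blocks both $e$ and a prior edge of $X',$ violating condition (3). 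In every case $X'\cup\{e\}$ fails to be a half layer, so $X'$ is maximal at the output rank, and $(X',Y')$ is a maximal layer. The main (mild) obstacle is the bookkeeping in condition (3), where one must use the addability requirement $e\cap B(Y')=\emptyset$ rather than just $e\cap B(X')=\emptyset$ to rule out two $X'$-edges being blocked by a common $M$-edge; all other checks are direct from the definitions.
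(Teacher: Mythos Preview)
Your proof is correct and follows the same approach as the paper's, which establishes that the output is a layer by tracking the invariant through the loop and obtains maximality from the exit condition. The paper's version is much terser (essentially asserting both conclusions follow ``by the definition of addability''), whereas you have carefully unpacked each half-layer condition, including the slightly delicate use of $e\cap B(Y')=\emptyset$ to verify condition~(3); this is exactly the bookkeeping the paper leaves implicit.
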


\begin{proof}
    Let $(X' \cup Z, Y'')$ be the output of \Call{BuildLayer}{$T,X',Y',\Delta$}. 
    $Z$ is a half layer w.r.t.\ state $(A(Y_{\ell}),B(X_{\le\ell}\cup Y_{\le\ell}),M)$ and parameter $\Delta$ by the definition of addability, and it is maximal since the procedure repeatedly attempts to add additional edges so long as that is feasible. 
    Note also that throughout the procedure the current $Y'$ contains all blocking edges of the current $X'$. Hence, $Y''$ contains all blocking edges of $X' \cup Z$.
\end{proof}

We now present the main algorithm for the hypergraph perfect matching. %

\begin{algorithm}[H]
\caption{Hypergraph Perfect Matching.}
\label{alg: hypergraph perfect matching}
\begin{algorithmic}[1]
\Procedure{HyperGraphMatching}{$H=(A, B, E)$}
\State Set parameters $\Delta$ and $\mu$. 
\State Initialize the partial matching $M = \emptyset$. 
\State Initialize $L_0$ in the alternating forest $T$ by setting $(X_0,Y_0)\gets (\emptyset,A)$.
\State Set $\ell = 0$. %
\While{$Y_0$ is not empty} \label{line: main loop begin} \Comment{Main loop.}
\State $(X_{\ell+1},Y_{\ell+1})\gets \Call{BuildLayer}{T,\emptyset,\emptyset, \Delta}$. \label{line: build layer main}

\State Add the new layer $L_{\ell+1}:=(X_{\ell+1},Y_{\ell+1})$ to $T$.
\State Increment $\ell$ to $\ell + 1$.
\State $(T,M,\ell)\gets\Call{CollapseForest}{T,M,\ell}$. \Comment{See \Cref{alg: collapse forest}.}
\EndWhile \label{line: main loop end}
\State \Return $M$.
\EndProcedure
\end{algorithmic}
\end{algorithm}

We refer to the \textbf{while} loop from line~\ref{line: main loop begin}-\ref{line: main loop end} as the main loop of the algorithm. We note that $Y_0$ will always contain all currently unmatched vertices in $A$ and that $\ell$ will always point to the current forest's last layer. 
In our analysis, we aim to bound the number of the iterations of the main loop under the strong Haxell condition. In an iteration, the \textsc{BuildLayer} subroutine will be the bottleneck, and we shall carefully analyze the runtime afterwards.
One can think of the above algorithm as repeatedly adding new maximal layers to the current forest until certain favorable conditions arise (detected by the \textsc{CollapseForest} procedure) in which case \textsc{CollapseForest} performs certain ``switches'' based on the current alternating tree which will either allow us to reach a ``better'' state or even ideally allow us to increase the size of the matching. We now describe precisely what the \textsc{CollapseForest} procedure does in \Cref{alg: collapse forest}.

\begin{algorithm}[H]
\caption{Collapse the Alternating Forest.}
\begin{algorithmic}[1]
\label{alg: collapse forest}
\Procedure{CollapseForest}{$T=(L_0, \ldots, L_\ell),M,\ell$}
\While{$X_\ell$ contains more than $\mu|X_\ell|$ immediately addable edges} \label{line: swap criteria}
\For{each $f\in Y_{\ell-1}$ such that $\exists$ immediately addable edge $e\in X_\ell$ for $a\in A\cap f$} \label{line: collapse begin} \label{line: swap begin}
    \State $M\gets M\setminus \{f\}\cup \{e\}$. \label{line:switch}
    \State $Y_{\ell-1}\gets Y_{\ell-1}\setminus \{f\}$. \label{line:updateY}%
\EndFor \label{line: swap end}
\State Discard layer $L_{\ell}$ from $T$.
\If{$\ell \ge 2$}
\State $(\hat X_{\ell-1},\hat Y_{\ell-1})\gets \Call{BuildLayer}{\{L_{\le \ell-2}\},X_{\ell-1},Y_{\ell-1},\Delta}$. \Comment{Superpose-build.}\label{line: superpose build begin}
\State \textbf{If} $|\hat X_{\ell-1}|\ge (1+\mu)|X_{\ell-1}|$, \textbf{then} $X_{\ell-1}\gets \hat X_{\ell-1}$, $Y_{\ell-1}\gets \hat Y_{\ell-1}$. \label{line: superpose build end}
\EndIf
\State Decrement $\ell$ to $\ell - 1$. \label{line: collapse end}
\EndWhile
\State \Return $(T,M,\ell)$.
\EndProcedure
\end{algorithmic}
\end{algorithm}
\noindent We note that in both lines~\ref{line: swap criteria} and \ref{line: swap begin} the immediately addable edges are w.r.t.\ $(\{L_{\le \ell-1}\},\emptyset,\emptyset).$ 

We refer to lines~\ref{line: swap begin}–\ref{line: swap end} as a \emph{swap} operation, lines~\ref{line: superpose build begin}–\ref{line: superpose build end} as a \emph{superpose-build} operation, and lines~\ref{line: collapse begin}-\ref{line: collapse end} as a \emph{collapse} operation. A swap operation is performed only when it significantly alters the alternating forest by freeing a large number of matching edges in the previous layer. This restriction, which also applies to the superpose-build operation, is imposed to facilitate the runtime analysis.

It is instructive and could help with building intuition to consider what happens if we call the \textsc{CollapseForest} procedure when $\ell=1$. The main difference compared to $\ell \ge 2$ is that $L_0=(X_0,Y_0)$ is not an actual layer. Namely, $X_0$ is always empty, while $Y_0$ contains all unmatched vertices (in contrast to $Y_i$ for any $i \ge 1$ which consists of edges from $M$). Here, if we find that at least $\mu$ proportion of $X_1$ edges are immediately addable we may simply add them to the matching (done in Line~\ref{line:switch}\footnote{Note here that $f\in Y_0$ is a vertex so never belongs to $M$ and is hence not removed from $M$.}) and their $A$ vertices get removed from $Y_0$ (done in Line~\ref{line:updateY}), since they are not matched.

\subsection*{Analysis}
\label{sec: analyze augmenting algo}

The following two mostly immediate propositions will be useful in the analysis, they essentially match Propositions 4.1 and 4.2 from \cite{Annamalai18}. 

\begin{proposition}[Proposition 4.1 in \cite{Annamalai18}]
\label{lem:analysis-same-layer-ratio}
    At the beginning of each main loop in \textsc{HypergraphMatching}, all $L_t=(X_t,Y_t)$ are non-collapsible, i.e.\ $X_t$ contains no more than $\mu|X_t|$ immediately addable edges w.r.t.\ $(\{L_{\le t-1}\},\emptyset,\emptyset)$, for all $0 \le t \le \ell$. Moreover, for each $0\le t \le \ell$, we have $|Y_t| \ge (1-\mu)|X_t|$. %
\end{proposition}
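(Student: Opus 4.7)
I would prove the statement by induction on the iterations of the main loop of \textsc{HyperGraphMatching}. The base case, where the forest is just $L_0=(\emptyset,A)$, is immediate: both claims hold trivially at $t=0$ since $|X_0|=0$. For the inductive step, observe that \textsc{BuildLayer} on Line~\ref{line: build layer main} only appends a fresh layer $L_{\ell+1}$; it touches neither $M$ nor the existing layers, so the invariants for $L_0,\ldots,L_\ell$ are preserved. The heart of the proof is verifying that \textsc{CollapseForest} restores both invariants for every surviving layer.

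For this, I would introduce an inner induction on the iterations of the \textbf{while} loop inside \textsc{CollapseForest}, maintaining the following invariant at the top of each iteration: for the current value of $\ell$, every layer $L_t$ with $0\le t\le \ell-1$ is non-collapsible w.r.t.\ the current matching $M$, and $Y_t$ equals precisely the set of blocking edges of $X_t$ in the current $M$ for every $t\le\ell$. Since the while loop continues only while $L_\ell$ is collapsible, at termination the topmost layer is automatically non-collapsible, so combined with the invariant this yields the first claim of the proposition.

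The crux of maintaining the inner invariant is to track how the swap operation (Lines~\ref{line: swap begin}–\ref{line: swap end}) affects lower layers. I would first note that by \Cref{def: alternating-forest} the sets $Y_1,Y_2,\ldots$ are pairwise disjoint: any $e'\in Y_{j'}$ with $j'<j$ satisfies $B(e')\subseteq B(Y_{\le j-1})$, which is disjoint from $B(X_j)$ by the second property of a half layer, so $e'$ cannot block any edge of $X_j$ and hence cannot lie in $Y_j$. Now the swap removes some $f\in Y_{\ell-1}$ from $M$ and inserts some $e\in X_\ell$ into $M$: the removed $f$ lies in no $Y_t$ for $t<\ell-1$, and the inserted $e$ is disjoint from $B(X_{\le\ell-1})$, so it blocks no edge of $X_t$ for $t\le\ell-1$. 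Consequently neither the immediate-addability status of edges of $X_t$ nor the identity of the blocking sets $Y_t$ (for $t<\ell-1$) is affected. For layer $L_{\ell-1}$ itself, the equality $Y_{\ell-1}=\{\text{blocking edges of }X_{\ell-1}\}$ is preserved because $f$ is deleted from $Y_{\ell-1}$ simultaneously with being deleted from $M$, and $e$ does not block anything in $X_{\ell-1}$. The subsequent superpose-build on Lines~\ref{line: superpose build begin}–\ref{line: superpose build end} is just a call to \textsc{BuildLayer} that extends $L_{\ell-1}$ into a maximal layer consistent with the current $M$; by \Cref{prop: build layer} it leaves the returned object a layer, and it leaves lower layers untouched.

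The second claim, $|Y_t|\ge (1-\mu)|X_t|$, then follows from the first by a simple counting argument. Non-collapsibility of $L_t$ means that at most $\mu|X_t|$ edges of $X_t$ are immediately addable, so at least $(1-\mu)|X_t|$ edges of $X_t$ are blocked by some edge of the current $M$; by the maintained invariant, every such blocking edge lies in $Y_t$. Property (3) of \Cref{def: half layer} implies that each edge of $M$ blocks at most one edge of the half layer $X_t$, so the natural map from blocked edges of $X_t$ into $Y_t$ is injective and yields $|Y_t|\ge (1-\mu)|X_t|$. I expect the main obstacle to be the bookkeeping for how the swap modifies $M$ and $Y_{\ell-1}$ simultaneously: one must carefully verify that the three properties of being a half layer still hold for $L_{\ell-1}$ with respect to the updated $M$, which works precisely because the new matching edge $e\in X_\ell$ lies outside $B(X_{\le\ell-1})$ and hence blocks nothing in any earlier half layer.
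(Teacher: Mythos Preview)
Your proof is correct and follows essentially the same approach as the paper: both argue that \textsc{CollapseForest} only modifies the current top layer, so lower layers remain non-collapsible, and then derive $|Y_t|\ge(1-\mu)|X_t|$ by counting blocked edges via property~(3) of \Cref{def: half layer}. Your version is considerably more detailed---in particular you make explicit why the swap's modification of $M$ does not affect immediate-addability in lower layers (disjointness of the $Y_t$'s and the fact that the new matching edge $e\in X_\ell$ avoids $B(X_{\le \ell-1})$)---whereas the paper's proof compresses this to the single phrase ``we never modify layers except the currently largest one.''
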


\begin{proof}[Proof.]
The first part of \Cref{lem:analysis-same-layer-ratio} holds since the \textbf{while} loop in \textsc{CollapseForest} runs until $L_\ell$ becomes non-collapsible. Since we never modify layers except the currently largest one this implies that all lower layers are non-collapsible too.
The second part follows from the first since any edge in $X_\ell$ which is not immediately addable must have at least one blocking edge in $Y_\ell$ (which are disjoint between distinct edges). So, since we never proceed to add an additional layer until the first condition is satisfied we know the second one holds for all $1\le t \le \ell$. The case of $t=0$ is also ensured since $X_0$ is always empty. 
\end{proof}

The following proposition roughly speaking states that all our layers $L_t=(X_t,Y_t)$ are not far from being maximal layers with respect to the sub-forest consisting of the layers up to $L_t$.
\begin{proposition}[Proposition 4.2 in \cite{Annamalai18}]
\label{lem: superpose ub}
    At the beginning of each main loop, for any $L_t=(X_t,Y_t)$, $0\le t\le \ell,$ we have
    $|\hat X_t| \le (1+\mu)|X_t|$, here $(\hat X_t, \hat Y_t) := \textsc{BuildLayer}(\{L_{\le t-1}\}, X_t, Y_t, \Delta)$. 
\end{proposition}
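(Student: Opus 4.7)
My plan is to prove the invariant by induction on the main-loop iterations of \textsc{HyperGraphMatching}, tracking each layer's modification history. Fix a layer $L_t=(X_t,Y_t)$ that is in the forest at the beginning of some main-loop iteration, and let $\tau_t$ denote the latest moment at which $L_t$ was touched --- either freshly built at line~\ref{line: build layer main}, rebuilt by the superpose-build at lines~\ref{line: superpose build begin}--\ref{line: superpose build end}, or only had $Y_t$ trimmed by a swap during a \textsc{CollapseForest} iteration with $\ell=t+1$. I would first verify the invariant holds at $\tau_t$, and then argue it is preserved until the current moment.

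\medskip

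At $\tau_t$, if $L_t$ was freshly built or rebuilt by the superpose-build, then \Cref{prop: build layer} guarantees $(X_t,Y_t)$ is a maximal layer, so \textsc{BuildLayer}$(\{L_{\le t-1}\},X_t,Y_t,\Delta)$ returns $\hat X_t=X_t$ and the bound is immediate. If instead $L_t$ was only trimmed by a swap and not replaced, then the guard at line~\ref{line: superpose build end} must have failed, which is precisely the statement $|\hat X_t|<(1+\mu)|X_t|$ at that moment.

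\medskip

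For the preservation step, the key structural fact is that whenever a layer $L_t$ survives a \textsc{CollapseForest} call, no iteration with current top $\le t$ can have been executed in that call, since any such iteration would have discarded $L_t$. Hence since $\tau_t$ neither $(X_{\le t-1},Y_{\le t-1})$ nor $(X_t,Y_t)$ have been modified, and the only part of the state that may have evolved is the matching $M$, via swaps at levels $\ell'\ge t+2$. I would then couple the current execution of \textsc{BuildLayer}$(\{L_{\le t-1}\}, X_t, Y_t, \Delta)$ with the one that would have taken place at $\tau_t$: each swap exchanges some $f\in Y_{\ell'-1}$ for some $e\in X_{\ell'}$ in $M$, and the alternating-forest structure forces $B(e)\cap B(X_{\le t}\cup Y_{\le t})=\emptyset$, so $e$ cannot block any edge of $X_t$ and enters the picture only as a potential blocker of newly added candidates. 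Combined with the analogous structural constraints on $f$ inherited from the construction of $L_{\ell'-1}$, this coupling shows every edge accepted under the current $M$ is traceable to an acceptance at $\tau_t$, so $|\hat X_t|\le|\hat X_t^{(\tau_t)}|\le (1+\mu)|X_t|$.

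\medskip

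The main obstacle I expect is precisely this coupling. \textsc{BuildLayer} is order-dependent, and the swap perturbations are not monotone in a set-theoretic sense: removing $f$ from $M$ eliminates a potential blocker, which in isolation could free new addable edges. One therefore cannot simply argue ``more blockers means fewer acceptances.'' The proof must carefully pair each insertion of $e$ against the corresponding removal of $f$, exploiting the alternating-forest invariants on where $B(e)$ and $B(f)$ sit relative to $B(X_{\le t}\cup Y_{\le t})$, to show the net effect on the \textsc{BuildLayer} output is non-positive; this is essentially the content of Proposition~4.2 in \cite{Annamalai18}, suitably adapted to the half-layer formalism used here.
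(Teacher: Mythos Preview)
The paper's proof is two sentences: if $L_t$ was most recently set by a successful superpose-build then $(X_t,Y_t)$ is maximal and $\hat X_t=X_t$; otherwise the superpose-build criterion at line~\ref{line: superpose build end} failed, which is literally $|\hat X_t|<(1+\mu)|X_t|$. It does not discuss changes to $M$ after that moment.

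You are right to notice that between $\tau_t$ and the current main-loop iteration, $M$ may have been altered by swaps at levels $\ell'\ge t+2$, and that \textsc{BuildLayer} depends on $M$ through its blocking-edge step. However, your proposed preservation argument has a genuine gap. You reduce everything to a ``coupling'' that pairs each inserted $e$ against each removed $f$ and claims the net effect on $|\hat X_t|$ is non-positive, but you explicitly flag this as the main obstacle and do not carry it out. The non-monotonicity you yourself point out is real: removing $f$ from $M$ deletes a potential blocker and can strictly enlarge the set of edges \textsc{BuildLayer} accepts in a way that has no counterpart at $\tau_t$. The structural facts you invoke (that $B(e)$ avoids $B(X_{\le t}\cup Y_{\le t})$, and analogous constraints on $f$) only say these edges are irrelevant to the \emph{initial} state $(X_t,Y_t)$; they say nothing about what happens once \textsc{BuildLayer} starts growing $X',Y'$ beyond that. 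So the claim ``every edge accepted under the current $M$ is traceable to an acceptance at $\tau_t$'' is not justified by what you have written.

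One simplification worth noting: for the ``maximal'' sub-cases (freshly built, or criterion passed) you do not need any coupling at all. The addable-edge predicate consults only the degree in $X'$ and disjointness from $B(X_{\le t-1}\cup Y_{\le t-1}\cup X'\cup Y')$; it never looks at $M$. If $(X_t,Y_t)$ was maximal at any point (with $Y_t$ the blocking set of $X_t$ at that point), then no addable edge exists relative to the starting state $(X_t,Y_t)$, so \textsc{BuildLayer} halts immediately with $\hat X_t=X_t$ regardless of the current $M$. This disposes of two of your three sub-cases cleanly; only the ``criterion failed'' sub-case, where $Y_t$ has been trimmed and $(X_t,Y_t)$ is no longer maximal, actually requires tracking $M$---and that is precisely where your sketch stops short.
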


\begin{proof}
If $L_t$ was created by a superpose-build operation then $\hat{X}_t=X_t$. Otherwise, the statement is ensured by the failure of the superpose-build criterion in line~\ref{line: superpose build end} of \Cref{alg: collapse forest}.
\end{proof}

Throughout the paper, we fix $\mu = \frac{1}{10}$. The following lemma applies the strong Haxell condition to derive the growth rate of the alternating forest, which will in turn be used to upper-bound its depth. As discussed earlier, each time we perform a reconstruction operation, such as a swap or superpose-build, a certain potential vector of the forest grows (so will in particular always be different for different alternating forests we encounter). The depth bound yields an upper bound on the total number of viable potential vectors which in turn bounds the number of main loop iterations we may encounter. The next two lemmas will be used to establish the growth of the potential vector and help control the number of viable potential vectors. Its proof follows along similar lines as that of \cite[Lemma 4.5]{Annamalai18}. For our parameter regime, we provide a slightly simpler proof.  
\begin{lemma}\label{lem:analysis-expand-by-delta}
    Let $H$ be an $r$-bounded bipartite hypergraph satisfying the $\varphi$-strong Haxell condition for $\varphi = \Delta r^2$ with $\Delta \ge 4$. %
    Let $L_i=(X_{i},Y_{i})$ for some $1\le i\le \ell$ be a layer of $T$ at the start of the main loop of \textsc{HypergraphMatching}$(H)$. Then,
    \[
        |X_{i}| > \frac{\Delta}{10} \cdot |Y_{\le i-1}|.
    \]
\end{lemma}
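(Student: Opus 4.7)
The plan is to apply the $\varphi$-strong Haxell condition to a carefully chosen subset of $A$ and to bound the hitting-set side using the disjointness structure of the forest's layers. Set $S := A(Y_{\le i-1})$. The $Y_j$ for $j\le i-1$ are pairwise disjoint (for $j \ge 1$ they are disjoint subsets of the matching $M$, and $Y_0 = A_0$ consists of the unmatched $A$-vertices), and since distinct matching edges carry distinct $A$-vertices, $|S|=|Y_{\le i-1}|$. For each $a\in S$ let $j(a)$ be the unique index with $a\in A(Y_{j(a)})$, so $a$ was activated when the half layer $X_{j(a)+1}$ was built. Call $a$ \emph{bad} if it already carries $\Delta$ edges of $X_{j(a)+1}$, and \emph{good} otherwise; a direct double-counting of edges at bad vertices gives $|A_{\text{bad}}|\le |X_{\le i}|/\Delta$, so $|A_{\text{good}}|\ge |Y_{\le i-1}|-|X_{\le i}|/\Delta$.

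Next, I would show that $T := B(X_{\le i}\cup Y_{\le i})$ is a hitting set for $E_{A_{\text{good}}}$. Fix $a\in A_{\text{good}}$ with $j := j(a)$ and an edge $e$ with $e\cap A=\{a\}$. If $e\in X_{j+1}$, then $B(e)\subseteq B(X_{j+1})\subseteq T$. Otherwise the maximality of $X_{j+1}$ forces $X_{j+1}\cup\{e\}$ to violate one of the half-layer conditions, and since $a$ is good the degree condition is satisfied; the remaining options are that $e$ meets $B(X_{\le j}\cup Y_{\le j})$, meets $B(X_{j+1})$, or shares a blocking matching edge with some edge of $X_{j+1}$ (which then lies in $Y_{j+1}$). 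In every case $B(e)\cap T\ne\emptyset$.

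To upper-bound $|T|$ I would exploit layer disjointness: the $B(X_t)$ are pairwise disjoint (enforced by the forbidden set of each new layer), the $B(Y_t)$ are pairwise disjoint ($M$ is a matching), and $B(Y_t)\cap B(X_{t'})=\emptyset$ for $t\ne t'$ (otherwise such a matching edge would sit in $Y_t\cap Y_{t'}$, which is empty). Moreover, each $f\in Y_t$ injects into $B(X_t)$ via the $B$-vertex it shares with its unique blocked partner in $X_t$ (properties 2 and 3 of the half-layer definition), so $|Y_t|\le |B(X_t)|$ and $|B(Y_t)\setminus B(X_t)|\le (r-1)|Y_t|\le (r-1)|B(X_t)|$. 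Summing gives $|T|\le \sum_{t=1}^i r|B(X_t)|=r|B(X_{\le i})|\le r^2|X_{\le i}|$.

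Combining the Haxell lower bound $|T|\ge \varphi|A_{\text{good}}|\ge \Delta r^2|Y_{\le i-1}|-r^2|X_{\le i}|$ with this upper bound yields $|X_{\le i}|\ge \frac{\Delta}{2}|Y_{\le i-1}|$. Plugging in $|X_{\le i-1}|\le \frac{10}{9}|Y_{\le i-1}|$ from \Cref{lem:analysis-same-layer-ratio} gives $|X_i|\ge \bigl(\tfrac{\Delta}{2}-\tfrac{10}{9}\bigr)|Y_{\le i-1}|$, which is strictly greater than $\tfrac{\Delta}{10}|Y_{\le i-1}|$ whenever $\Delta>25/9$, in particular for $\Delta\ge 4$. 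The main step requiring care is the degree-failure mode of maximality, since unlike the other failure modes it is not automatically witnessed by a shared $B$-vertex; that is what forces the good/bad split, and the loss $|X_{\le i}|/\Delta$ is exactly compensated by the factor of $\Delta$ in the strong Haxell condition.
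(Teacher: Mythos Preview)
Your counting arguments are clean and correct: the disjointness of the $B(X_t\cup Y_t)$'s, the injection $Y_t\hookrightarrow B(X_t)$, and the bound $|T|\le r^2|X_{\le i}|$ all go through, and the final algebra is fine. The gap is in the hitting-set step, where you invoke ``the maximality of $X_{j+1}$''.

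At the start of a main loop the intermediate half layers $X_{j+1}$ need \emph{not} be maximal. When \textsc{CollapseForest} collapses $L_{j+2}$, the swap deletes matching edges from $Y_{j+1}$; an edge $g$ at some good $a\in A(Y_j)$ that was previously blocked only via one of these removed $Y_{j+1}$ edges is now addable to $X_{j+1}$. If the subsequent superpose-build criterion at line~\ref{line: superpose build end} fails (which it may), $X_{j+1}$ is left untouched and is strictly non-maximal with respect to its current state. For such $a$ and $g$, none of your three failure modes applies, and $B(g)$ need not meet $T=B(X_{\le i}\cup Y_{\le i})$ at all. So the claim ``$T$ hits $E_{A_{\text{good}}}$'' breaks.

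The paper deals with exactly this by passing to the superpose-built extensions $(\hat X_t,\hat Y_t)=\textsc{BuildLayer}(\{L_{\le t-1}\},X_t,Y_t,\Delta)$, which \emph{are} maximal by \Cref{prop: build layer}, and excluding from $S$ also the vertices $\hat S_t:=A(\hat X_t\setminus X_t)$. The maximality argument then goes through for the enlarged hitting set $B(\hat X_{\le i}\cup\hat Y_{\le i})$, and \Cref{lem: superpose ub} (namely $|\hat X_t|\le(1+\mu)|X_t|$) controls both the loss in $|S|$ and the growth of the hitting set. The paper also first proves the bound at the moment a layer is created and then argues it persists (since $|X_i|$ can only grow and $|Y_{\le i-1}|$ does not change while $L_i$ exists); this sidesteps having to reason about all layers simultaneously in their current, possibly non-maximal state. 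Your overall strategy is the right one, but it needs this ``near-maximality'' correction to close.
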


\begin{proof}
    Let us first prove that when a layer $(X_{\ell+1},Y_{\ell+1})$ is first created as the output of \textsc{BuildLayer}$(T,\emptyset,\emptyset, \Delta)$ it satisfies the desired inequality. 
    Assume for the sake of contradiction, that $|X_{\ell+1}| \le \frac{\Delta}{10}|Y_{\le \ell}|$.
    Let $(\hat X_t, \hat Y_t) := \textsc{BuildLayer}(\{L_{\le t-1}\}, X_t, Y_t, \Delta)$ for all $t \le \ell$.
    Let $\hat S_t:=A(\hat X_t\setminus X_t)$, so the set of vertices in $A(Y_{t-1})$ which would have an addable edge added to $X_t$ if we were to perform a superpose-build operation (if we actually performed it, this would be an empty set) on layer $L_t$ while ignoring the layers $L_{t+1},\ldots,L_{\ell+1}$.
    Let $S'\subseteq A$ be the set of vertices in $A$ which have degree precisely $\Delta$ in $X_{\le \ell+1}$.
    We consider the set $S\subseteq A$ where $S = A(Y_{\le \ell})\setminus (S' \cup \hat S_{\le \ell})$. 
    Since the degree of any $a\in S$ is strictly less than $\Delta$ in $X_{\le \ell+1}$,
    we know that any edge $e$ containing $a$ 
    intersects $B(X_{\le \ell+1}\cup Y_{\le \ell+1})$, or $B(\hat X_{ \le \ell} \cup \hat Y_{\le \ell})$ (where we are also using the maximality of each $(\hat X_t, \hat Y_t)$ guaranteed by \Cref{prop: build layer}). 
    As $X_{\le \ell}\cup Y_{\le \ell}\subseteq \hat X_{\le \ell}\cup \hat Y_{\le \ell}$, we get
    \begin{align*}
        \tau(E_S) &\le |B(X_{\ell+1} \cup Y_{\ell+1})| + |B(\hat X_{\le \ell}\cup \hat Y_{\le\ell})| \\
        &\le |X_{\ell+1}|r^2 + |\hat X_{\le \ell}|r^2.
    \end{align*}
    The second inequality follows from a basic counting argument: each hyperedge in $X_{\ell+1}$ can be blocked by at most $r$ edges in $Y_{\ell+1}$, and each of those blocking edges can touch at most $r - 1$ vertices in $B \setminus B(X_{\ell+1})$. The same argument applies to the second term. 

    Next, we apply \Cref{lem: superpose ub} and \Cref{lem:analysis-same-layer-ratio} to upper bound the second term.
    \[
        |\hat{X}_{\le \ell}| \le \frac{11}{10} |X_{\le \ell}| \le \frac{11}{9} |Y_{\le \ell}|.
    \]
    
    Combining with our initial assumption $|X_{\ell+1}| \le \frac{\Delta}{10}|Y_{\le \ell}|$, we get 
    \begin{align}\label{(*)}
    \begin{split}
        \tau(E_S) & \le \left( \frac{\Delta}{10} + \frac{11}{9} \right) r^2 |Y_{\le\ell}| \le \frac12 \Delta r^2 |Y_{\le\ell}|
    \end{split}
    \end{align}
    since $\Delta\ge 4$. On the other hand, applying again the initial assumption and \Cref{lem:analysis-same-layer-ratio,lem: superpose ub}, %
    \begin{align*}
    |S| &\ge |A(Y_{\le \ell})|  -  |S'| -  |\hat S_{\le \ell}| \\
    &\ge |Y_{\le \ell}| - \frac{1}{\Delta}|X_{\le \ell+1}| - \frac{1}{10}|X_{\le \ell}|\\
    & \ge |Y_{\le \ell}| \left(1 - \frac{1}{\Delta}\cdot\left(\frac{\Delta}{10} + \frac{10}{9}\right) - \frac{1}{10}\cdot \frac{10}{9}\right) > \frac{1}{2}|Y_{\le \ell}|. 
    \end{align*}
    By our $\varphi$-strong Haxell condition, this implies that 
    \begin{align*}
        \tau(E_S) > \frac{1}{2} \Delta r^2 |Y_{\le \ell}|,
    \end{align*}
    which contradicts \Cref{(*)}.

    Finally, note that after a layer $(X_{i},Y_{i})$ is first created, by the above argument it satisfies the desired inequality. This inequality remains true so long as layer $i$ exists, which is so long as $\ell \ge i+1$. Indeed, the size of $X_i$ only changes if $\ell=i+1$ in which case it might increase in a superpose build step (or remain the same), while so long as $\ell \ge i+1$ the sets $Y_{\le i-1}$ do not get modified. So the conclusion the lemma holds. 
\end{proof}

\subsection{Perfect Matching via Building Approximate Layers}
\label{sec: perfect-matching-via-building-approximate-layers}
We now turn to the faster algorithm which only has access to an approximate build-layer procedure, we require a bit more careful analysis of the growth rate of the alternating forest. We first give the formal definition of an approximate build-layer procedure.

\begin{definition}[\textsc{ApproxBuildLayer}]
    Given a bipartite hypergraph $H=(A,B,E),$ an alternating forest $T=(L_0,\ldots,L_{\ell})$ for $H$ w.r.t.\ a partial matching $M$, sets of edges $X',Y'\subseteq E,$ such that $(X',Y')$ is a layer with respect to $(A(Y_{\ell}),B(X_{\le\ell}\cup Y_{\le\ell}),M),$ and a parameter $\Delta$, the $(r',\alpha)$-approximate build-layer procedure \Call{ApproxBuildLayer}{$T,X',Y',\Delta$} returns a layer $(X'\cup Z, Y'\cup W)$ w.r.t.\ $(A(Y_{\ell}),B(X_{\le\ell}\cup Y_{\le\ell}),M)$ such that $Z$ is an $(r',\alpha)$-approximate half layer w.r.t.\ $(A(Y_{\ell}),B(X_{\le\ell}\cup Y_{\le\ell})\cup B(X'\cup Y'),M)$ and parameter $\Delta$.
\end{definition}

We remind the reader that $Z$ being an $(r',\alpha)$-approximate half layer means that it is not much smaller (by at most an $\alpha$ factor) than any half layer for the same state which is restricted to use only edges of rank at most $r'.$\footnote{We stress that, this in particular does not place any restriction on the rank of edges in $Z$ itself.}

The faster algorithm we use here, which we will refer to as \textsc{FasterHypergraphMatching}, simply replaces both uses of \Call{BuildLayer}{} in \Call{HyperGraphMatching}{} (one in the main loop and one in the \Call{CollapseForest}{}) with \Call{ApproxBuildLayer}{}. 
We note that \Cref{lem:analysis-same-layer-ratio} remains true under this modification since the argument behind it does not involve what these procedures do. On the other hand, we will need to tweak \Cref{lem: superpose ub}. With this in mind we introduce the following definition of an \emph{$r'$-maximal extension}. 

\begin{definition}
    Given a bipartite hypergraph $H=(A,B,E),$ sets of vertices $A'\subseteq A, B'\subseteq B$, a partial matching $M$, sets of edges $X',Y'\subseteq E$ such that $(X',Y')$ is a layer w.r.t.\ $(A',B',M),$ we say a layer $(X'\cup Z, Y'\cup W)$ is an \emph{$r'$-maximal extension} of $(X',Y')$ w.r.t.\ $(A',B',M),$ and parameter $\Delta$, if $Z$ is an $r'$-maximal half layer w.r.t.\ $(A',B'\cup B(X'\cup Y'),M)$ and parameter $\Delta$.
\end{definition}

We note that this notion plays a similar role to the output of \textsc{BuildLayer} (which we showed is actually in a certain sense maximal in \Cref{prop: build layer}). We stress however that this notion is only going to be used (as a sort of benchmark) in the analysis and that our algorithm never actually computes it, since this would be computationally too expensive.

We are now ready to state our analogue of \Cref{lem: superpose ub} in the approximate build layer setting.

\begin{proposition}
\label{lem: superpose ub approx}
    At the beginning of each main loop of \textsc{FasterHypergraphMatching}, for all $0\le t\le \ell$ we have $|\hat X_t'| \le (1+\alpha+\alpha\mu)|X_t|$, where $(\hat X_t', \hat Y_t')$ is any $r'$-maximal extension of $(X_t,Y_t)$ w.r.t.\ $(A(Y_{t-1}),B(X_{\le t}\cup Y_{\le t}),M)$ and parameter $\Delta$. 
\end{proposition}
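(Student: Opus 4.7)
The proof parallels that of \Cref{lem: superpose ub}, replacing the maximality of \textsc{BuildLayer} with the $(r',\alpha)$-approximation guarantee of \textsc{ApproxBuildLayer}. Fix $t\ge 1$ and let $(\hat X_t',\hat Y_t')=(X_t\cup Z,\,Y_t\cup W)$ denote any $r'$-maximal extension; here $Z$ is a rank-$\le r'$ half layer disjoint from $X_t$ in the $B$-part. We case-split on the outcome of the most recent \textsc{ApproxBuildLayer} call involving $L_t$.

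\textbf{Case 1: $X_t$ was set by that call.} This covers both the main-loop build (input $X'=\emptyset$) and a successful superpose-build update (input $X'=X_t^{\mathrm{old}}$), so one may write $X_t=X'\cup Z^{\mathrm{apx}}$ with $Z^{\mathrm{apx}}$ an $(r',\alpha)$-approximate half layer with respect to the snapshot state of the call. The plan is to check that the rank-$\le r'$ portion of $X_t$ together with $Z$ forms a half layer with respect to the snapshot state, and hence has size at most $\alpha|Z^{\mathrm{apx}}|\le \alpha|X_t|$ by the approximation guarantee. This yields $|Z|\le \alpha|X_t|$, so $|\hat X_t'|\le(1+\alpha)|X_t|$.

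\textbf{Case 2: The most recent call was a failed superpose-build.} Its output $Z^{\mathrm{apx}}$ satisfies $|Z^{\mathrm{apx}}|<\mu|X_t|$ by the failure of the criterion on line~\ref{line: superpose build end} of \Cref{alg: collapse forest}; repeating the Case~1 argument at that call's snapshot gives $|Z|\le\alpha|Z^{\mathrm{apx}}|<\alpha\mu|X_t|$, so $|\hat X_t'|<(1+\alpha\mu)|X_t|$. Either way, $|\hat X_t'|\le(1+\alpha+\alpha\mu)|X_t|$, as claimed.

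The main technical obstacle is transferring the $(r',\alpha)$-approximation guarantee, which holds at the \emph{snapshot} state of the relevant call, to a bound on the \emph{current} $r'$-maximal extension. The plan is to observe that, between the snapshot and the current main loop, modifications to $M$ come only from swap operations performed in subsequent \textsc{CollapseForest} calls at alternating-forest levels strictly above $t$: any swap at level $t+1$ would be immediately followed by another superpose-build attempt on $L_t$, contradicting the choice of ``most recent call''. Using the $B$-disjointness of different layers inherent in the definition of an alternating forest, each such modification involves $M$-edges whose $B$-regions lie outside $B(X_{\le t}\cup Y_{\le t})$ in a controlled way, so that condition~(3) of the half-layer definition remains in force for $Z$ against the snapshot matching, and the approximation guarantee transfers intact.
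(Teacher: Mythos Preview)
Your approach is the paper's: compare the $r'$-maximal extension $Z=\hat X_t'\setminus X_t$ against the approximate half layer $Z^{\mathrm{apx}}$ returned by the most recent \textsc{ApproxBuildLayer} call on layer $t$, then invoke the superpose-build criterion on line~\ref{line: superpose build end}. The paper collapses your Case~1/Case~2 into the single line $|\hat X_t'|\le |X_t|+\alpha|\hat X_t|$ with $|\hat X_t|<(1+\mu)|X_t|$, but the logic is identical.

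One minor issue: your Case~1 comparison set, ``the rank-$\le r'$ portion of $X_t$ together with $Z$'', is not in general a half layer at the snapshot state. The degree bound in condition~(1) of \Cref{def: half layer} can fail for the union (a vertex may already have degree $\Delta$ in $X_t$ and still receive edges from $Z$), and in the successful-superpose-build subcase the rank-$r'$ edges of $X_t^{\mathrm{old}}\subseteq X_t$ lie inside the snapshot's forbidden set. The object you should compare against $Z^{\mathrm{apx}}$ is simply $Z$ itself, which is what the paper (implicitly) does.

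The more substantive gap is the one you flag in your last paragraph. To apply the snapshot approximation guarantee you need $Z$ to be a half layer against $M^{\mathrm{snap}}$, in particular to satisfy condition~(3). Your observation that all intervening swaps occur at levels $\ge t+2$ and involve $M$-edges whose $B$-parts lie outside $B(X_{\le t}\cup Y_{\le t})$ does not close this. Take any $f\in M^{\mathrm{snap}}\setminus M$ removed by such a swap: $B(f)$ sits in precisely the region where $B(Z)$ lives, and since $f\notin M$ the current-state half-layer axioms for $Z$ place no restriction on how many edges of $Z$ meet $B(f)$; two distinct $z_1,z_2\in Z$ can hit $B(f)$ at distinct vertices, violating condition~(3) against $M^{\mathrm{snap}}$. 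The paper's own proof is equally silent here---it never distinguishes the snapshot state from the current one---so you are not missing an idea the paper supplies; but your sketched plan does not supply one either.
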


\begin{proof}
    First note that $\hat X_t,$ which is an output of the approximate build-layer subroutine for \linebreak $(\{L_{\le t-1}\} ,X_t,Y_t,\Delta),$ contains an $(r',\alpha)$-approximate half layer w.r.t.\ $(A(Y_{t-1}),B(X_{\le t}\cup Y_{\le t}),M)$ and $\Delta$, so $|\hat X_t'| \le |X_t|+\alpha |\hat X_t|,$ by the definition of the $(r',\alpha)$-approximate half layer. Since regardless of the outcome of the superpose-build criteria at line~\ref{line: superpose build end} of \Call{CollapseForest}{} we have $|\hat X_t| < (1+\mu)|X_t|$ we conclude $|\hat X_t'| \le (1+\alpha+\alpha \mu) |X_t|$. 
\end{proof}

The following lemma is the analogue of \Cref{lem:analysis-expand-by-delta} for our approximate build-layer setting. 

\begin{lemma}\label{lem: approx analysis-expand-by-delta}
    Let $H=(A, B, E)$ be an $r$-bounded bipartite hypergraph, and suppose that $H_{r'} = (A, B, E_{r'})$, consisting of all the edges of $H$ of rank at most $r'$, satisfies the $\varphi$-strong Haxell condition with $\varphi =  24\Delta \alpha  r^2, $ for some parameter $\Delta\ge 4\alpha\ge 4$. %
    Then, if $L_i=(X_{i},Y_{i}),$ for some $1\le i\le \ell$, is a layer of $T$ at the start of the main loop of \textsc{FasterHypergraphMatching}(H), we have $|X_{i}| > \frac{\Delta}{10} |Y_{\le i-1}|$.
\end{lemma}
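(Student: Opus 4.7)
The plan is to adapt the proof of \Cref{lem:analysis-expand-by-delta} to the approximate setting, with two principal modifications. First, since \textsc{ApproxBuildLayer} does not return maximal half-layers, every use of $\hat X_t$ from the exact proof is replaced by an $r'$-maximal extension $\hat X_t'$ of $(X_t,Y_t)$, whose size is bounded by \Cref{lem: superpose ub approx} as $|\hat X_t'|\le(1+\alpha+\alpha\mu)|X_t|$ for $t\le\ell$. For the newly built top layer $L_{\ell+1}$, I fix an $r'$-maximal extension $\hat X_{\ell+1}'$ of $(X_{\ell+1},Y_{\ell+1})$ as well; since $\hat X_{\ell+1}'\setminus X_{\ell+1}$ is a rank-$\le r'$ half-layer with respect to the state against which $X_{\ell+1}$ was built, the $(r',\alpha)$-approximation guarantee of \textsc{ApproxBuildLayer} yields $|\hat X_{\ell+1}'\setminus X_{\ell+1}|\le\alpha|X_{\ell+1}|$, hence $|\hat X_{\ell+1}'|\le(1+\alpha)|X_{\ell+1}|$. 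Second, since the strong Haxell hypothesis is only assumed for $H_{r'}$, throughout the argument every ``edge incident to $a$'' statement must be restricted to edges of rank at most $r'$.

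Assume for contradiction that $|X_{\ell+1}|\le\frac{\Delta}{10}|Y_{\le\ell}|$, so $|\hat X_{\ell+1}'|\le\frac{(1+\alpha)\Delta}{10}|Y_{\le\ell}|$, and recall $|X_{\le\ell}|\le\frac{10}{9}|Y_{\le\ell}|$ by \Cref{lem:analysis-same-layer-ratio}. In place of the naive lift $A(\hat X_t'\setminus X_t)$ of the exact proof, use the \emph{degree-based} definition
\[
\hat S_t \;:=\; \bigl\{\,a\in A(Y_{t-1}):\deg_{\hat X_t'\setminus X_t}(a)\ge\Delta\,\bigr\}\qquad(1\le t\le\ell+1),
\]
where $A(Y_\ell)$ replaces $A(Y_{t-1})$ when $t=\ell+1$. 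This yields $|\hat S_t|\le|\hat X_t'\setminus X_t|/\Delta$, keeping $|\hat S_{\le\ell+1}|$ a controlled fraction of $|Y_{\le\ell}|$ even when $\alpha\gg 1$. With $S:=A(Y_{\le\ell})\setminus\hat S_{\le\ell+1}$, the $r'$-maximality of each $\hat X_t'\setminus X_t$ together with $\deg_{\hat X_t'\setminus X_t}(a)<\Delta$ for $a\in S$ forces any rank-$\le r'$ edge $e$ incident to $a$ to fail the $B$-side or matching-edge constraint of the half-layer test (the A-degree constraint being unviolated), and hence to contain a $B$-vertex in $B(\hat X_{\le\ell+1}'\cup\hat Y_{\le\ell+1}')$.

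Invoking the counting bound $|B(\hat X_t'\cup\hat Y_t')|\le r^2|\hat X_t'|$ (whose derivation from the proof of \Cref{lem:analysis-expand-by-delta} carries over verbatim, as $\hat X_t'$ still has mutually disjoint $B$-parts across its edges) gives
\[
\tau(E_S)\;\le\;r^2\bigl(|\hat X_{\ell+1}'|+|\hat X_{\le\ell}'|\bigr),
\]
which, after plugging in the contradiction assumption, \Cref{lem: superpose ub approx}, and $\Delta\ge 4\alpha$, is bounded above by a constant multiple of $\alpha\Delta r^2|Y_{\le\ell}|$. On the other side, the $\varphi$-strong Haxell condition for $H_{r'}$ with $\varphi=24\Delta\alpha r^2$ gives $\tau(E_S)>24\Delta\alpha r^2|S|$; the factor of $24$ is precisely the slack designed to absorb the $\alpha$-fold blow-up from $X_t$ to $\hat X_t'$. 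A lower bound on $|S|$ of a positive constant fraction of $|Y_{\le\ell}|$ follows from the degree-based bound on $|\hat S_{\le\ell+1}|$ combined with the inductive hypothesis of the lemma on previous layers: this guarantees geometric growth $|Y_i|>\frac{9\Delta}{100}|Y_{\le i-1}|$ for $i\le\ell$, hence $|Y_\ell|$ dominates $|Y_{\le\ell}|$. The two bounds then contradict each other, establishing the claim at the moment of creation of $L_{\ell+1}$. As in the last paragraph of the proof of \Cref{lem:analysis-expand-by-delta}, the inequality persists throughout the lifetime of $L_i$ because $|X_i|$ only grows via superpose-build while $|Y_{\le i-1}|$ is unchanged.

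The main obstacle I expect is the bookkeeping around $|\hat S_{\le\ell+1}|$. The naive analogue of the exact argument would give $|\hat S_{\le\ell+1}|$ on the order of $\alpha|Y_{\le\ell}|$, which already fails for $\alpha>1$. Both the degree-based redefinition of $\hat S_t$ and exploiting the inductive lower bound on $|Y_\ell|$ (to ensure the dominant contribution to $|Y_{\le\ell}|$ comes from the top layer) are critical for pushing the argument through under the stated hypothesis $\Delta\ge 4\alpha$ combined with the $24$-fold strengthening of the Haxell condition.
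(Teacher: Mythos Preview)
Your overall scheme mirrors the paper's proof closely: both argue at creation time of $L_{\ell+1}$, introduce $r'$-maximal extensions $\hat X_t'$ of the lower layers, define a degree-based exclusion set to isolate a large $S\subseteq A(Y_{\le\ell})$, and compare the Haxell lower bound on $\tau(E_S)$ against the $B$-vertices of the extended forest. The degree-based $\hat S_t$ you present as a key modification is in fact exactly what the paper uses here; it is only the exact-setting proof of \Cref{lem:analysis-expand-by-delta} that uses the naive $A(\hat X_t\setminus X_t)$.

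Where your plan genuinely diverges is the handling of the top layer $L_{\ell+1}$, and this is where it breaks. You build an $r'$-maximal extension $\hat X_{\ell+1}'$ and include $\hat S_{\ell+1}$ in the exclusion. But under the contradiction hypothesis $|X_{\ell+1}|\le\tfrac{\Delta}{10}|Y_{\le\ell}|$ your own bound gives
\[
|\hat S_{\ell+1}|\;\le\;\frac{|\hat X_{\ell+1}'\setminus X_{\ell+1}|}{\Delta}\;\le\;\frac{\alpha|X_{\ell+1}|}{\Delta}\;\le\;\frac{\alpha}{10}\,|Y_{\le\ell}|,
\]
and once $\alpha>10$ this single term already exceeds $|Y_{\le\ell}|$, so no positive lower bound on $|S|$ survives. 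The geometric-growth fix you propose does not help: it shows $|Y_\ell|$ dominates $|Y_{\le\ell}|$, but $\hat S_{\ell+1}\subseteq A(Y_\ell)$ is subtracted precisely from that dominant part, so the domination buys nothing.

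The paper avoids this by \emph{not} extending the top layer. It keeps $B(X_{\ell+1}\cup Y_{\ell+1})$ as-is in the hitting set, and instead of your $\hat S_{\ell+1}$ it excludes the set $S'=\{a:\deg_{X_{\le\ell+1}}(a)=\Delta\}$. Since $S'$ is measured against $X_{\ell+1}$ itself rather than an $\alpha$-fold blow-up, one gets $|S'|\le|X_{\le\ell+1}|/\Delta$ with no $\alpha$ factor, and $|S|>\tfrac{1}{30}|Y_{\le\ell}|$ follows directly from $\Delta\ge4\alpha\ge 4$. No induction and no geometric growth are invoked anywhere in the paper's argument.
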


\begin{proof}
    Similarly as in the proof of \Cref{lem:analysis-expand-by-delta} it is enough to show the desired inequality holds at the time when the layer $(X_{\ell+1},Y_{\ell+1})$ is first created.
    Assume for the sake of the contradiction that $|X_{\ell+1}| \le \frac{\Delta}{10}|Y_{\le \ell}|$. Let $\hat S_t$ be the set of $A$ vertices from $Y_{t-1}$ that would become a full degree vertex (degree $\Delta$) if we were to perform a superpose-build operation in $H$ by computing an $r'$-maximal extension $(\hat X'_{t}, \hat Y'_{t})$ of layer $L_t$ w.r.t.\ $(A(Y_{t-1}),B(X_{\le t-1}\cup Y_{\le t-1}),M)$ and parameter $\Delta$ while ignoring the layers $L_{t+1},\ldots,L_{\ell+1}$.
    Let $S'\subseteq A$ be the set of all vertices in $A$ which have degree $\Delta$ in $X_{\le \ell+1}$.
    We consider the set $S\subseteq A$ where $S = A(Y_{\le \ell})\setminus (S' \cup \hat S_{\le \ell})$.  From the way we construct $S$, we know that for each $a\in S$, there is no edge of rank at most $r'$ containing $a$ that is disjoint from part $B$ vertices of the alternating forest and from the part $B$ vertices introduced in the superpose-build operation. Thereby, the hitting set w.r.t.\ $H_{r'}$ can be upper bounded by
    \begin{align*}
        \tau(E_S) &\le |B(X_{\ell+1} \cup Y_{\ell+1})| + |B(\hat X'_{\le\ell}\cup \hat Y'_{\le\ell})| \\
        &\le |X_{\ell+1}|r^2 + |\hat X'_{\le \ell}|r^2.
    \end{align*}
     By \Cref{lem: superpose ub approx} and \Cref{lem:analysis-same-layer-ratio}, we have
    \[
        |\hat{X}_{\le \ell}'| \le \left(1+\frac{11}{10}\alpha\right) |X_{\le \ell}| \le \left(\frac{10}{9}+\frac{11}{9}\alpha\right) |Y_{\le \ell}|\le \frac{7}{3}\alpha |Y_{\le \ell}|.
    \]  
    Since we assumed towards a contradiction that $|X_{\ell+1}| \le \frac{\Delta}{10}|Y_{\le \ell}|$, we have an upper bound
    \begin{align}\label{(**)}
    \begin{split}
        \tau(E_S) & \le \left( \frac{\Delta}{10\alpha} + \frac{7}{3} \right)\alpha r^2 |Y_{\le\ell}| \le \frac34\Delta\alpha r^2 |Y_{\le\ell}|
    \end{split}
    \end{align}
    as $\Delta\ge 4\alpha$. On the other hand, $|\hat S_{\le \ell}|\le \frac{1}{\Delta}|\hat X'_{\le\ell}|\le \frac{1}{\Delta}\cdot \frac{7}{3}\alpha|Y_{\le \ell}|$, which combined with our initial assumption, \Cref{lem:analysis-same-layer-ratio}, and \Cref{lem: superpose ub approx} gives
    \begin{align*}
    |S| &\ge |A(Y_{\le \ell})|  -  |S'| -  |\hat S_{\le \ell}| \\
    &\ge |Y_{\le \ell}| - \frac{1}{\Delta}|X_{\le \ell+1}| - \frac{1}{\Delta}|\hat X'_{\le\ell}|\\
    & \ge |Y_{\le \ell}| \left(1 - \frac{1}{\Delta}\cdot\left(\frac{\Delta}{10} + \frac{10}{9}\right) - \frac{\alpha}{\Delta}\cdot \frac{7}{3}\right) > \frac{1}{30}|Y_{\le \ell}|
    \end{align*}
    By $\varphi$-strong Haxell condition on $H_{r'}$, this implies that 
    \begin{align*}
        \tau(E_S) > \frac{5}{6} \Delta\alpha r^2 |Y_{\le \ell}|.
    \end{align*}  
    A contradiction to \eqref{(**)}.
\end{proof}

\subsection{Iterations and Depth of Alternating Forests of Both Algorithms}
\label{sec: iterations-and-depth-of-alternating-forests-of-both-algorithms}

The following lemma allows us to bound the number of iterations of the main loop in terms of the maximum depth of the alternating forest throughout the algorithm. We note that here and in the following proposition, we are assuming the hypergraph we run our algorithm on satisfies the respective strong Haxell condition required by \Cref{lem:analysis-expand-by-delta} or \Cref{lem: approx analysis-expand-by-delta}, respectively.

\begin{lemma}
\label{lem: upper bound main loop iterations}
Let $\ell_{\max}$ denote the maximum number of layers in an alternating forest $T$ considered by the algorithm \textsc{HypergraphMatching} or \textsc{FasterHypergraphMatching}. Then, the total number of iterations of the main loop in the algorithm is at most $2^{\sqrt{O(\ell_{\max}^2+\ell_{\max}\log n)}}$.
\end{lemma}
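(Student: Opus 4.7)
The plan is to associate a signature $\sigma(T)$ to each alternating forest $T$ arising during the run of \textsc{HypergraphMatching} (or \textsc{FasterHypergraphMatching}), prove that $\sigma$ strictly increases in a suitable lex order between consecutive iterations of the main loop, and then bound the number of achievable signatures. Define $S_i := \lfloor \log_{1+\mu}|X_{\le i}|\rfloor$ and $\sigma(T) := (S_1, S_2, \ldots, S_\ell)$, compared lexicographically with the convention that a shorter tuple obtained by truncating a longer one is considered smaller.

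For the monotonicity step, I case-analyze one iteration of the main loop. If \textsc{CollapseForest} returns without any pop, then $\sigma$ gains one positive coordinate $S_{\ell+1} > S_\ell$ and thus strictly increases. If \textsc{CollapseForest} performs one or more pops, then any pop meeting the superpose-build criterion of \Cref{alg: collapse forest} line~\ref{line: superpose build end} increases $|X_{\le \ell-1}|$ by at least a factor of $(1+\mu)$ and thus strictly increases $S_{\ell-1}$, which dominates the truncation of the tail and makes $\sigma$ strictly larger. The delicate remaining case is a pop where the superpose-build fails to meet the growth criterion; here, even though no cumulative size grows, the swap in lines~\ref{line: swap begin}-\ref{line: swap end} genuinely modifies the matching $M$, and combining the non-collapsibility invariant of \Cref{lem:analysis-same-layer-ratio} with the near-maximality guarantees of \Cref{lem: superpose ub} (or its approximate counterpart \Cref{lem: superpose ub approx}) implies that the newly built top layer in the subsequent main loop iteration must differ in a way that still forces a strict increase of $\sigma$ across the full main loop iteration. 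Handling this last case carefully is where I expect the main technical obstacle to lie.

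For the counting step, the growth lemma (\Cref{lem:analysis-expand-by-delta} or \Cref{lem: approx analysis-expand-by-delta}) ensures that $|X_{\le i}|$ grows geometrically by a factor of at least $1 + \Omega(\Delta)$, so $(S_1, \ldots, S_\ell)$ is strictly increasing with some fixed minimum gap $c > 0$. Setting $T_i := S_i - c \cdot i$ yields a non-decreasing sequence of non-negative integers with $T_\ell \le O(\log n)$; the number of such sequences of length at most $\ell_{\max}$ is $\binom{O(\log n)+\ell_{\max}}{\ell_{\max}}$. A standard estimate, using $\ell \log(1 + M/\ell) \le O(\sqrt{M\ell})$ for $\ell \le M$ and the symmetric bound for $\ell \ge M$, gives $\log\binom{M+\ell}{\ell} \le O(\sqrt{M\ell + \ell^2})$ uniformly in $M,\ell \ge 0$; applying this with $M = O(\log n)$ and $\ell = \ell_{\max}$ gives the bound $2^{O(\sqrt{\ell_{\max}\log n + \ell_{\max}^2})}$ on the number of signatures, and hence on the number of main loop iterations.
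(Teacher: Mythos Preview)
Your proposal has a genuine gap in the monotonicity argument, precisely in the case you flag as ``delicate''. Your signature $\sigma(T)=(S_1,\ldots,S_\ell)$ with $S_i=\lfloor\log_{1+\mu}|X_{\le i}|\rfloor$ tracks only the $X$-sizes, and this is not enough. Consider a main loop iteration starting with $\ell_k$ layers in which \textsc{CollapseForest} pops down to some level $t<\ell_k$ and the superpose-build criterion in line~\ref{line: superpose build end} fails at level $t$ (and possibly at intermediate levels as well). Then $X_1,\ldots,X_t$ are all unchanged, so the signature at the start of iteration $k+1$ is $(S_1,\ldots,S_t)$, a strict prefix of $(S_1,\ldots,S_{\ell_k})$ --- which is \emph{smaller} under your convention, not larger. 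Your suggested rescue (``the newly built top layer in the subsequent main loop iteration must differ'') does not help: you are comparing signatures at the start of consecutive iterations, before any new layer is built, and even if you shift the comparison point to after the build, nothing guarantees that the freshly built $X_{t+1}$ is larger than the old $X_{t+1}$ by enough to bump $S_{t+1}$.

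The paper resolves this by tracking $|Y_t|$ as well as $|X_t|$: its signature encodes, for each layer $t$, the pair $(-\lfloor\log_{1.01}(5^{2t}|X_t|)\rfloor,\,\lfloor\log_{1.01}(5^{2t+1}|Y_t|)\rfloor)$, and shows the resulting vector is lexicographically \emph{decreasing}. The point is that when the superpose-build fails, the swap step has already removed at least $\mu|X_t|/\Delta$ edges from $Y_{t-1}$, and combining this with \Cref{lem:analysis-expand-by-delta} (or \Cref{lem: approx analysis-expand-by-delta}) gives a constant-factor drop in $|Y_{t-1}|$, which moves the $Y$-coordinate of the signature. Your $X$-only signature is blind to exactly this mode of progress. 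For the counting step, the paper shows that the absolute values of the signature coordinates are non-decreasing and each is $O(\ell_{\max}+\log n)$, so the number of possible signatures is bounded by the number of integer partitions of $O(\ell_{\max}^2+\ell_{\max}\log n)$, giving the same final bound; your binomial-coefficient count would work too, once the signature is fixed to include the $Y$-data.
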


\begin{proof}
Given the current alternating forest $\{L_0\ldots,L_\ell\},$ with $L_t=(X_t,Y_t),$ for $0\le t \le \ell$ we define its signature vector $\Psi(L_0, L_1, \ldots, L_\ell) := (\psi_0, \psi_1, \ldots, \psi_\ell, \infty)$, where 
\[
\psi_t = \left( -\left\lfloor \log_{1.01}(5^{2t} |X_t|)\right\rfloor, \left\lfloor \log_{1.01}(5^{2t+1} |Y_t|)\right\rfloor \right),
\]
where we define $\log 0 = 0$.

\noindent We consider two cases based on whether a collapse operation occurred during this iteration.
\begin{enumerate}
    \item If no collapse operation occurred, then in this iteration the algorithm only performed the build-layer (or approximate build-layer in case of \textsc{FasterHypergraphMatching}) operation, where a new layer is built. Therefore, $\psi_{\ell+1}$ is inserted at the end of the current signature vector and hence the lexicographic value of the signature decreased.
    \item Let $t$ be the lowest (smallest) layer that was collapsed in this iteration. According to either algorithm, there are two subcases.
    \begin{itemize}
        \item If a successful superpose-build was performed at layer $t-1$, then the size of $X_{t-1}$ increased by at least a factor of $1+\mu$. In this case the value of $\psi_{t-1,0}$ decreased by at least one.
        \item Otherwise, the layer $t-1$ was not modified in the superpose-build phase. In this case, $X_{t-1}$ was not changed during this iteration while $Y_{t-1}$ has changed due to the collapse on $L_t$. By the requirement triggering the collapse, there must be $\mu|X_t|$ immediately addable edges in $L_t$. These cause the removal of at least $\mu|X_t|/\Delta$ edges from $Y_{t-1}$. Combining with \Cref{lem:analysis-expand-by-delta} or \Cref{lem: approx analysis-expand-by-delta}, which give $|X_t|\ge \frac{\Delta}{10}|Y_{t-1}|$, we have that $|Y_{t-1}|$ decreased by a factor of at least 
        \[  1-(\mu/\Delta)\cdot  \Delta/10 = 1-{1}/{100},
        \]
        which means that $\psi_{t-1,1}$ decreased by at least one.
    \end{itemize}
    Hence, $\Phi$ is lexicographically decreasing. The problem now reduces to bounding the number of distinct viable signatures.
\end{enumerate}

Next, we show that the coordinates of the signature vector are non-decreasing in absolute value at the beginning of each iteration of the main loop. For any $L_i$, the signature vector is non decreasing in absolute value since $|Y_i|\ge (1-\frac1{10})|X_i|$ by \Cref{lem:analysis-same-layer-ratio}. Between any two layers, \Cref{lem:analysis-expand-by-delta} or \Cref{lem: approx analysis-expand-by-delta}, ensures that $|X_i|\ge \frac{\Delta}{10}|Y_{\le i-1}|\ge \frac15|Y_{i-1}|$. Thus, the coordinates of any signature we might encounter are non-decreasing in absolute value.

Finally, we want to upper bound the number of distinct signatures. A \emph{partition} of a positive integer $N$ is a way of writing $N$ as the sum of positive integers without regard to order. The signature can be uniquely represented by a partition of an integer of size at most $O(\ell_{\max}^2+\ell_{\max}\log n)$, as we have proved that the coordinates are non-decreasing and each coordinate is at most $O(\ell_{\max}+\log n)$. As the number of partitions of a positive integer $N$ can be bounded by $\exp(O(\sqrt{N}))$,%
we have an upper bound on the number of distinct signatures of
\[
    2^{\sqrt{O(\ell_{\max}^2+\ell_{\max}\log n)}}.
\]
As we argued above, the number of iterations is upper bounded by the number of distinct signatures, thereby completing the proof.
\end{proof}

The final proposition of this section ensures that the maximum number of layers in an alternating forest throughout either of our algorithm is small.

\begin{proposition}
\label{lem: number of layers under Haxell}
    The maximum number of layers in an alternating forest considered by \textsc{HypergraphMatching} or \textsc{FasterHypergraphMatching} is at most $9\log n/\log \Delta$.
\end{proposition}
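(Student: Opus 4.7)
The plan is to establish multiplicative growth of $|Y_{\le i}|$ from one layer to the next, and then combine with the trivial upper bound $|Y_{\le \ell}| \le n$. The two essential ingredients are already in place: \Cref{lem:analysis-same-layer-ratio} gives $|Y_i| \ge (9/10)|X_i|$ (recall $\mu = 1/10$), and \Cref{lem:analysis-expand-by-delta} (respectively \Cref{lem: approx analysis-expand-by-delta} for \textsc{FasterHypergraphMatching}) gives $|X_i| > (\Delta/10)|Y_{\le i-1}|$, both valid for every $1 \le i \le \ell$ at the start of any main-loop iteration.

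Chaining these would yield $|Y_i| > \tfrac{9\Delta}{100}|Y_{\le i-1}|$ and hence
\[
|Y_{\le i}| > \Bigl(1 + \tfrac{9\Delta}{100}\Bigr)|Y_{\le i-1}|.
\]
For the base case I would argue that $|Y_1| \ge 1$: since $Y_0$ is nonempty inside the main loop, the $\varphi$-strong Haxell condition ensures any unmatched $a \in Y_0$ has at least one incident hyperedge, so $|X_1| \ge 1$, and therefore $|Y_1| \ge 1$ by integrality together with $|Y_1| \ge 0.9|X_1|$. Iterating the recurrence gives
\[
|Y_{\le \ell}| > \Bigl(1 + \tfrac{9\Delta}{100}\Bigr)^{\ell - 1}.
\]

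For the matching upper bound, the alternating-forest definition forbids in each later layer's state the $B$-vertices already present in $B(X_{\le i-1}\cup Y_{\le i-1})$, so the $Y_i$'s are pairwise $B$-disjoint subsets of $M$. Since a matching has at most $|A| \le n$ edges, $|Y_{\le \ell}| \le n$, and consequently $\ell - 1 < \log n / \log(1 + 9\Delta/100)$.

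The one technical point that will demand some care—and that I expect to be the only mildly subtle piece of the argument—is the arithmetic check $\log(1 + 9\Delta/100) \ge \tfrac{1}{9}\log\Delta$ for every $\Delta \ge 4$, equivalently $(1 + 9\Delta/100)^9 \ge \Delta$. At the boundary $\Delta = 4$ this reads $1.36^9 \approx 22.6 \ge 4$, and a short calculus check shows that $g(\Delta) := (1 + 9\Delta/100)^9 - \Delta$ satisfies $g'(\Delta) = 0.81\,(1 + 9\Delta/100)^8 - 1 > 0$ at $\Delta = 4$ with $g'$ monotonically increasing, so $g(\Delta) > 0$ throughout $[4,\infty)$ with substantial slack. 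This slack comfortably absorbs the $\ell - 1$ versus $\ell$ discrepancy, delivering the claimed bound $\ell \le 9\log n/\log\Delta$.
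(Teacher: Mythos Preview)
Your proposal is correct and follows essentially the same approach as the paper: chain \Cref{lem:analysis-same-layer-ratio} with \Cref{lem:analysis-expand-by-delta}/\Cref{lem: approx analysis-expand-by-delta} to obtain $|Y_{\le i}|>(1+9\Delta/100)|Y_{\le i-1}|$, then cap via $|Y_{\le \ell}|\le n$ and do the arithmetic. One minor slip: $1.36^9\approx 15.9$, not $22.6$, and $Y_0$ is a vertex set rather than a subset of $M$ (though $|Y_0|+|M|\le |A|$ still gives $|Y_{\le\ell}|\le n$); neither affects the argument.
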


\begin{proof}
    By \Cref{lem:analysis-same-layer-ratio} and \Cref{lem:analysis-expand-by-delta} or \Cref{lem: approx analysis-expand-by-delta}, we have $|Y_t|\ge \frac9{10}|X_t|$ and $|X_t| > \frac{\Delta}{10}|Y_{\le t}|$ at the beginning of each main loop.
    Hence, we have $|Y_t| \ge \frac{9\Delta}{100}|Y_{<t}|$. Since all $Y_t$ are disjoint subsets of the partial matchings, we have $|Y_t|\le n$ and hence $t\le \log_{1+\frac{9\Delta}{100}} n \le 9\log n/\log \Delta$. %
\end{proof}

\subsection{Putting Everything Together}
\label{sec: fast matching algos}

\paragraph{Warm Up: A Polynomial-Time Algorithm.} Our first showcase is a polynomial-time hypergraph perfect matching algorithm under the $\varphi$-strong Haxell condition with $\varphi = 4r^2$. Although this result is not directly used in our disjoint paths algorithm, it serves as a useful warm-up.

\begin{theorem}
\label{thm: poly time hypergraph perfect matching}
    Let $H=(A, B, E)$ be an $r$-bounded bipartite hypergraph with totol volume $p$ that satisfies the $\varphi$-strong Haxell condition with $\varphi \ge 4 r^2$, then there exists an algorithm that computes a perfect matching in time $\poly(p)$.
\end{theorem}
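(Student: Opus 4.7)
The plan is to obtain \Cref{thm: poly time hypergraph perfect matching} as a direct corollary of \Cref{thm: almost linear time hypergraph perfect matching} by setting the parameter $d(n)$ to the absolute constant $4$. Under this choice, the hypothesis $\varphi \ge 4r^2$ becomes exactly $\varphi \ge d(n) r^2$ with $d(n) \ge 4$, which is precisely the condition required. Substituting $d(n)=4$ into the runtime guarantee of \Cref{thm: almost linear time hypergraph perfect matching} yields a deterministic algorithm that computes a perfect matching in time
\[
O\!\left(p^{\,1 + 1/\Omega(\sqrt{\log 4})}\right) \;=\; O\!\left(p^{\,1 + O(1)}\right) \;=\; \poly(p),
\]
since $\sqrt{\log 4}$ is an absolute constant.

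Unfolding this chain for clarity, I would invoke \Cref{prop:greedy half layer} to produce a maximal half layer for any state with degree parameter $\Delta = d(n) = 4$ in time $T^{*} = O(p)$, and then feed this oracle into \Cref{thm: quadratic time hypergraph perfect matching}. The factor $n^{1/\Omega(\sqrt{\log d(n)})}$ appearing in its runtime reduces to $n^{O(1)}$ when $d(n)=4$, and combined with the $O(p)$ oracle cost (using $n \le p$, since isolated $B$-vertices can be discarded as they cannot participate in any matching) we recover the same $\poly(p)$ bound.

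Since every ingredient has already been assembled in \Cref{sec: half layer}, I do not anticipate any genuine obstacle. The only point to verify is that the boundary case $d(n) = 4$ is admitted by the hypothesis $d(n) \ge 4$ of the invoked theorems, which it clearly is. The entire argument is thus a short corollary of the quantitative reduction from hypergraph perfect matching to the half-layer oracle, instantiated with the trivial greedy construction of maximal half layers.
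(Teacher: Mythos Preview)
Your proof is correct and essentially coincides with the paper's argument. The paper also runs the matching algorithm with $\Delta=4$, bounds $\ell_{\max}=O(\log n)$ via \Cref{lem: number of layers under Haxell}, and then uses \Cref{lem: upper bound main loop iterations} to get $\poly(n)$ iterations, together with the $O(p)$ greedy build-layer of \Cref{prop:greedy half layer}; your version simply packages the same ingredients by invoking \Cref{thm: almost linear time hypergraph perfect matching} (equivalently \Cref{thm: quadratic time hypergraph perfect matching} plus \Cref{prop:greedy half layer}) at $d(n)=4$. The only presentational difference is that the paper gives this proof \emph{before} proving \Cref{thm: quadratic time hypergraph perfect matching}, treating it as an explicit warm-up that unfolds the iteration bound inline, whereas you cite the already-packaged statements; mathematically the two routes are identical.
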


\begin{proof}
We run \textsc{HypergraphMatching} with parameter $\Delta=4$.
By \Cref{lem: number of layers under Haxell}, the maximum number of layers $\ell_{\max}\le 5\log n$. By \Cref{lem: upper bound main loop iterations} this implies the number of iterations of the main loop is at most $2^{\sqrt{O(\ell_{\max}^2+\ell_{\max}\log n)}} = \poly(n)$. 
Observe that, each iteration of the \textbf{while} loop in \textsc{CollapseForest} decreases the number of layers by one. Since the number of layers only grows (by one) upon a run of the main loop, we conclude the number of times we perform the steps inside the while loop of \textsc{CollapseForest} is equal to the number of times we execute the main loop, so is also polynomial.
A simple greedy algorithm (see \Cref{prop:greedy half layer}) allows us to run \textsc{BuildLayer} in time $\poly(p)$. Thus, the time performing each line within the loops is also polynomial.
Therefore, we have a polynomial time algorithm for hypergraph perfect matching under $\varphi$-strong Haxell condition.
\end{proof}

\paragraph{Algorithms with Subpolynomial Iterations.}

Finally, we  prove \Cref{thm: quadratic time hypergraph perfect matching,lem: fast perfect matching} based on \textsc{HypergraphMatching} and \textsc{FasterHypergraphMatching}, respectively. We restate them for convenience.

\quadraticLemma*

\begin{proof}[Proof of \Cref{thm: quadratic time hypergraph perfect matching}]
We run the algorithm \textsc{HypergraphMatching} with $\Delta = d(n)$. By \Cref{lem: number of layers under Haxell} the maximum number of layers $\ell_{\max} \le 9\log n/\log d(n)$. Combined with \Cref{lem: upper bound main loop iterations}, this upper bounds the number of iterations of the main loop of \textsc{HypergraphMatching} by 
\[
    2^{\sqrt{O(\ell_{\max}^2+\ell_{\max}\log n)}} \le n^{1/\Omega(\sqrt{\log d(n)})}.
\]    
Since each iteration of the loop in the \Call{CollapseForest}{} procedure decreases $\ell$ by one, the number of swap and superposed-build operations is at most the number of main iterations. The time to perform each build-layer or superpose-build operation is at most $T^*$, and therefore the whole algorithm completes in $O\left(T^* n^{1/\Omega(\sqrt{\log d(n)})}\right)$ time.
\end{proof}

\linearLemma*

\begin{proof}[Proof of \Cref{lem: fast perfect matching}]
We run the algorithm \textsc{FasterHypergraphMatching} with $\Delta = d(n)$.  %
By \Cref{lem: number of layers under Haxell}, the maximum number of layers
$\ell_{\max} \le 9\log n/\log \Delta$. The rest of the proof is the same as in the proof of \Cref{thm: quadratic time hypergraph perfect matching}, except for replacing the running time $T^*$ of build layer operation with $\hat T$ of the $(r',\alpha)$-approximate build layer procedure.
\end{proof}

We note that in both above proofs we are tacitly using the inequalities $\hat T, T^* \ge \Omega (n)$ which guarantee that running build layer and approximate build layer procedures is the most expensive part of the iteration.

\bibliographystyle{alpha}
\bibliography{main}

\newcommand{\etalchar}[1]{$^{#1}$}
\begin{thebibliography}{CJMM25}

\bibitem[AALG18]{alev2018graph}
Vedat~Levi Alev, Nima Anari, Lap~Chi Lau, and Shayan~Oveis Gharan.
\newblock Graph clustering using effective resistance.
\newblock In {\em 9th Innovations in Theoretical Computer Science Conference (ITCS 2018)}, volume~94, page~41. Schloss Dagstuhl--Leibniz-Zentrum fuer Informatik, 2018.

\bibitem[AC07]{alon2007online}
Noga Alon and Michael Capalbo.
\newblock Finding disjoint paths in expanders deterministically and online.
\newblock {\em Information Processing Letters}, 101(4):164--169, 2007.

\bibitem[AFS12]{AsadpourFeigeSaberi2012}
Arash Asadpour, Uriel Feige, and Amin Saberi.
\newblock Santa claus meets hypergraph matchings.
\newblock {\em ACM Transactions on Algorithms}, 8(3):24:1--24:9, 2012.

\bibitem[AKS17]{annamalai2017combinatorial}
Chidambaram Annamalai, Christos Kalaitzis, and Ola Svensson.
\newblock Combinatorial algorithm for restricted max-min fair allocation.
\newblock {\em ACM Transactions on Algorithms (TALG)}, 13(3):1--28, 2017.

\bibitem[Ann18]{Annamalai18}
Chidambaram Annamalai.
\newblock Finding perfect matchings in bipartite hypergraphs.
\newblock {\em Comb.}, 38(6):1285--1307, 2018.

\bibitem[AR98]{AumannRabani1998}
Yonatan Aumann and Yuval Rabani.
\newblock An $o(\log k)$ approximate min-cut max-flow theorem and approximation algorithm.
\newblock {\em SIAM Journal on Computing}, 27(1):291--301, 1998.

\bibitem[BFU94]{BroderFriezeUpfal94}
A.~Z. Broder, A.~M. Frieze, and E.~Upfal.
\newblock Existence and construction of edge-disjoint paths on expander graphs.
\newblock {\em SIAM Journal on Computing}, 23(5):976--989, 1994.

\bibitem[BFU99]{BroderFriezeUpfal99}
A.~Z. Broder, A.~M. Frieze, and E.~Upfal.
\newblock Existence and construction of edge-low-congestion paths on expander graphs.
\newblock {\em Random Structures \& Algorithms}, 14(1):87--109, 1999.

\bibitem[BGR20]{BamasGargRohwedder2020}
Etienne Bamas, Paritosh Garg, and Lars Rohwedder.
\newblock The combinatorial santa claus problem or: How to find good matchings in non‑uniform hypergraphs.
\newblock {\em arXiv preprint}, abs/2007.09116, 2020.

\bibitem[BM23]{BM23}
Matija Bucic and Richard Montgomery.
\newblock Towards the erd{\H{o}}s-gallai cycle decomposition conjecture.
\newblock In Barna Saha and Rocco~A. Servedio, editors, {\em Proceedings of the 55th Annual {ACM} Symposium on Theory of Computing, {STOC} 2023, Orlando, FL, USA, June 20-23, 2023}, pages 839--852. {ACM}, 2023.

\bibitem[CCGK07]{ChakrabartiGuptaKumar2007}
Amit Chakrabarti, Chandra Chekuri, Anupam Gupta, and Amit Kumar.
\newblock Approximation algorithms for the unsplittable flow problem.
\newblock {\em Algorithmica}, 47(1):53--78, 2007.

\bibitem[CGL{\etalchar{+}}20]{chuzhoy2020deterministic}
Julia Chuzhoy, Yu~Gao, Jason Li, Danupon Nanongkai, Richard Peng, and Thatchaphol Saranurak.
\newblock A deterministic algorithm for balanced cut with applications to dynamic connectivity, flows, and beyond.
\newblock In {\em 2020 IEEE 61st Annual Symposium on Foundations of Computer Science (FOCS)}, pages 1158--1167. IEEE, 2020.

\bibitem[CHS24]{chang2024deterministic}
Yi-Jun Chang, Shang-En Huang, and Hsin-Hao Su.
\newblock Deterministic expander routing: Faster and more versatile.
\newblock In {\em Proceedings of the 43rd ACM Symposium on Principles of Distributed Computing}, pages 194--204, 2024.

\bibitem[CJMM25]{chakraborti2025edge}
Debsoumya Chakraborti, Oliver Janzer, Abhishek Methuku, and Richard Montgomery.
\newblock Edge-disjoint cycles with the same vertex set.
\newblock {\em Advances in Mathematics}, 469:110228, 2025.

\bibitem[CKN18]{ChuzhoyKN18}
Julia Chuzhoy, David H.~K. Kim, and Rachit Nimavat.
\newblock Almost polynomial hardness of node-disjoint paths in grids.
\newblock In Ilias Diakonikolas, David Kempe, and Monika Henzinger, editors, {\em Proceedings of the 50th Annual {ACM} {SIGACT} Symposium on Theory of Computing, {STOC} 2018, Los Angeles, CA, USA, June 25-29, 2018}, pages 1220--1233. {ACM}, 2018.

\bibitem[CS20]{ChangS20}
Yi{-}Jun Chang and Thatchaphol Saranurak.
\newblock Deterministic distributed expander decomposition and routing with applications in distributed derandomization.
\newblock In Sandy Irani, editor, {\em 61st {IEEE} Annual Symposium on Foundations of Computer Science, {FOCS} 2020, Durham, NC, USA, November 16-19, 2020}, pages 377--388. {IEEE}, 2020.

\bibitem[Din06]{Dinitz06}
Yefim Dinitz.
\newblock Dinitz' algorithm: The original version and even's version.
\newblock In Oded Goldreich, Arnold~L. Rosenberg, and Alan~L. Selman, editors, {\em Theoretical Computer Science, Essays in Memory of Shimon Even}, volume 3895 of {\em Lecture Notes in Computer Science}, pages 218--240. Springer, 2006.

\bibitem[DN24]{draganic2023finding}
Nemanja Draganic and Rajko Nenadov.
\newblock Edge-disjoint paths in expanders: online with removals.
\newblock In David~P. Woodruff, editor, {\em Proceedings of the 2024 {ACM-SIAM} Symposium on Discrete Algorithms, {SODA} 2024, Alexandria, VA, USA, January 7-10, 2024}, pages 4554--4563. {SIAM}, 2024.

\bibitem[DRZ20]{DaviesRothvossZhang2020}
Sami Davies, Thomas Rothvoss, and Yihao Zhang.
\newblock A tale of santa claus, hypergraphs and matroids.
\newblock In {\em Proceedings of the ACM–SIAM Symposium on Discrete Algorithms (SODA)}, pages 2748--2757, 2020.

\bibitem[FM99]{frieze1999splitting}
Alan~M Frieze and Michael Molloy.
\newblock Splitting an expander graph.
\newblock {\em Journal of Algorithms}, 33(1):166--172, 1999.

\bibitem[Fri00]{frieze2000disjoint}
Alan Frieze.
\newblock Disjoint paths in expander graphs.
\newblock {\em SIAM Journal on Computing}, 30(6):1790--1801, 2000.

\bibitem[GKS17]{GhaffariKS17}
Mohsen Ghaffari, Fabian Kuhn, and Hsin{-}Hao Su.
\newblock Distributed {MST} and routing in almost mixing time.
\newblock In Elad~Michael Schiller and Alexander~A. Schwarzmann, editors, {\em Proceedings of the {ACM} Symposium on Principles of Distributed Computing, {PODC} 2017, Washington, DC, USA, July 25-27, 2017}, pages 131--140. {ACM}, 2017.

\bibitem[GL18]{GhaffariL18}
Mohsen Ghaffari and Jason Li.
\newblock New distributed algorithms in almost mixing time via transformations from parallel algorithms.
\newblock In Ulrich Schmid and Josef Widder, editors, {\em 32nd International Symposium on Distributed Computing, {DISC} 2018, New Orleans, LA, USA, October 15-19, 2018}, volume 121 of {\em LIPIcs}, pages 31:1--31:16. Schloss Dagstuhl - Leibniz-Zentrum f{\"{u}}r Informatik, 2018.

\bibitem[Hal35]{Hall1935}
Philip Hall.
\newblock On representatives of subsets.
\newblock {\em Journal of the London Mathematical Society}, 10(1):26--30, 1935.

\bibitem[Hax95]{Haxell1995}
Picolle Haxell.
\newblock A condition for matchability in hypergraphs.
\newblock {\em Graphs and Combinatorics}, 11(3):245–248, 1995.

\bibitem[HHG25]{HaeuplerH025}
Bernhard Haeupler, Jonas H{\"{u}}botter, and Mohsen Ghaffari.
\newblock A cut-matching game for constant-hop expanders.
\newblock In Yossi Azar and Debmalya Panigrahi, editors, {\em Proceedings of the 2025 Annual {ACM-SIAM} Symposium on Discrete Algorithms, {SODA} 2025, New Orleans, LA, USA, January 12-15, 2025}, pages 1651--1678. {SIAM}, 2025.

\bibitem[HHL{\etalchar{+}}24]{haeupler2024low}
Bernhard Haeupler, D~Ellis Hershkowitz, Jason Li, Antti Roeyskoe, and Thatchaphol Saranurak.
\newblock Low-step multi-commodity flow emulators.
\newblock In {\em Proceedings of the 56th Annual ACM Symposium on Theory of Computing}, pages 71--82, 2024.

\bibitem[HHS23]{HaeuplerHS23}
Bernhard Haeupler, D.~Ellis Hershkowitz, and Thatchaphol Saranurak.
\newblock Maximum length-constrained flows and disjoint paths: Distributed, deterministic, and fast.
\newblock In Barna Saha and Rocco~A. Servedio, editors, {\em Proceedings of the 55th Annual {ACM} Symposium on Theory of Computing, {STOC} 2023, Orlando, FL, USA, June 20-23, 2023}, pages 1371--1383. {ACM}, 2023.

\bibitem[Kar72]{Karp}
Richard~M. Karp.
\newblock Reducibility among combinatorial problems.
\newblock In {\em Proceedings of a symposium on the Complexity of Computer Computations, held March 20-22, 1972, at the {IBM} Thomas J. Watson Research Center, Yorktown Heights, New York, {USA}}, The {IBM} Research Symposia Series, pages 85--103. Plenum Press, New York, 1972.

\bibitem[KPS24]{korhonen2024minor}
Tuukka Korhonen, Micha{\l} Pilipczuk, and Giannos Stamoulis.
\newblock Minor containment and disjoint paths in almost-linear time.
\newblock In {\em 2024 IEEE 65th Annual Symposium on Foundations of Computer Science (FOCS)}, pages 53--61. IEEE, 2024.

\bibitem[KR06]{KleinbergRubinfeld2006}
Robert Kleinberg and Ronitt Rubinfeld.
\newblock Short paths in expander graphs.
\newblock In {\em Proceedings of the 44th Annual IEEE Symposium on Foundations of Computer Science (FOCS)}, pages 52--61. IEEE, 2006.

\bibitem[KS02]{KolmanScheideler2002}
Petr Kolman and Christian Scheideler.
\newblock Improved bounds for the unsplittable flow problem.
\newblock In {\em Proceedings of the 13th Annual {ACM-SIAM} Symposium on Discrete Algorithms ({SODA})}, pages 184--193. ACM/SIAM, 2002.

\bibitem[Let24]{letzter2024separating}
Shoham Letzter.
\newblock Separating path systems of almost linear size.
\newblock {\em Transactions of the American Mathematical Society}, 377(08):5583--5615, 2024.

\bibitem[LLRS00]{LeightonLuRaoSrinivasan00}
Tom Leighton, Chuan Lu, Satish Rao, and Aravind Srinivasan.
\newblock Improved multicommodity flows and routing in undirected graphs.
\newblock {\em Journal of the ACM}, 2000.

\bibitem[LMS25]{letzter2025nearly}
Shoham Letzter, Abhishek Methuku, and Benny Sudakov.
\newblock Nearly hamilton cycles in sublinear expanders, and applications.
\newblock {\em arXiv preprint}, abs/2503.07147, 2025.

\bibitem[LR99]{LeightonRao99}
Tom Leighton and Satish Rao.
\newblock Multicommodity max‐flow min‐cut theorems and their use in designing approximation algorithms.
\newblock In {\em J. ACM}, 1999.

\bibitem[Mot89]{motwani1989expanding}
Rajeev Motwani.
\newblock Expanding graphs and the average-case analysis of algorithms for matchings and related problems.
\newblock In {\em Proceedings of the twenty-first annual ACM symposium on Theory of computing}, pages 550--561, 1989.

\bibitem[PU89]{peleg1989local}
David Peleg and Eli Upfal.
\newblock Constructng disjoint paths on expander graphs.
\newblock {\em Comb.}, 9(3):289--313, 1989.

\bibitem[RS95]{RS95}
Neil Robertson and Paul~D. Seymour.
\newblock Graph minors. xiii. the disjoint paths problem.
\newblock {\em J. Comb. Theory Ser. B}, 63(1):65--110, 1995.

\bibitem[SFW80]{Fortune80}
John E.~Hopcroft Steven~Fortune and James Wyllie.
\newblock The directed subgraph homeomorphism problem.
\newblock {\em Theoretical Computer Science}, 10:111--121, 1980.

\bibitem[SW19]{SW19}
Thatchaphol Saranurak and Di~Wang.
\newblock Expander decomposition and pruning: Faster, stronger, and simpler.
\newblock In Timothy~M. Chan, editor, {\em Proceedings of the Thirtieth Annual {ACM-SIAM} Symposium on Discrete Algorithms, {SODA} 2019, San Diego, California, USA, January 6-9, 2019}, pages 2616--2635. {SIAM}, 2019.

\bibitem[WN17]{wulff2017fully}
Christian Wulff-Nilsen.
\newblock Fully-dynamic minimum spanning forest with improved worst-case update time.
\newblock In {\em Proceedings of the 49th Annual ACM SIGACT Symposium on Theory of Computing}, pages 1130--1143, 2017.

\end{thebibliography}

\appendix 

\section{Omitted Proofs}

\subsection{Proof of \texorpdfstring{\Cref{cor: splitting expanders}}{}}
\label{app: splitting expanders}

We prove the first part of the corollary, and the second part follows exactly the same argument. Let $G_X$ be an $n$-vertex graph with constant conductance and maximum degree at most $9$. Such a graph can be constructed in $O(n)$ time, see \cite[Theorem 2.4]{chuzhoy2020deterministic}. For each edge $(u,v)\in E(G_X)$, we add $k$ copies of $(u,v)$ into the set of demand pairs. As $G_X$ has maximum degree at most $9$, each vertex appears in at most $9k$ pairs. By assumption $\phi^3\delta\ge (73\log n)^3 k\ge (35\log n)^3 \cdot 9k$, so by \Cref{thm: routing conductance} Part I we can compute a set of disjoint paths for the set of demand pairs in time $mn^{1+o(1)}\min\{k, \phi^{-1}\}$.
Let $E_i$ be the union of the disjoint paths of the $i$-th copy of demands for all $(u,v)\in E(G_X)$. We now show that the conductance of each $G_i = (V,E_i)$ is at least $\Omega(\phi^2/\log n)$.

Consider a nonempty subset $S\subsetneq V$ and any $i\in [k]$. Let $\widehat \vol$ be the number of disjoint paths that intersect $S$ but don't have an endpoint in $S$. Let $r = 18 \log (n)/\phi$ be the maximum length of the disjoint paths. We have
\[
    \vol_{G_i}(S)\le (\vol_{G_X}(S)+\widehat\vol)\cdot r
\]
and
\[
    |\partial_{G_i}(S)|\ge |\partial_{G_X}(S)| + \widehat\vol
\]
as the paths of the $(u,v)$-pairs that have $(u,v)\in \partial_{G_X}(S)$ must cross $X,$ and the paths which contribute to $\widehat\vol$ also cross $X$ by definition. Therefore, assuming $\vol_{G_X}(S) \le \vol_{G_X}(V \setminus S)$ we get 
\[
    \frac{|\partial_{G_i}(S)|}{\vol_{G_i}(S)}\ge \frac{|\partial_{G_X}(S)|}{r\cdot \vol_{G_X}(S)} \ge \Omega(\phi/\log n).
\]
On the other hand, if $\vol_{G_X}(S) > \vol_{G_X}(V \setminus S)$ we get
$$\frac{|\partial_{G_i}(S)|}{\vol_{G_i}(V \setminus S)}=\frac{|\partial_{G_i}(V \setminus S)|}{\vol_{G_i}(V \setminus S)}\ge \frac{|\partial_{G_X}(V \setminus S)|}{r\cdot \vol_{G_X}(V \setminus S)} \ge \Omega(\phi/\log n).$$

\subsection{Proof of \texorpdfstring{\Cref{lem:implement half layer Dinitz}}{}}\label{sec: proof dinitz}

    To achieve an $O(mnr)$ running time, the algorithm iterates through every vertex on $G$, and attempts to fulfill all the demands with the same source at a time.  
    Fix a source vertex $s$. 
    Let $\mathcal{T}$ be the set of destinations where $(s, t)$ is a demand pair for all $t\in \mathcal{T}$. The algorithm then adds a super-terminal $t'$, and adds directed edges $(t, t')$ with capacity $\Delta$.

    Intuitively, we now let the algorithm find one \emph{blocking set of $s$-$t'$ shortest paths} at a time, for each shortest distances up to $r$. This is similar to a phase of Dinitz' blocking flow procedure~\cite{Dinitz06}, without any flow augmentation.
    Formally speaking, the algorithm first runs a BFS from $s$ on $G'$, obtaining the distances from $s$ to each vertex on $G'$. All vertices at distance $i$ will be said to be at \emph{level} $i$. Then, the algorithm builds a \emph{level graph} $G_{\rm L}$, which is a directed graph keeping only the edges that goes from level $i$ to level $i+1$ for all $i$. Finally, the algorithm runs a modified DFS on $G_{\rm L}$, obtaining a blocking set of paths from $s$ to $t'$. The rules for the modified DFS are as follows:
    \begin{itemize}
        \item Whenever a new path from $s$ to $t'$ is found with the last edge being $(t, t')$, the algorithm deletes the path from $s$ to $t$ and adds the corresponding hyperedge to $Z'$. The capacity of the edge $(t, t')$ is decremented by $1$. The algorithm then \emph{restarts} the DFS from $s$.
        \item Whenever an edge $(u, v)$ is backtracked because there is no path from $v$ to $t'$ on $G_{\rm L}$. In this case, the edge $(u, v)$ is removed and the search continues from $u$'s other outgoing edge.
    \end{itemize}
    Observe that any blocking set of $s$-$t'$ shortest paths intersect with any other $s$-$t'$ shortest path on $G'$, after repeating this procedure for $r$ times there is no short path of length $r$ fulfilling the demands anymore.
    By iterating through all source vertices, the total running time for this algorithm is then $O(n\times rm) = O(mnr)$ as desired.

\end{document}